\setlist[enumerate]{leftmargin=.5in}
\setlist[itemize]{leftmargin=.5in}
\crefname{subsection}{section}{sections}
\crefname{hypothesis}{Hypothesis}{Hypotheses}
\title{Bounded-Confidence Models of Multidimensional Opinions with Topic-Weighted Discordance
\thanks{Submitted to the editors on January 31, 2025.
\funding{The first and second authors acknowledge funding from National Science Foundation (NSF) grant 1829071. The second author acknowledges funding from NSF grant number 2136090 and NSF grant number 1922952 through the Algorithms for Threat Detection (ATD) program. A portion of this work was performed under the auspices of the U.S. Department of Energy by Lawrence Livermore National Laboratory under Contract DE-AC52-07NA27344 (LLNL-JRNL-2002439).}}
}
\author{
Grace Jingying Li\thanks{Lawrence Livermore National Laboratory, Livermore, CA, USA.}
\and Jiajie Luo\thanks{Knowledge Lab, University of Chicago, Chicago, IL, USA.}
\and Weiqi Chu\thanks{Department of Mathematics and Statistics, University of Massachusetts Amherst, Amherst, MA, USA.}
}
\newcommand{\N}{\mathcal{N}}
\newcommand{\R}{\mathbb{R}}
\newcommand{\bu}{\bm{u}}
\newcommand{\bx}{\bm{x}}
\newcommand{\bX}{\bm{X}}
\newcommand{\by}{\bm{y}}
\newcommand{\bR}{\mathbb{R}}
\renewcommand{\vec}[1]{\bm{#1}}
\def\r{1.5}
\def\x{4}
\begin{document}

\maketitle

\begin{abstract}
People's opinions on a wide range of topics often evolve over time through their interactions with others. Models of opinion dynamics primarily focus on one-dimensional opinions, which represent opinions on one topic. However, opinions on various topics are rarely isolated; instead, they can be interdependent and correlated. In a bounded-confidence model (BCM) of opinion dynamics, agents are receptive to each other only if their opinions are sufficiently similar. We extend classical agent-based BCMs --- namely, the Hegselmann--Krause BCM, which has synchronous interactions, and the Deffuant--Weisbuch BCM, which has asynchronous interactions --- to a multidimensional setting, in which the opinions are multidimensional vectors representing opinions of different topics and opinions on different topics are interdependent. To measure opinion differences between agents, we introduce topic-weighted discordance functions that account for opinion differences in all topics. We define regions of receptiveness for our models, and we use them to characterize the steady-state opinion clusters and provide an analytical approach to compute these regions. In addition, we numerically simulate our models on various networks with initial opinions drawn from a variety of distributions. When initial opinions are correlated across different topics, our topic-weighted BCMs yield significantly different results in both transient and steady states compared to baseline models, where the dynamics of each opinion topic are independent.
\end{abstract}

\begin{keywords}
bounded-confidence models, opinion dynamics, multidimensional opinions, non-Euclidean opinion distance, social networks
\end{keywords}

\begin{MSCcodes}
91D30, 05C82, 37H05
\end{MSCcodes}

\section{Introduction}\label{sec:intro}

People exchange and update their opinions on various topics through social interactions. The study of opinion dynamics examines how opinions and viewpoints evolve and spread among individuals based on such interactions \cite{noorazar2020_review}. 
In an agent-based model (ABM) of opinion dynamics, each agent (which may represent an individual, a social media account, or other social entity) holds time-dependent opinions that change based on interactions with other agents. 
One can restrict interactions between agents by incorporating network structures in ABMs.
In such models, the nodes of a network represent agents and only adjacent nodes (i.e., nodes with an edge between them) can interact and, in turn, influence each other's opinions.
The framework of ABMs provides a quantitative method for analyzing opinion evolution as a dynamical system on networks \cite{noorazar2020_review,porter2016} and offers valuable insights into decision-making processes \cite{urena2019review}, opinion formation \cite{lorenz2007continuous}, and the dissemination of ideas \cite{jia2015opinion,friedkin1990social}.

People tend to be receptive to and influenced by opinions similar to their own 
opinions \cite{selective_exposure_def}.
This psychological phenomenon is incorporated into bounded-confidence models (BCMs) of opinion dynamics \cite{noorazar2020_review, HK_model, deffuant2000mixing}. 
In agent-based BCMs, agents have continuous-valued opinions and are receptive to
each other if and only if their opinion differences are less than a threshold called a ``confidence bound''.
Two of the most well-known BCMs are the Hegselmann--Krause (HK) model \cite{krause2000, HK_model} and the Deffuant--Weisbuch (DW) model \cite{deffuant2000mixing}, both of which update in discrete time.
The HK model updates synchronously; at each time step, all agents update their opinions based on the opinions of their neighbors in a network.
By contrast, the DW model updates asynchronously; at each time step, only one pair of agents (represented as adjacent nodes in an underlying network) interacts and potentially updates their opinions based on their opinion differences.

BCMs have predominantly been studied with one-dimensional (1D) opinions. 
However, researchers have observed that opinions on multiple related topics can be interdependent and influence one another. 
Political scientist Philip Converse \cite{converse1964} used the term “belief system” to describe attitudes interconnected by constraints or interdependencies. 
Empirical studies have shown correlations between political positions \cite{benoit2006_book, benoit2012}, indicating that opinions on different topics may not be independent.
Researchers have extended BCMs to multidimensional opinions for both the DW model \cite{lorenz2006_multi, li2017_ja_vector} and the HK model \cite{fortunato2005_vector, lorenz2006_multi, bhattacharyya2013, etesami2013_hk_convergence, hegselmann2019_hk_d_dimension_convergence, brooks2020}, measuring the distance between two opinion vectors using the Euclidean distance.\footnote{For two $K$-dimensional opinion vectors $\bm{x}$ and $\bm{y}$, their Euclidean distance is defined by $\|\bm{x} - \bm{y}\|_2 = \sqrt{\sum_{i = 1}^K (x^i - y^i)^2}$, where $x^i$ and $y^i$ are the $i$th components of $\bm{x}$ and $\bm{y}$, respectively.} 
While researchers often make this choice for mathematical convenience and tractability, there is little empirical evidence supporting its appropriateness for modeling 
distances between opinions \cite{benoit2006_book}.

Researchers have also studied BCMs and other opinion models with multidimensional opinions using non-Euclidean distance metrics. 
For instance, some researchers \cite{fortunato2005_vector,lorenz_fostering_2008} studied multidimensional HK and DW models by measuring the opinion distances using the $L^1$ or $L^\infty$ norms.
Schweighofer et al. \cite{schweighofer2020} introduced multidimensional BCMs based on angular distance. 
In their models, agents' opinion vectors rotate towards each other when they compromise. 
Non-Euclidean distances have also been applied to a multidimensional variant \cite{parsegov2015} of the Friedkin--Johnson (FJ) model \cite{friedkin1990social}, which is a linear opinion dynamics model where agents consistently incorporate their initial opinions into each update.
Gubanov et al. \cite{gubanov_multidimensional_2021} proposed a model in which agents update their opinions as a weighted average of their initial opinions, their current opinions, and their neighbors' current opinions. The weights are determined by non-Euclidean opinion distances, with smaller distances resulting in larger weights.

Some researchers have incorporated topic correlations or coupling parameters into multidimensional opinion models to represent interdependencies between topics. 
Parsegov et al. \cite{parsegov2015} extended the FJ model by introducing a correlation matrix $C$, where each entry $C_{pq}$ quantifies the influence of topic $q$ on topic $p$. 
Ye et al. \cite{ye2020} later studied a continuous-time version of the FJ model inspired by the model developed by Parsegov et al.
Baumann et al. \cite{baumann2021} explored a model featuring a topic-overlapping matrix $\Phi$, where each entry $\Phi_{pq}$ encodes the angle between topics $p$ and $q$ in a latent space, which measures the effect of topic $p$ on topic $q$.
Chen et al. \cite{chen2021_multi_agent} formulated
a model featuring time-dependent topic correlations.
In their model, agents update their opinions based on these correlations, external interventions, and a modified Jager--Amblard rule \cite{jager2005}, which applies either an attraction or repulsion mechanism depending on the Euclidean distance between opinions.
These studies have largely used topic correlations in opinion models to determine how much opinions change and the extent to which agents influence each other. 
By contrast, we incorporate topic interdependencies into the opinion distance functions in our BCMs that determine whether or not agents are receptive to each other; topic interdependencies do not affect how much the opinions of receptive agents change.
We are able to use topic interdependencies to affect receptiveness because BCMs have a confidence bound for determining receptiveness, which yields different dynamics than the aforementioned opinion models with topic correlation.
To our knowledge, there has been no prior work incorporating topic dependency into a continuous opinion distance function to determine receptiveness in BCMs.

In our BCMs, when two agents interact on a topic $k$, their opinion differences on other topics affect their receptiveness to each other on topic $k$. Our models seek to incorporate the sociological idea that individuals that are similar to each other are more likely to influence each others' opinions \cite{Axelrod1997}.
A prominent, pioneering model that incorporates this notion on multiple topics is the Axelrod model \cite{Axelrod1997}, in which an agent adopts another agent's opinion on a given topic with probability equal to the proportion of topics on which they agree.
Additionally, we are aware of one BCM that uses topic interdependency to determine receptiveness.
Tan and Cheong \cite{Tan2018} formulated and studied a multidimensional BCM based on the HK model in which agents' receptiveness on one topic depends on the opinion differences on all topics. 
However, the way in which our BCMs determine receptiveness fundamentally differs from that of Tan and Cheong (see \cref{footnote:tan_and_cheong_model}).

The standard HK and DW models were initially simulated using uniformly random initial opinions \cite{HK_model, deffuant2000mixing}, and researchers simulating 1D BCMs still commonly use uniformly random initial opinions \cite{lorenz2007continuous,shang2013_dw_critical,bernardo2024}.
However, several studies of BCMs have examined non-uniform distributions of initial opinions.
Such approaches have been used to extend both the DW model \cite{jacobmeier2006, shang2013_dw_critical, carro2013, sobkowicz2015, hickok2022bounded, chu2023density} and the HK model \cite{kou2012_hk, yang2014_hk_consensus, blondel2010}. 
The distribution of initial opinions plays a critical role in shaping opinion evolution, influencing both transient behaviors and steady states.
For example, several researchers have provided sufficient conditions on an initial opinion distribution that ensure the models converge to a consensus --- 
in which every agent converges to the same opinion --- in various contexts, including a mean-field BCM~\cite{gomez-serrano2012}, a BCM on hypergraphs adapted from the DW model~\cite{hickok2022bounded}, and both discrete- and continuous-time BCMs adapted from the HK model~\cite{blondel2010, yang2014_hk_consensus}.

In this paper, we propose multidimensional BCMs --- that are based on and extend the standard HK and DW models on networks --- that incorporate non-Euclidean distances and topic dependencies to measure opinion dissimilarity.
We introduce topic-weighted discordance functions $d_k$ for each topic $k$, which emphasize the opinion difference for topic $k$ while also considering the opinion differences in other topics. 
In our BCMs, agents compromise their opinions on a topic $k$ if their discordance, given by $d_k$, is less than the confidence bound.
Our topic-weighted discordance functions highlight the idea that more (respectively, less) receptivity in one topic can lead to more (respectively, less) receptivity in another topic.
To examine the effects of topic correlation on our BCMs, in our numerical simulations, we consider initial opinions in which the opinion values of different topics are correlated or independently sampled. 
Our numerical results (see \cref{sec:sim_details} and \cref{sec:numerical_results}) demonstrate that the choice of initial opinion distribution also significantly influences the evolution and formation of opinions in our multidimensional models.

This paper proceeds as follows. In \cref{sec:modeldescription}, we review the standard HK and DW models, with synchronous and asynchronous update rules, respectively, and introduce our multidimensional BCMs with topic-weighted discordance functions. In \cref{sec:theoretical}, we analyze the mathematical properties of our topic-weighted BCMs, including their convergence guarantees, formation of opinion clusters, and regions of receptiveness.
We provide the implementation details of our numerical simulations in \cref{sec:sim_details} and present our numerical results in \cref{sec:numerical_results}.
Finally, we provide concluding remarks in \cref{sec:conclusions}.
Our code and plots are available at \url{https://gitlab.com/graceli1/topic-weighted-BCMs/}.

\section{Model descriptions}
\label{sec:modeldescription}
In this section, we review the standard Hegselman--Krause (HK) \cite{HK_model} and Deffuant--Weisbuch (DW) \cite{deffuant2000mixing} models and introduce our BCMs with topic-weighted opinion discordance. 
We study our BCMs on networks in which each node represents an agent and each edge represents a social connection between the incident nodes.
Let $G = (V,E)$, where $V$ is the set of nodes and $E$ is the set of edges, denote a time-independent, unweighted, and undirected graph (i.e., network) without self-edges or multi-edges. We denote $M=|V|$ as the total number of nodes.

\subsection{Standard BCMs with a single opinion topic}\label{sec:standard_models}

In the standard HK and DW models, each node $i$ has a time-dependent opinion $x_i(t)$ with scalar values in the closed interval $[0,1]$. 
Both models have a \emph{confidence bound} $c$ that controls the ``open-mindedness'' of nodes to different opinions.
We say that nodes $i$ and $j$ are 
\emph{receptive} to each other at time $t$ if their opinion difference is less than the confidence bound $c$ (i.e., if $|x_i(t) - x_j(t)| < c$).

\subsubsection{The standard Hegselmann--Krause (HK) model}\label{sec:1D_HK}

The standard HK model \cite{krause2000, HK_model} is a discrete-time synchronous BCM on 
a time-independent, unweighted, and undirected graph $G = (V,E)$
with no self-edges or multi-edges.\footnote{The HK model was examined initially on a fully-mixed population \cite{HK_model}, but we use its extension to networks (see e.g., \cite{fortunato2005, parasnis_hegselmann-krause_2018, schawe2021_HK_networks})
as our ``standard HK model''.}
The set of
nodes to which node $i$ is receptive at time $t$ is\footnote{In \cite{krause2000, HK_model}, 
$\N_i(t) = \{ i \} \cup \left\{j: \ |x_i(t) - x_j(t)| \leq c { \text{ and }(i,j)\in E} \right\}$. 
We use a strict inequality to be consistent with the strict inequality in the DW model.}
\begin{equation}\label{eq: HK_receptive_set_baseline}
  \N_i(t) = \{i\} \cup \left\{j : (i,j)\in E 
  \text{~and~} |x_i(t)-x_j(t)| < c\right\}\,.
\end{equation}
In other words, $\N_i(t)$ consists of node $i$ itself along with all adjacent nodes to which $i$ is receptive.
At each time $t$, we update the opinion of each node $i \in \{1,2, \ldots, M\}$ by calculating
\begin{equation}\label{eq:HK_baseline_update_rule}
	x_{i}(t + 1) = |\N_i(t)|^{-1} \sum_{j \in \N_i(t)} x_{j}(t)\,.
\end{equation}

\subsubsection{The standard Deffuant--Weisbuch (DW) model}\label{sec:1D_DW}

The standard DW model \cite{deffuant2000mixing} is a discrete-time asynchronous BCM on a time-independent, 
unweighted, and undirected graph $G = (V,E)$ with no self-edges or multi-edges. At each time $t$, we choose an edge $(i,j) \in E$ uniformly at random and
update the opinions of nodes $i$ and $j$
by calculating
\begin{equation} \label{eq: DW_baseline_update_rule}
  \begin{aligned}
       x_i(t+1) &= \begin{cases}
          x_i(t) + \mu (x_j(t) - x_i(t)) \,, 
            & \textrm{if~} |x_i(t)-x_j(t)| < c  \\
          x_i(t) \,, & \textrm{otherwise} \,,
      \end{cases} \\
      x_j(t+1) &= \begin{cases}
          x_j(t) + \mu (x_i(t) - x_j(t)) \,,
            & \textrm{if~} |x_i(t)-x_j(t)| < c  \\
          x_j(t) \,, & \textrm{otherwise} \,,
      \end{cases} 
  \end{aligned}
  \end{equation}
where $\mu \in (0, 0.5]$ is the compromise parameter\footnote{Alternatively, one can consider $\mu \in (0,1)$ as in Meng et al. \cite{meng2018}, although this is an uncommon choice. When $\mu > 0.5$, nodes ``overcompromise'' when they change their opinions; they overshoot the mean opinion and change which side of the mean opinion they are on.}.
The opinions of all other nodes remain unchanged.
The compromise parameter $\mu$ controls how much nodes adjust their opinions when they interact with a node to whom they are receptive.
When $\mu = 0.5$, two interacting nodes that are receptive to each other precisely average their opinions; when $\mu \in (0,0.5)$, interacting nodes that are receptive to each other move towards each others' opinions, but they do not adopt the mean opinion.
Unlike in the HK model, the asynchronous update rule \eqref{eq: DW_baseline_update_rule} of the DW model updates only the opinions of a pair of nodes at each time step.

\subsection{BCMs with topic-weighted opinion discordance}\label{sec:topic_weighted_models}
Having described the standard HK and DW models in \cref{sec:standard_models}, we now introduce our variants of these models with topic-weighted opinion discordance.
Our topic-weighted BCMs are defined on a time-independent, unweighted, and undirected graph $G = (V,E)$ with no self-edges or multi-edges. The nodes of a graph represent agents with opinions on $K$ related topics. Let $\bx_i(t) \in \R^K$ denote the opinion vector of node $i$ at time $t$; its $k$th entry, which we denote by $x_i^k(t)$, is the opinion of node $i$ on topic $k$. 
Unlike the standard BCMs (see \cref{sec:standard_models}), which have opinions in the interval $[0,1]$, we formulate our models to consider opinions in $\R^K$.

Our models have a \emph{topic weight} $\omega\in [0,1]$, which is a parameter that determines how much weight is placed on the opinion difference in a topic $k$ to determine receptivity on that topic.
For a fixed topic weight $\omega$, we define topic-weighted distance functions $d_k: \R^K \times \R^K \mapsto \R$ that we call \emph{discordance functions}.\footnote{Our choice to call these functions ``discordance'' functions was inspired by Hickok et al. \cite{hickok2022bounded}.}
For each topic $k$, the discordance function $d_k$ takes two opinion vectors, $\bx_i$ and $\bx_j$, as inputs and calculates a distance between them that emphasizes topic $k$ while accounting for the other topics. 
The discordance function $d_k$ is defined as
\begin{equation}\label{eq: d_k}
d_k(\bx_i, \bx_j) = 
\omega |x_i^k - x_j^k| + \frac{1-\omega}{K} \sum_{\ell=1}^K |x_i^\ell - x_j^\ell|\,.
\end{equation}
In our models, when nodes $i$ and $j$ have opinions that satisfy $d_k(\bx_i, \bx_j) < c$, we say that nodes $i$ and $j$ are \emph{receptive} to each other on topic $k$.\footnote{
\label{footnote:tan_and_cheong_model}
Tan and Cheong \cite{Tan2018} also formulated a multidimensional BCM with interdependent topics, but defined the receptiveness of nodes in a fundamentally different approach from ours.
In their model, each topic $k$ has an associated confidence bound $c_k$ and an inclusivity parameter $\alpha_k$. Nodes $i$ and $j$ are receptive on topic $k$ if 
$|x_i^k(t)-x_j^k(t)|<c_k$ and 
$|x_i^\ell(t)-x_j^\ell(t)|<\alpha_k c_\ell$
for each topic $\ell\neq k$.
Thus, two agents are receptive on topic $k$ if their opinion differences on each topic is less than some threshold for that topic. By contrast, our models compute the discordance function as a weighted sum of opinion differences of all topics \eqref{eq: d_k} and check that this discordance is within a single confidence-bound threshold.
\nopagebreak
}
The \emph{maximum discordance function} $d_\mathrm{max}$ is
\begin{equation}\label{eq: d_max}
  d_\mathrm{max}(\bx_i, \bx_j) = \max\limits_{1 \leq k \leq K} d_k(\bx_i, \bx_j) \,.
\end{equation}
When nodes $i$ and $j$ have opinions that satisfy $d_\mathrm{max}(\bx_i, \bx_j) < c$, nodes $i$ and $j$ are receptive to each other on all topics.

\subsubsection{Our HK model with topic-weighted opinion discordance}\label{sec:topic_HK_model}

We now describe our HK model with topic-weighted opinion discordance. We refer to this model as our \emph{topic-weighted HK model}. Our topic-weighted HK model has three parameters: the confidence bound $c \in [0,\infty)$, the compromise parameter $\mu \in (0, 1]$, and the topic weight $\omega \in [0,1]$.

At each discrete time, we update the opinions of all nodes with weighted averages of their opinions and the opinions of their receptive neighbors. 
Specifically, for each node $i$, we define $\N_i^k(t)$ as the set that contains node $i$ itself and all adjacent nodes to which node $i$ is receptive on topic $k$ at time $t$. That is,
\begin{equation}\label{eq: HK_receptive_set}
  \N_i^k(t) = \{i\} \cup \left\{j : (i,j)\in E 
  \text{~and~} d_k(\bx_i(t),\bx_j(t)) < c\right\} \,.
\end{equation}
We update the opinions of all nodes $i$ and all topics $k$ with the update rule
\begin{equation}\label{eq: HK_update_rule}
  x_i^k(t+1) = (1-\mu)x_i^k(t)+\frac{\mu}{|\N_i^k(t)|}
  \sum_{j \in \N_i^k(t)} x_j^k(t) \,.
\end{equation} 
Our update rule \eqref{eq: HK_update_rule} is similar to that of Chazelle and Wang \cite{chazelle2017_inertial_hk}, but we use our topic-weighted discordance functions instead of Euclidean distance to determine the nodes in a receptive set \eqref{eq: HK_receptive_set}.
The compromise parameter $\mu$ determines how much nodes adjust their opinions towards the mean opinion of the neighbors to whom they are receptive. 
The standard HK model (see \cref{sec:1D_HK}) does not have this compromise parameter.
When the compromise parameter $\mu=1$, the update rule \eqref{eq: HK_update_rule} of each node for every topic in the topic-weighted HK model reduces to the update rule \eqref{eq:HK_baseline_update_rule} of the standard HK model. 
When the topic weight $\omega = 1$, the discordance \eqref{eq: d_k} reduces to $d_k(\bx_i, \bx_j) = |x_i^k - x_j^k|$, so the opinion discordance for topic $k$ has no dependence on other topics. 
Therefore, our topic-weighted HK model with $\omega = 1$ and $\mu = 1$ is equivalent to modeling the opinion dynamics of each topic using the standard HK model (see \cref{sec:1D_HK}).
We refer to our topic-weighted HK model with $\omega = 1$ and $\mu = 1$ as our \emph{baseline HK model}.

\subsubsection{Our DW model with topic-weighted opinion discordance}\label{sec:topic_dw_model}

We now describe our DW model with topic-weighted opinion discordance. We refer to this model as our \emph{topic-weighted DW model}. Our topic-weighted DW model has three parameters: 
the confidence bound $c \in [0, \infty)$, the compromise parameter $\mu \in (0,0.5]$, and the topic weight $\omega \in [0,1]$.
The confidence bound and compromise parameter in our topic-weighted DW model are analogous to those BCM parameters in the standard DW model (see \cref{sec:1D_DW}).

At each discrete time, we select one pair of adjacent nodes and update their opinion values on a single topic.
We select an edge $(i,j) \in E$ uniformly at random and a topic $k \in \{1, \dots, K\}$ uniformly at random. 
We then update the opinions of nodes $i$ and $j$ on topic $k$ by calculating
\begin{equation} \label{eq: DW_update_rule}
  \begin{aligned}
       x_i^k(t+1) &= \begin{cases}
          x_i^k(t) + \mu \left[x_j^k(t) - x_i^k(t)\right] \,, 
            & \textrm{if~} d_k(x_i(t),x_j(t)) < c \\
          x_i^k(t) \,, & \textrm{otherwise} \,,
      \end{cases} \\
      x_j^k(t+1) &= \begin{cases}
          x_j^k(t) + \mu \left[x_i^k(t) - x_j^k(t)\right] \,,
            & \textrm{if~} d_k(x_i(t),x_j(t)) < c  \\
          x_j^k(t) \,, & \textrm{otherwise} \,.
      \end{cases} 
  \end{aligned}
  \end{equation}
We refer to our topic-weighted DW model with $\omega = 1$ as our \emph{baseline DW model}. 
When $\omega = 1$, the discordance function $d_k(\bx_i, \bx_j) = |x_i^k - x_j^k|$ depends solely on the opinions for topic $k$ and not on other topics. 
In our baseline DW model, when we focus exclusively on the time steps when topic $k$ is selected, the restricted dynamics follow the standard DW model.
However, the dynamics of opinions of different topics are not independent. In particular, if topic $k$ has an opinion update at time $t$, then topic $\ell\neq k$ cannot be updated at $t$, since only one topic is chosen for interaction at each time.

\section{Theoretical properties of our models}
\label{sec:theoretical}

We now discuss theoretical properties of our BCMs.
In \cref{sec:theoretical:limit_opinions}, we discuss the asymptotic behavior of the opinions in our BCMs, including the convergence of opinion values. 
In \cref{sec:theoretical:effective_graphs}, we discuss the structural properties of the ``effective graphs'' (the time-dependent subgraphs that capture mutual receptivity between nodes) in our BCMs, which help motivate the framework for our numerical simulations.
In \cref{sec:theoretical:region_of_receptiveness}, we discuss ``regions of receptiveness'', which characterize the set of opinion vectors to which a node is receptive at a time $t$.

\subsection{Convergence analysis and absorbing states}
\label{sec:theoretical:limit_opinions}

We define the \emph{opinion state} in our topic-weighted BCMs to be a concatenated vector with the opinions of all nodes on all topics. 
We denote the opinion state as 
$\bX(t) = (\bx_1(t),\bx_2(t),\ldots,\bx_M(t))^T \in\bR^{MK}$.
We use the following result from Lorenz \cite{lorenz2005stabilization} to prove that the opinion states in our topic-weighted BCMs converge in time.

\begin{thm}[Lorenz \cite{lorenz2005stabilization}]\label{thm: lorenzthm}
Let $\{ A(t) \}_{t=0}^\infty$ be a sequence of row-stochastic matrices with $A(t)\in \mathbb{R}_{\geq 0}^{M \times M}$.
Consider the following conditions on the sequence $\{ A(t) \}_{t=0}^\infty$.
\begin{enumerate}
\item[(1)] The diagonal entries of $A(t)$ are positive.
\item[(2)] For each $i, j \in \{1,\ldots,M\}$, we have that $[A(t)]_{ij} > 0$ if and only if $[A(t)]_{ji} > 0$.
\item[(3)] There is a constant $\delta > 0$ such that the smallest positive entry of $A(t)$ for each $t$ is larger than $\delta$. 
\end{enumerate} 
Given times $t_0$ and $t_1$ with $t_0 < t_1$, let 
\begin{equation}
	A(t_0, t_1) = A(t_1-1)\times A(t_1-2)\times\cdots\times A(t_0)\,.
\end{equation}
If conditions (1)--(3) are satisfied, then there exists a time $t'$ and pairwise-disjoint classes $\mathcal{I}_{1} \cup \cdots \cup \mathcal{I}_{p} = \{1,\ldots,M\}$ of indices such that if we reindex the rows and columns of the matrices in the order $\mathcal{I}_{1}, \ldots, \mathcal{I}_{p}$, then
\begin{equation}
	\lim _{t \rightarrow \infty} A(0, t) = \left[\begin{array}{ccc}
		K_{1} & & 0 \\
		& \ddots & \\
		0 & & K_{p}
	\end{array}\right] A\left(0, t'\right)\,,
\end{equation}
where each $K_q$, with $q \in \{1, 2, \dots , p\}$, is a row-stochastic matrix of size $|\mathcal{I}_{q}|\times |\mathcal{I}_{q}|$ whose rows are all the same.
\end{thm}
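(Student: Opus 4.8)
The plan is to produce the partition $\mathcal I_1,\dots,\mathcal I_p$ by a combinatorial ``tail‑connectivity'' argument that stabilizes after a finite time $t'$, and then to show that, after reindexing and blocking by this partition, the finite products $A(0,t)$ converge because each diagonal block converges to a consensus (identical‑rows) matrix. The basic tool throughout is monotonicity: since each $A(t)$ is row‑stochastic with nonnegative entries, every coordinate of $A(0,t)v$ is a convex combination of the coordinates of $v$, so $\max_i(A(0,t)v)_i$ is non‑increasing in $t$ and $\min_i(A(0,t)v)_i$ is non‑decreasing. Applying this to $v=e_m$ (the $m$th standard basis vector), the range $r_m(t):=\max_i[A(0,t)]_{im}-\min_i[A(0,t)]_{im}$ of each column of $A(0,t)$ is non‑increasing, hence convergent.

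Next, I would run the combinatorial step. For each $t$ let $G(t)$ be the graph on $\{1,\dots,M\}$ with an edge $\{i,j\}$ exactly when $[A(t)]_{ij}>0$; condition~(2) makes this symmetric relation well defined and condition~(1) puts a self‑loop at every node. For $s\ge 0$ let $\mathcal G_{\ge s}:=\bigcup_{\tau\ge s}G(\tau)$. As $s$ increases the partition of $\{1,\dots,M\}$ into connected components of $\mathcal G_{\ge s}$ can only refine, and since it is a partition of a finite set it stabilizes: there is a time $t'$ such that this partition is the same for all $s\ge t'$; call it $\mathcal I_1,\dots,\mathcal I_p$. Because each $\mathcal I_q$ is a connected component of $\mathcal G_{\ge t'}$, no $G(\tau)$ with $\tau\ge t'$ has an edge between two distinct classes; equivalently, after reindexing rows and columns in the order $\mathcal I_1,\dots,\mathcal I_p$, every $A(\tau)$ with $\tau\ge t'$ is block diagonal. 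Hence $A(0,t)=A(t',t)\,A(0,t')$ with $A(t',t)=\diag\!\big(B_1(t),\dots,B_p(t)\big)$, where $B_q(t)$ is the product over $[t',t)$ of the $\mathcal I_q$‑blocks.

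The technical heart is to show $B_q(t)\to K_q$ with $K_q$ row‑stochastic and all its rows equal. By stability of the partition, for every $s\ge t'$ the graph $\mathcal G_{\ge s}$ restricted to $\mathcal I_q$ is connected; equivalently, for every nontrivial bipartition of $\mathcal I_q$ there are infinitely many $\tau$ at which $G(\tau)$ has a cross‑edge. Together with condition~(1) (self‑loops retain information), condition~(3) (the uniform lower bound $\delta$), and the two‑sided bound $[A(\tau)]_{ij}\ge\delta\,[A(\tau)]_{ji}$ that conditions~(2)--(3) force, this is exactly the situation in which inhomogeneous products of type‑symmetric stochastic matrices converge to consensus. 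I would carry this out by partitioning $[t',\infty)$ into finite windows across each of which a spanning family of cross‑edges fires in an information‑propagating temporal order, estimating the Dobrushin ergodicity coefficient of the $\mathcal I_q$‑block product over each window, and using type‑symmetry to convert the uniform entrywise bound into a uniform bound on that coefficient. This forces $r_m(t)$ restricted to $\mathcal I_q$ to decay geometrically along these windows, hence to $0$; since the $\mathcal I_q$‑restricted maximum is non‑increasing and the minimum non‑decreasing, range $\to 0$ means every $[B_q(t)]_{im}$ converges to a value $w_q(m)$ independent of $i$, and collecting over $m$ gives $B_q(t)\to K_q$ with all rows equal to $w_q^\top$.

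Assembling, $\lim_{t\to\infty}A(0,t)=\lim_{t\to\infty}A(t',t)\,A(0,t')=\diag(K_1,\dots,K_p)\,A(0,t')$, which is the claimed form. The main obstacle is the third step: upgrading the easy fact that the column ranges converge to genuine convergence of each $B_q(t)$ to a matrix with identical rows. The difficulty is that the windows over which a spanning set of cross‑edges fires may be arbitrarily long, so a naive bound on the within‑window ergodicity coefficient degrades toward $1$; it is precisely type‑symmetry (condition~(2), through the two‑sided entry bound it yields with~(3)) that restores a uniform contraction and excludes pathological one‑way ``leakage'' between parts of a class. If a direct ergodic‑coefficient estimate proves too lossy, I would instead run a ``top‑cluster'' argument: track the set of indices attaining nearly the running maximum of a column, show that after $t'$ an interaction can only push an index out of this set by a definite amount and can never fully replenish it (again using type‑symmetry), and induct on the class structure.
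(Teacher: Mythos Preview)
The paper does not prove this theorem at all: it is quoted verbatim from Lorenz \cite{lorenz2005stabilization} and used as a black box to establish convergence of the opinion state in the topic-weighted HK and DW models (Theorems~\ref{thm: HK_existence_steady} and~\ref{thm: DW_existence_steady}). There is therefore no ``paper's own proof'' to compare your proposal against.

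That said, your sketch is a faithful outline of the argument in Lorenz's original paper. The two structural steps---(i) defining the partition $\mathcal I_1,\dots,\mathcal I_p$ via the stabilized tail-connectivity graph, and (ii) observing that after the stabilization time $t'$ the matrices $A(\tau)$ are block-diagonal in the reindexed basis---are exactly as in \cite{lorenz2005stabilization}. You have also correctly identified the only genuinely delicate point: showing that each diagonal block $B_q(t)$ converges to a rank-one (identical-rows) stochastic matrix. Your observation that naive Dobrushin-coefficient bounds can degrade to~$1$ over long windows is exactly the obstacle, and type-symmetry (condition~(2) together with the uniform lower bound~(3)) is indeed what rescues the argument. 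Lorenz handles this by tracking the minimal positive column sum of the accumulated product and showing it stays bounded away from zero, which then feeds a contraction estimate; your proposed ``top-cluster'' fallback is closer to the later proofs of Hendrickx--Tsitsiklis and Touri--Nedi\'c. Either route works, but neither is entirely routine, and your write-up would need to commit to one and carry it through in detail rather than leave both as alternatives.
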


Using \cref{thm: lorenzthm}, we now prove that for our topic-weighted HK and DW models, the 
limit $\bX^*:=\lim_{t\rightarrow\infty} \bX(t)$ of the opinion state exists.

\begin{thm}\label{thm: HK_existence_steady}
    Let $\bX(t)$ be the opinion state of our topic-weighted HK model with update rule \eqref{eq: HK_update_rule} and initial opinion state $\bX(0)$. It follows that the limit $\bX^* = \lim_{t\rightarrow\infty} \bX(t)$ exists.
\end{thm}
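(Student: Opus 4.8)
The plan is to decouple the dynamics topic by topic and apply \cref{thm: lorenzthm} to each topic separately. For a fixed topic $k$, collect the opinions on that topic into the vector $\bx^k(t) = (x_1^k(t), \dots, x_M^k(t))^T \in \R^M$. For each fixed $t$, the update rule \eqref{eq: HK_update_rule} is linear in $\bx^k(t)$, so we may write $\bx^k(t+1) = A^k(t)\,\bx^k(t)$, where $A^k(t) \in \R^{M\times M}$ is given by
\[
  [A^k(t)]_{ij} = \begin{cases}
    (1-\mu) + \mu/|\N_i^k(t)|\,, & j = i\,,\\[2pt]
    \mu/|\N_i^k(t)|\,, & j \in \N_i^k(t) \text{ and } j \neq i\,,\\[2pt]
    0\,, & \text{otherwise}\,.
  \end{cases}
\]
Iterating this recursion gives $\bx^k(t) = A^k(0,t)\,\bx^k(0)$ in the notation of \cref{thm: lorenzthm}, so it suffices to show that $\lim_{t\to\infty} A^k(0,t)$ exists for each $k \in \{1,\dots,K\}$; the limit $\bX^*$ then exists because $\bX(t)$ is simply the concatenation, after reindexing, of the vectors $\bx^1(t),\dots,\bx^K(t)$.

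Next I would check that the sequence $\{A^k(t)\}_{t=0}^\infty$ meets the three hypotheses of \cref{thm: lorenzthm}. Each $A^k(t)$ is row-stochastic because the weight $\mu/|\N_i^k(t)|$ is distributed over the $|\N_i^k(t)|$ indices in $\N_i^k(t)$ (summing to $\mu$) and the inertia term $1-\mu$ is added to the diagonal. For condition~(1), the diagonal entry satisfies $[A^k(t)]_{ii} = (1-\mu) + \mu/|\N_i^k(t)| \geq \mu/M > 0$, using $\mu \in (0,1]$ and $1 \leq |\N_i^k(t)| \leq M$. For condition~(2), when $i \neq j$ we have $[A^k(t)]_{ij} > 0$ exactly when $j \in \N_i^k(t)$, i.e., when $(i,j)\in E$ and $d_k(\bx_i(t),\bx_j(t)) < c$; this is symmetric in $i$ and $j$ because $G$ is undirected and $d_k$ is a symmetric function of its two arguments, which is immediate from \eqref{eq: d_k}. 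For condition~(3), every positive off-diagonal entry equals $\mu/|\N_i^k(t)| \geq \mu/M$ and every positive diagonal entry is also $\geq \mu/M$, so $\delta := \mu/M$ is a uniform lower bound valid for all $t$ (and all $k$). Applying \cref{thm: lorenzthm} then produces a fixed limit matrix for $A^k(0,t)$, which gives the convergence of $\bx^k(t)$; carrying this out for every topic $k$ finishes the proof.

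The topicwise linearization reduces the theorem to an essentially routine verification, so I do not expect a genuine analytic obstacle. The one point that warrants care is that the receptive set $\N_i^k(t)$ depends on the \emph{entire} opinion state $\bX(t)$ through the cross-topic terms in $d_k$, not merely on $\bx^k(t)$; consequently, the scalar recursions for different topics are coupled. However, this coupling is harmless here: the hypotheses of \cref{thm: lorenzthm} constrain each matrix $A^k(t)$ individually and hold regardless of how the sets $\N_i^k(t)$ are chosen at each step. The remaining work is thus bookkeeping --- confirming row-stochasticity, the symmetric zero pattern, and a single $\delta > 0$ that works simultaneously for all $t$ --- rather than anything deep.
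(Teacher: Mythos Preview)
Your proposal is correct and takes essentially the same approach as the paper: write the update as multiplication by row-stochastic matrices and invoke \cref{thm: lorenzthm}. The only difference is organizational---you decouple into $K$ separate $M\times M$ systems (one per topic), whereas the paper bundles everything into a single $MK\times MK$ matrix and leaves the verification of Lorenz's hypotheses to the reader; your explicit check of conditions (1)--(3) and your remark that the cross-topic coupling enters only through the time-dependent sets $\N_i^k(t)$ (and hence does not obstruct the topicwise application of \cref{thm: lorenzthm}) are both sound.
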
 

\begin{proof}

    Let $i$ and $j$ denote node indices and $k$ denote a topic index.
    We can express the update rule \eqref{eq: HK_update_rule} in the matrix form 
    $\bX(t+1) = A(\bX(t),t) \bX(t)$, where 
    \begin{equation}
    \left[A(\bX(t),t)\right]_{\alpha\beta} = \begin{cases}
        1 - \mu + \mu/|\N_i^k(t)| \,,
        &\text{if~} \alpha = \beta=(i-1)K+k  \\
        \mu/|\N_i^k(t)| \,,
        &\text{if~} \alpha = (i-1)K+k\,, ~~ \beta = (j-1)K+k\,, \\
        & 
        \text{and~}
        j\in \N_i^k(t) \setminus \{i\}  \\
        0 \,,
        & \text{otherwise}\,.
    \end{cases}
    \end{equation}
    It is readily checked that each matrix $A(\bX(t),t)$ satisfies the conditions in \cref{thm: lorenzthm}.
    Therefore, the opinion state $\bX(t)$ converges to a limit state $\bX^*$ as time $t$ goes to infinity.
\end{proof}

\begin{thm}\label{thm: DW_existence_steady}
Let $\bX(t)$ be the opinion state of our topic-weighted DW model with update rule \eqref{eq: DW_update_rule} and initial opinion state $\bX(0)$. It follows that the limit $\bX^*=\lim_{t\rightarrow\infty} \bX(t)$ exists.
\end{thm}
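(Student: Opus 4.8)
The plan is to mimic the proof of \cref{thm: HK_existence_steady}, writing the DW update \eqref{eq: DW_update_rule} in the form $\bX(t+1) = A(\bX(t),t)\bX(t)$ for a suitable sequence of row-stochastic matrices $A(\bX(t),t) \in \R_{\geq 0}^{MK \times MK}$, and then invoking \cref{thm: lorenzthm}. The subtlety compared to the HK case is that at time $t$ we select an edge $(i,j)\in E$ and a topic $k$ uniformly at random, so $A(\bX(t),t)$ depends on these random choices; nonetheless, for each realization it is a fixed row-stochastic matrix, and \cref{thm: lorenzthm} applies pathwise (i.e., to each realization of the sequence of choices), which is all we need for almost-sure convergence of $\bX(t)$.

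First I would specify the matrix. Fix the time-$t$ choice of edge $(i,j)$ and topic $k$, and let $\gamma = (i-1)K+k$ and $\delta = (j-1)K+k$ be the coordinates of $x_i^k$ and $x_j^k$ in $\bX$. If $d_k(\bx_i(t),\bx_j(t)) < c$, then set $[A]_{\gamma\gamma} = [A]_{\delta\delta} = 1-\mu$, $[A]_{\gamma\delta} = [A]_{\delta\gamma} = \mu$, and $[A]_{\alpha\alpha} = 1$ for every other coordinate $\alpha \notin \{\gamma,\delta\}$, with all remaining entries $0$; if $d_k(\bx_i(t),\bx_j(t)) \geq c$, set $A$ to be the identity. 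In either case $A$ is row-stochastic with positive diagonal (here we use $\mu \in (0,0.5]$, so $1-\mu \geq 1/2 > 0$), the symmetry condition $[A]_{\alpha\beta}>0 \iff [A]_{\beta\alpha}>0$ holds by construction, and the only off-diagonal positive entries that ever appear equal $\mu$, while the diagonal entries are either $1$ or $1-\mu$, so all positive entries are bounded below by the constant $\delta := \min\{\mu, 1-\mu\} > 0$, uniformly in $t$. Thus conditions (1)--(3) of \cref{thm: lorenzthm} are satisfied for every realization of the random choices.

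Applying \cref{thm: lorenzthm} then gives, for each realization, a finite time $t'$ and a partition of $\{1,\dots,MK\}$ into classes such that $\lim_{t\to\infty} A(0,t)$ exists; since $\bX(t) = A(0,t)\bX(0)$, the limit $\bX^* = \lim_{t\to\infty}\bX(t)$ exists. I expect the only point requiring care — the ``main obstacle'', such as it is — to be the bookkeeping of noting that \cref{thm: lorenzthm} is a deterministic statement about sequences of matrices, so it must be applied conditionally on the (random) sequence of edge-and-topic selections rather than in expectation; once that is said, convergence holds along every sample path, hence almost surely. I would close with a one-line remark that, unlike in the HK case, the limit $\bX^*$ here is a random variable depending on the realized sequence of interactions.
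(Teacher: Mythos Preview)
Your proposal is correct and follows essentially the same approach as the paper: write the DW update as $\bX(t+1)=A(\bX(t),t)\bX(t)$ with the same matrix (identity if the pair is not receptive, otherwise the two-by-two averaging block embedded in the identity), verify conditions (1)--(3) of \cref{thm: lorenzthm}, and conclude. Your explicit remark that \cref{thm: lorenzthm} is applied pathwise to each realized sequence of edge-and-topic choices is a useful clarification the paper leaves implicit; one cosmetic fix---you use $\delta$ both for the coordinate index $(j-1)K+k$ and for the uniform lower bound $\min\{\mu,1-\mu\}$, so rename one of them.
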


\begin{proof}
We can express the update rule \eqref{eq: DW_update_rule} in the matrix form 
$\bX(t+1) = A(\bX(t),t) \bX(t)$. 
Suppose that we select edge $(i,j)$ and topic $k$ at discrete time $t$. If the discordance function $d_k(\bx_i(t),\bx_j(t))$ is greater than or equal to $c$, then $A(\bX(t),t)=I_{MK}$ is the identity matrix of dimension $MK\times MK$. 
If the discordance function $d_k(\bx_i(t),\bx_j(t))$ is less than $c$, then 
\begin{equation}
    \left[A(\bX(t),t)\right]_{\alpha\beta} = \begin{cases}
       1-\mu \,,
         & \text{if~} \alpha=\beta=(i-1)K+k \\
       1-\mu \,,
         & \text{if~} \alpha=\beta=(j-1)K+k \\
       \mu \,,
         & \text{if~} \alpha=(i-1)K+k, ~~\beta=(j-1)K+k \\
       \mu \,,
         & \text{if~} \alpha=(j-1)K+k, ~~\beta=(i-1)K+k  \\
       \delta_{\alpha\beta} \,, & \text{otherwise}\,,
    \end{cases}
\end{equation}
where $\delta_{\alpha\beta}$ is the Kronecker delta.
It is readily checked that each matrix $A(\bX(t),t)$ satisfies the conditions in \cref{thm: lorenzthm}. Therefore, the opinion state $\bX(t)$ converges to a limit state $\bX^*$ as time $t$ goes to infinity.
\end{proof}

Theorem \ref{thm: HK_existence_steady} and Theorem \ref{thm: DW_existence_steady} give us the existence of a \emph{limit opinion} $\bx_i^* = \lim_{t\to\infty} \bx_i(t)$ for each node $i$ in our topic-weighted HK model and topic-weighted DW model, respectively.
For each distinct limit opinion, we say that a maximal set of nodes that have the same limit opinion is a \emph{limit opinion cluster}.
Suppose that there are $r$ distinct limit opinion clusters, denoted by $S_1, \ldots, S_r \subset V$.
We have that $\bigcup_{a=1}^r S_a = V$ and $S_a \bigcap S_b = \emptyset$ if $a \neq b$.
For a limit opinion cluster $S_a$, we denote its corresponding limit opinion by $\bm{z}_a \in \mathbb{R}^K$. For each node $i \in S_a$ and for each topic $k$, we have $x_i^{*k} = z_a^{k}$.

We say that an opinion state $\bX\in \bR^{MK}$
in our BCMs is an \emph{absorbing state} if for all edges $(i,j)\in E$ and all topics $k\in\{1,\ldots,K\}$, either $x_i^k=x_j^k$ or $d_k(\bx_i, \bx_j) \ge c$. If $\bX(T)$ is an absorbing state, then $\bX(t) = \bX(T)$ for all $t \ge T$. 
We now prove that the limit state $\bX^*$ is an absorbing state in our topic-weighted HK model and is almost surely an absorbing state in our topic-weighted DW model.

\begin{lemma}\label{lemma:limit_opinion_difference}
    In our topic-weighted HK model with update rule \eqref{eq: HK_update_rule},
    there is a time $T\geq 0$ such that for all edges $(i,j)\in E$ and for all topics $k$, if $x_i^{*k}\neq x_j^{*k}$, we have
    \begin{equation} \label{eq: finitet}
   d_k(\bx_i(t),\bx_j(t)) \ge c \text{~~for all~~} t\ge T\,.
    \end{equation}
\begin{proof}
    From Theorem \ref{thm: HK_existence_steady}, we know the limit $\bX^*=\lim_{t\rightarrow\infty} \bX(t)$ exists. Suppose there is exactly one distinct limit opinion cluster with corresponding limit opinion $\bx^*$. 
    Then for all nodes $i$ and topics $k$, we have $\lim_{t\rightarrow\infty} x_i^k = x^{*k}$. Therefore, the inequality \eqref{eq: finitet} is vacuously true with any $T$.

    Suppose that there are $r$ ($r\ge 2$) distinct limit opinion clusters, denoted by $S_1, \ldots, S_r \subseteq V$. Let $\bm{z}_a$ be the limit opinion of nodes that belong to $S_a$ (i.e.,  $\bm{z}_a=\lim_{t\rightarrow\infty}\bx_i(t)$ for all $i\in S_a$). 
    We define
    \begin{equation} \label{eq: delta_min}
      \delta = \min\limits_{\substack{1 \leq k \leq K, \\ 1 \leq a, b \leq r}} \{|{z}_{a}^{k} - {z}_{b}^{k}| : {z}_{a}^{k} \neq {z}_{b}^{k} \}
    \end{equation} 
    as the minimum distance between distinct limit opinions among all topics.
    Because there are at least two distinct limit opinion clusters, we know that $\delta > 0$. Since $\bX^*=\lim_{t\rightarrow \infty}\bX(t)$, for any $\varepsilon>0$, there exists
    a time $T \geq 0$ such that
  \begin{equation} \label{eq: epsilon}
      \begin{aligned}
          |x^{*k}_i-x^{k}_i(t)|<\varepsilon
      \end{aligned}
  \end{equation}
  for all nodes $i$, all topics $k$, and all times $t \geq T$.
  In particular, we choose $\varepsilon$ in \eqref{eq: epsilon} as $\varepsilon = \mu\delta/[2M(\mu+1)]$, where $M$ is the number of nodes.

Suppose that at some time $t\ge T$, there exists a topic $k$ and a node $i$ that has a neighbor $j$ (i.e., $(i,j)\in E$) with $x_i^{*k}\neq x_j^{*k}$ and
 \begin{equation}\label{eq: lemmaassup} 
  d_k(\bx_i(t),\bx_j(t)) < c \,.
  \end{equation} 
  We fix this time $t$ and topic $k$ and define a set of nodes
  \begin{equation}
  \Omega = \{i: i \text{~has a neighbor~} j \text{~with~} x_i^{*k}\neq x_j^{*k} \text{~and~} d_k(\bx_i(t),\bx_j(t)) < c\}\,.
  \end{equation}
  With the assumption above, we know $\Omega$ is not empty and define node $x_{i_{\min}}$ as
  \begin{equation}
  x_{i_{\min}}^{*k} = \min_{i\in\Omega} x_i^{*k}\,.
  \end{equation}
  Without loss of generality, we let $i_{\min}=1$ and $i_{\min}\in S_1$. 
  Node 1 has at least one neighbor $j$ with $x_1^{*k}\neq x_j^{*k}$ and $j \notin S_1$.
  Thus, $\N_1^k(t){\setminus}S_1$ has at least one element $j$ and therefore, is not empty. 
  From the update rule \eqref{eq: HK_update_rule}, we have
  \begin{equation} \label{eq: xii}
  \begin{aligned}
  \left|x_1^k(t+1)-x_1^k(t)\right| 
  &= \frac{\mu}{|\N_1^k(t)|}
  \left| \sum_{j\in\N_1^k(t){\setminus}S_1} \left[x_j^k(t)-x_1^k(t)\right]  + \sum_{j\in\N_1^k(t)\cap S_1} \left[x_j^k(t)-x_1^k(t)\right] \right| \\
  &\ge \frac{\mu}{|\N_1^k(t)|}
  \left| \sum_{j\in\N_1^k(t){\setminus}S_1} \left[x_j^k(t)-x_1^k(t)\right] \right| - \frac{\mu}{|\N_1^k(t)|} \sum_{j\in\N_1^k(t)\cap S_1} \left|x_j^k(t)-x_1^k(t)\right|\,.
  \end{aligned}
  \end{equation}
Using \eqref{eq: epsilon} and then \eqref{eq: delta_min}, for 
    $j\in\N_1^k(t){\setminus}S_1$ we have 
    \begin{equation} \label{eq: part1}
    x_j^k(t)-x_1^k(t) > x_j^{*k}-x_1^{*k}-2\varepsilon > \delta - 2\varepsilon >0\,.
    \end{equation}
    For $j\in\N_1^k(t){\cap}S_1$, since both $x_j^k(t)$ and $x_1^k(t)$ are in the same limit opinion cluster, we have $x_i^{*k} = x_j^{*k}$. Consequently, using \eqref{eq: epsilon}, we have 
    \begin{equation}\label{eq: part2}
    \left|x_j^k(t)-x_1^k(t)\right| < 2\varepsilon \,.
    \end{equation}
    Combining \eqref{eq: xii}, \eqref{eq: part1}, and \eqref{eq: part2}, we obtain
    \begin{equation}  \label{eq: x11}
    \left|x_1^k(t+1)-x_1^k(t)\right| > \frac{\mu\delta}{M} - 2\mu\varepsilon\,.
    \end{equation}

By our choice of $\varepsilon = \mu\delta/[2M(\mu+1)]$, we simplify \eqref{eq: x11} and obtain
\begin{equation}
    \left|x_1^k(t+1)-x_1^k(t)\right| > \frac{\mu\delta}{M(\mu+1)} \,.
    \label{eq: xiii}
\end{equation}
We simultaneously have 
  \begin{equation}
  \left|x_1^k(t+1)-x_1^k(t)\right| \le \left|x_1^{*k}-x_1^k(t+1)\right| + \left|x_1^{*k}-x_1^k(t)\right| < 2\varepsilon = \frac{\mu\delta}{M(\mu+1)}\,,
  \end{equation}
  which contradicts \eqref{eq: xiii}. Therefore, the assumption in \eqref{eq: lemmaassup} does not hold. 
\end{proof}
\end{lemma}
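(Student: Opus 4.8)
The plan is to argue by contradiction, leveraging the fact from \cref{thm: HK_existence_steady} that the opinion state $\bX(t)$ converges, so that from some time onward every coordinate $x_i^k(t)$ is trapped in an arbitrarily small neighborhood of its limit $x_i^{*k}$. The intuition is that if two adjacent nodes had distinct limit opinions on a topic $k$ and yet stayed receptive to each other on topic $k$ at some late time, then the receptive HK update \eqref{eq: HK_update_rule} would pull their topic-$k$ opinions toward each other by a definite, $\delta$-scale amount, which is incompatible with both opinions already being $\varepsilon$-close to distinct limit values for a sufficiently small $\varepsilon$.

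First I would dispose of the trivial case: if there is a single limit opinion cluster, then $x_i^{*k} = x_j^{*k}$ for every edge $(i,j)$ and every topic $k$, so \eqref{eq: finitet} holds vacuously for any $T$. Otherwise, let $S_1,\ldots,S_r$ with $r \ge 2$ be the limit opinion clusters, with limit opinions $\bm{z}_1,\ldots,\bm{z}_r$, and set $\delta$ to be the smallest nonzero value of $|z_a^k - z_b^k|$ over all topics $k$ and cluster pairs $a,b$; since $r \ge 2$, we have $\delta > 0$. Using convergence, choose $T$ so that $|x_i^k(t) - x_i^{*k}| < \varepsilon$ for all nodes $i$, all topics $k$, and all $t \ge T$, where $\varepsilon$ is a small multiple of $\mu\delta/M$ (with $M = |V|$) to be pinned down at the end.

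Now suppose, for contradiction, that at some $t \ge T$ there is a topic $k$ and an edge $(i,j)$ with $x_i^{*k} \ne x_j^{*k}$ and $d_k(\bx_i(t),\bx_j(t)) < c$. Fixing this $t$ and $k$, I would collect into a set $\Omega$ all nodes that have such a ``bad'' receptive neighbor, and then select the node in $\Omega$ whose limit opinion on topic $k$ is smallest; call it node $1$, lying in cluster $S_1$. The key point --- and the step I expect to be the crux --- is that any receptive neighbor of node $1$ that lies outside $S_1$ is itself in $\Omega$, so by minimality its topic-$k$ limit opinion exceeds $z_1^k$, hence exceeds it by at least $\delta$; therefore in \eqref{eq: HK_update_rule} the contributions to $x_1^k(t+1) - x_1^k(t)$ coming from $\N_1^k(t)\setminus S_1$ all point in the same (upward) direction with no cancellation and total at least order $\mu\delta/M$, while the contributions from $\N_1^k(t)\cap S_1$ move $x_1^k$ by at most $2\varepsilon$ per node. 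Combining this lower bound on $|x_1^k(t+1) - x_1^k(t)|$ with the triangle-inequality upper bound $|x_1^k(t+1) - x_1^k(t)| \le |x_1^{*k} - x_1^k(t+1)| + |x_1^{*k} - x_1^k(t)| < 2\varepsilon$ gives a contradiction once $\varepsilon$ is chosen (e.g.\ $\varepsilon = \mu\delta/[2M(\mu+1)]$) so that $2\varepsilon$ is strictly below the lower bound.

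The main obstacle is arranging the estimate on node $1$'s single update so that the out-of-cluster pull cannot be cancelled --- this is precisely why one must pass to the extremal node in $\Omega$ rather than work with an arbitrary bad edge, and one must verify carefully that $\N_1^k(t)\setminus S_1$ is nonempty and that every one of its members sits strictly above $z_1^k$ by at least $\delta$. Tracking the constants (the factor $1/|\N_1^k(t)| \ge 1/M$, the powers of $\mu$, and the unknown number of in-cluster receptive neighbors) so that one fixed choice of $\varepsilon$ forces the contradiction is routine but needs to be handled with some care.
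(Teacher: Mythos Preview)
Your plan is correct and mirrors the paper's proof essentially step for step --- the same $\delta$, the same choice $\varepsilon = \mu\delta/[2M(\mu+1)]$, the same set $\Omega$ and extremal-node selection, and the same lower/upper bound contradiction on $|x_1^k(t+1)-x_1^k(t)|$. One small point worth flagging (which the paper also glosses over): a receptive neighbor $j\notin S_1$ need not satisfy $x_j^{*k}\neq x_1^{*k}$, since distinct limit-opinion vectors can agree on coordinate $k$, so the clean partition of $\N_1^k(t)$ is by whether $x_j^{*k}=x_1^{*k}$ rather than by cluster membership --- but this is exactly the ``routine but needs care'' bookkeeping you already anticipate, and the estimates go through unchanged.
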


\begin{thm} \label{thm: HK-absorbing}
In our topic-weighted HK model with update rule \eqref{eq: HK_update_rule} and initial opinion state $\bX(0)$, the limit state $\bX^* = \lim_{t\rightarrow\infty} \bX(t)$ is an absorbing state.
\end{thm}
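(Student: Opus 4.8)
The plan is to verify the defining condition of an absorbing state directly at $\bX^*$, with Lemma~\ref{lemma:limit_opinion_difference} doing essentially all of the work. Fix an arbitrary edge $(i,j)\in E$ and an arbitrary topic $k\in\{1,\ldots,K\}$; I need to show that either $x_i^{*k}=x_j^{*k}$ or $d_k(\bx_i^*,\bx_j^*)\ge c$. If $x_i^{*k}=x_j^{*k}$, there is nothing to prove, so I may assume $x_i^{*k}\neq x_j^{*k}$ and work in that case.

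In that case, Lemma~\ref{lemma:limit_opinion_difference} supplies a time $T\ge 0$ with $d_k(\bx_i(t),\bx_j(t))\ge c$ for all $t\ge T$. By \cref{thm: HK_existence_steady}, the opinion state $\bX(t)$ converges, so in particular $\bx_i(t)\to\bx_i^*$ and $\bx_j(t)\to\bx_j^*$ as $t\to\infty$ (each node's opinion vector is a fixed block of the concatenated state, so convergence in $\bR^{MK}$ is just coordinatewise convergence of every $x_i^k(t)$). Since $d_k$ is a finite sum of absolute values of coordinate differences, it is continuous on $\R^K\times\R^K$, and therefore
\[
  d_k(\bx_i^*,\bx_j^*) = \lim_{t\to\infty} d_k(\bx_i(t),\bx_j(t)) \ge c\,.
\]
As the edge $(i,j)$ and topic $k$ were arbitrary, $\bX^*$ satisfies the definition of an absorbing state.

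I do not anticipate a genuine obstacle here, since the real content is in Lemma~\ref{lemma:limit_opinion_difference}; the one point that deserves a moment of care is that passing to the limit only preserves the \emph{non-strict} inequality $d_k(\,\cdot\,,\,\cdot\,)\ge c$, and this is precisely the inequality that appears in the definition of an absorbing state (a strict bound would not survive the limit). The complementary fact that an absorbing state is fixed by the dynamics has already been recorded in the text preceding the theorem and is not needed for this argument.
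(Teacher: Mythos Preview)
Your proof is correct and follows essentially the same argument as the paper: invoke Lemma~\ref{lemma:limit_opinion_difference} for the case $x_i^{*k}\neq x_j^{*k}$, then pass to the limit using continuity of $d_k$ to obtain $d_k(\bx_i^*,\bx_j^*)\ge c$. Your added remarks on why only the non-strict inequality survives the limit and why this matches the definition of an absorbing state are a nice touch, but otherwise the arguments are identical.
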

\begin{proof}
By \cref{lemma:limit_opinion_difference}, there exists a time $T\geq 0$ such that for all edges $(i,j)\in E$ and all topics $k$, we have either (1) $x_i^{*k} = x_j^{*k}$; or (2) $x_i^{*k} \neq x_j^{*k}$ and $d_k(\bx_i(t),\bx_j(t)) \ge c$ for all $t\ge T$. 
It therefore follows that if $x_i^{*k}\neq x_j^{*k}$, then
\begin{equation}
    d_k(\bx^*_i,\bx^*_j) = \lim_{t\to\infty} d_k(\bx_i(t),\bx_j(t)) \geq c\,.
\end{equation}
Therefore, $\bX^*$ is an absorbing state.
\end{proof}

\begin{thm} \label{thm: DW-absorbing}
In our topic-weighted DW model with update rule \eqref{eq: DW_update_rule} and initial opinion state $\bX(0)$, the limit state $\bX^* = \lim_{t\rightarrow\infty} \bX(t)$ is almost surely an absorbing state.
\end{thm}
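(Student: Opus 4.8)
The plan is to argue by contradiction, showing that the event on which $\bX^*$ fails to be absorbing is incompatible with a full-measure event about the edge--topic selections. Recall from \cref{thm: DW_existence_steady} that the limit $\bX^* = \lim_{t\to\infty}\bX(t)$ exists for every realization of the process (the matrices there satisfy the hypotheses of \cref{thm: lorenzthm} irrespective of the random choices). Let $A$ be the event that $\bX^*$ is not an absorbing state; on $A$ there is an edge $(i,j)\in E$ and a topic $k$ (a ``discordant limit pair'') with $x_i^{*k}\neq x_j^{*k}$ and $d_k(\bx_i^*,\bx_j^*)<c$. Let $B$ be the event that each of the $|E|\,K$ edge--topic pairs is selected at infinitely many times. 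Since the edge and the topic at each step are drawn independently and uniformly, a fixed pair is selected with the constant probability $1/(|E|\,K)>0$, and the choices are independent across steps; so by the second Borel--Cantelli lemma every pair is selected infinitely often almost surely, and intersecting over the finitely many pairs gives $\mathbb{P}(B)=1$. The plan is then to show $A\cap B=\emptyset$, which forces $\mathbb{P}(A)\le \mathbb{P}(B^{c})=0$.

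To get $A\cap B=\emptyset$, fix a realization in $A$ together with a discordant limit pair $(i,j),k$, and set $\eta = |x_i^{*k}-x_j^{*k}|>0$. Choose $\varepsilon\in(0,\eta)$ small enough that $(1-2\mu)(\eta+\varepsilon)<\eta-\varepsilon$; this is possible because $\mu\in(0,1/2]$, so $1-2\mu\in[0,1)$. Since $d_k$ is continuous and $d_k(\bx_i^*,\bx_j^*)<c$, and since $x_i^k(t)\to x_i^{*k}$ and $x_j^k(t)\to x_j^{*k}$, there is a finite (realization-dependent) time $T$ such that for all $t\ge T$ one has $d_k(\bx_i(t),\bx_j(t))<c$ and $|x_i^k(t)-x_j^k(t)|\in(\eta-\varepsilon,\eta+\varepsilon)$. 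On the event $B$, the pair $(i,j),k$ is selected at some time $t\ge T$; the receptiveness condition $d_k(\bx_i(t),\bx_j(t))<c$ holds there, so the update rule \eqref{eq: DW_update_rule} applies and yields $x_i^k(t+1)-x_j^k(t+1) = (1-2\mu)\left(x_i^k(t)-x_j^k(t)\right)$. Hence $|x_i^k(t+1)-x_j^k(t+1)| = (1-2\mu)\,|x_i^k(t)-x_j^k(t)| \le (1-2\mu)(\eta+\varepsilon) < \eta-\varepsilon$, which contradicts $|x_i^k(t+1)-x_j^k(t+1)|>\eta-\varepsilon$ (valid since $t+1\ge T$). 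Thus no realization lies in $A\cap B$, and with $\mathbb{P}(B)=1$ we conclude $\mathbb{P}(A)=0$, i.e., $\bX^*$ is almost surely an absorbing state.

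The computational content here is light; the one point that needs care is that the ``settling time'' $T$ is a random variable, so a purely deterministic estimate is not available. The right bookkeeping is to cast the estimate as the realization-wise implication ``a discordant limit pair $\Rightarrow$ that pair is chosen only finitely often'' (equivalently $A\cap B=\emptyset$) and then use Borel--Cantelli only to control $\mathbb{P}(B^{c})$. An equivalent alternative is to track the single scalar $|x_i^k(t)-x_j^k(t)|$ along the (almost surely infinite) subsequence of times at which $(i,j),k$ is selected and show it is eventually contracted below any prescribed positive level, contradicting $|x_i^k(t)-x_j^k(t)|\to\eta>0$; the formulation above simply avoids the subsequence language.
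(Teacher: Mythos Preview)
Your proof is correct and follows essentially the same approach as the paper's: both argue by contradiction, invoke Borel--Cantelli to guarantee that a discordant edge--topic pair is selected at arbitrarily late times, and then use the contraction factor $(1-2\mu)$ in the update to force $|x_i^k(t+1)-x_j^k(t+1)|$ below a threshold it must exceed by convergence. Your explicit decomposition into the events $A$ and $B$ and your acknowledgement that $T$ is realization-dependent make the probabilistic bookkeeping a bit tidier than the paper's version, but the substance is identical.
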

We illustrate the main idea of the proof of \cref{thm: DW-absorbing} in \cref{fig: proof_diag}.
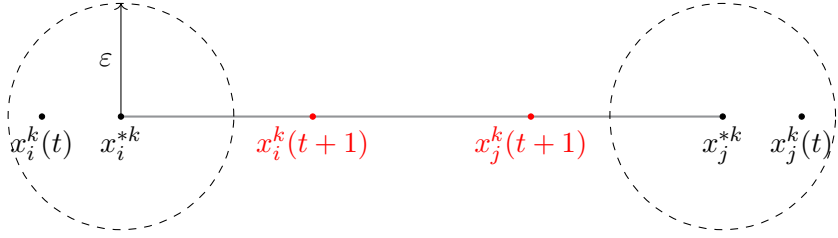
\begin{figure}[H]
    \centering
    \begin{tikzpicture}
    \draw[gray, thick] (-\x,0) -- (\x,0); 
    
    \draw[black,dashed] (-\x,0) circle (\r); 
    \draw[black,dashed] (\x,0) circle (\r); 
    
    \filldraw [black] (-\x,0) circle (1pt) node[anchor=north]{$x_i^{*k}$}; 
    \filldraw [black] (\x,0) circle (1pt) node[anchor=north]{$x_j^{*k}$};
    
    \filldraw [black] (-\x-\r*0.7,0) circle (1pt) node[anchor=north]{$x_i^{k}(t)$}; 
    \filldraw [black] (\x+\r*0.7,0) circle (1pt) node[anchor=north]{$x_j^{k}(t)$};
    
    \filldraw [red] (-\x+\r*1.7,0) circle (1pt) node[anchor=north]{$x_i^{k}(t+1)$}; 
    \filldraw [red] (\x-\r*1.7,0) circle (1pt) node[anchor=north]{$x_j^{k}(t+1)$};

    \draw[->] (-\x,0) -- (-\x,\r); 
    \node at (-\x*1.05,\r/2) {$\varepsilon$};
    \end{tikzpicture}
    \caption{This figure shows a scenario when the limit $\bX^*$ of 
    an opinion state is not an absorbing state. 
    Because $\bX^*$ is not an absorbing state, there is some edge $(i,j) \in E$ and some topic $k$ such that $x_i^{*k}\neq x_j^{*k}$ and $d_k(\bx_i^*,\bx_j^*)<c$.
    For large enough $t$, and for each node $i$ and each topic $k$, the opinion $x_i^k(t)$ will always remain close to its limit $x_i^{*k}$.
    The two dashed circles mark the intervals centered at $x_i^{*k}$ and $x_j^{*k}$ with a radius of $\varepsilon$. We seek to choose $\varepsilon$ small enough such that (1) nodes $i$ and $j$ are receptive on topic $k$ (i.e., $d_k(\bx_i(t),\bx_j(t))<c$), so that an interaction between them will lead to their opinions changing;
    and (2) the new opinions $x_i^k(t+1)$ and $x_j^k(t+1)$ (in red) fall outside of the two circles.
    This would contradict the opinion $x_i^k(t)$ remaining close to its limit opinion $x_i^{*k}$ (with difference no more than $\varepsilon$), implying that the assumed scenario (i.e., $\bX^*$ is not an absorbing state) does not occur.
    }
    \label{fig: proof_diag}
\end{figure}

\begin{proof} 
  From Theorem \ref{thm: DW_existence_steady}, we know that the limit state $\bX^*$ exists.
  Suppose that $\bX^*$ is not an absorbing state. Then there must exist an edge $(i,j)\in E$ and a topic $k$, such that
  \begin{equation}
  |x^{*k}_i - x^{*k}_j| = \delta >0 \text{~~and~~} d_k(\bx^{*}_i, \bx^{*}_j)=c_0<c\,.
  \end{equation}
  Since $\bX^* = \lim_{t\rightarrow \infty} \bX(t)$, for any $\varepsilon>0$, there exists a time $T \geq 0$, such that
  \begin{equation}\label{eq:dw_absorbing_epsilon}
      \begin{aligned}
          |x^{*k}_i-x^{k}_i(t)|<\varepsilon\,, \quad |x^{*k}_j-x^{k}_j(t)|<\varepsilon\,, \quad 
          d_k(\bx_i^*,\bx_i(t))<\varepsilon\,, \quad d_k(\bx_j^*,\bx_j(t))<\varepsilon
      \end{aligned}
  \end{equation}
  for all times $t \ge T$. In particular, we choose 
  \begin{equation}\varepsilon=\min\left\{\frac{c-c_0}{3}\,, ~\frac{\mu\delta}{2(1-\mu)}\right\}\,.\end{equation}
   By the Borel--Cantelli lemma, with probability 1, in our topic-weighted DW model, we choose every edge in $E$ and every topic infinitely often. 
   Therefore, we almost surely choose the edge $(i,j)$ and the topic $k$ at some time $t\geq T$. 
   At this time $t$, we have that $i$ and $j$ interact on topic $k$, yielding
   \begin{equation}
   \begin{aligned}
   d_k(\bx_i(t),\bx_j(t)) &\le d_k(\bx_i^*,\bx_i(t))+d_k(\bx_j^*,\bx_j(t))+d_k(\bx_i^*,\bx_j^*) \\
   & \le \frac{c-c_0}{3} + \frac{c-c_0}{3} + c_0 < c\,,
   \end{aligned}
   \end{equation}
   so we update the opinions $x_i^k(t)$ and $x_j^k(t)$ according to the update rule \eqref{eq: DW_update_rule}. 
   From equations \eqref{eq: DW_update_rule} and \eqref{eq:dw_absorbing_epsilon}, we have
   \begin{equation}\label{eq: less}
       \begin{aligned}
           |x_i^k(t+1)-x_j^k(t+1)| &= (1-2\mu)|x_i^k(t)-x_j^k(t)| \\
           & \le (1-2\mu)\left(|x_i^{*k}-x_j^{*k}|+|x_i^{*k}-x_i^k(t)|+|x_j^{*k}-x_j^k(t)|\right)\\
           & < (1-2\mu)\left(\delta+2\varepsilon\right) \le \frac{\delta(1-2\mu)}{1-\mu}\,.
       \end{aligned}
   \end{equation}
   From \eqref{eq:dw_absorbing_epsilon}, we also have
   \begin{equation}\label{eq: greater}
       \begin{aligned}
           |x_i^k(t+1)-x_j^k(t+1)| &\ge |x_i^{*k}-x_j^{*k}|-|x_i^{*k}-x_i^k(t+1)|-|x_j^{*k}-x_j^k(t+1)| \\
           & > \delta - 2\varepsilon \ge  \frac{\delta(1-2\mu)}{1-\mu}\,.
       \end{aligned}
   \end{equation}
   Equations \eqref{eq: less} and \eqref{eq: greater} cannot hold simultaneously, and therefore, there can be no interaction between nodes $i$ and $j$ on topic $k$ at any time $t\geq T$. 
   Because $i$ and $j$ interact almost surely on topic $k$ at some time $t\geq T$, we have that $\bX^*$ is almost surely an absorbing state.
\end{proof}

\subsection{Effective graphs}\label{sec:theoretical:effective_graphs}
Let $G=(V,E)$ be a graph in our topic-weighted BCMs. 
In our topic-weighted BCMs, an opinion state $\bX$ has an associated \emph{effective graph} $G_\mathrm{eff}$, which is a subgraph of $G$ that only has edges between adjacent nodes (with respect to $G$) that are receptive to each other on all topics.\footnote{Other researchers have referred to an effective graph for BCMs with 1D opinions as a ``confidence graph'' \cite{bernardo2024}, a ``communication graph'' \cite{bhattacharyya2013}, and a ``corresponding graph'' \cite{yang2014_hk_consensus}.} 
That is, an edge $(i,j) \in E$ is in the effective graph $G_\mathrm{eff}$ if and only if $d_{\max}(\bx_i,\bx_j)< c$.
For a given graph $G=(V,E)$, we define a mapping $\phi: \bR^{MK}\rightarrow \{0,1\}^{M\times M}$ as
\begin{equation}
\left[\phi(\bX)\right]_{ij} = \begin{cases}
        1 & \text{if~} (i,j)\in E \text{~and~} d_{\max}(\bx_i,\bx_j) < c \\
        0 & \text{otherwise}\,.
    \end{cases}
\end{equation}

Let $G_\mathrm{eff}(t)$ and $G_\mathrm{eff}^*$ be the effective graphs associated with 
a time-dependent opinion state $\bX(t)$ and its
limit state $\bX^* = \lim_{t\rightarrow\infty} \bX(t)$, respectively.
The effective graph $G_\mathrm{eff}(t)$ is a subgraph of $G$ with the adjacency matrix $A_{\mathrm{eff}}(t)$, where
\begin{equation} \label{eq: effective-adjacency}
A_{\mathrm{eff}}(t)=\phi(\bX(t))\,.
\end{equation}
We call a set of nodes that make up a connected component of the effective graph $G_\mathrm{eff}(t)$ an \emph{opinion cluster} at time $t$.
The effective graph $G_\mathrm{eff}^*$ associated with a limit state $\bX^*$ is a subgraph of $G$ with the adjacency matrix $A_{\mathrm{eff}}^*$, where
\begin{equation} \label{eq: effective-adjacency2}
A_{\mathrm{eff}}^*=\phi(\bX^*)\,.
\end{equation} 
From \cref{thm: HK-absorbing}, a limit state $\bX^*$ in our topic-weighted HK model is an absorbing state. 
Similarly, from \cref{thm: DW-absorbing},
a limit state $\bX^*$ in our topic-weighted DW model is almost surely an absorbing state. 
If $\bX^*$ is an absorbing state, then for each edge $(i,j)$ in $G$, we have either (1) $d_\mathrm{max}(\bx_i^*, \bx_j^*) = 0$, and therefore, the edge $(i,j)$ is in $G_\mathrm{eff}^*$; or (2) $d_\mathrm{max}(\bx_i^*, \bx_j^*) \geq c$, and therefore the edge $(i,j)$ is not in $G_\mathrm{eff}^*$. 
Recall that for each distinct limit opinion, the set of nodes that have that limit opinion is a limit opinion cluster. Therefore, the nodes that make up a connected component of $G_\mathrm{eff}^*$ belong to the same limit opinion cluster. 
However, there may not be a one-to-one correspondence between limit opinion clusters and connected components of $G_\mathrm{eff}^*$.
In particular, if two or more connected components of $G_\mathrm{eff}^*$ have the same corresponding limit opinion, then they all correspond to the same limit opinion cluster.

\begin{thm}\label{thm: HK-eff-graph}
    Let $\bX(t)$ be the opinion state of our topic-weighted HK model with update rule \eqref{eq: HK_update_rule} and initial opinion state $\bX(0)$
    and define $A_{\mathrm{eff}}(t)$ and $A_{\mathrm{eff}}^*$ as in \eqref{eq: effective-adjacency} and \eqref{eq: effective-adjacency2}. Then it holds that
    \begin{equation} 
    \lim_{t\rightarrow\infty} A_{\mathrm{eff}}(t) 
    = A_{\mathrm{eff}}^* \,.
    \end{equation}
\end{thm}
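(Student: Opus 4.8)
The plan is to show the stronger statement that $A_{\mathrm{eff}}(t)$ is eventually \emph{constant} and equal to $A_{\mathrm{eff}}^*$; since both are $\{0,1\}$-valued matrices and $G$ has only finitely many edges, proving entrywise eventual agreement immediately yields the claimed limit. I would argue one entry at a time. For a non-edge $(i,j)\notin E$, both $[A_{\mathrm{eff}}(t)]_{ij}$ and $[A_{\mathrm{eff}}^*]_{ij}$ are $0$ for every $t$, and the degenerate case $c=0$ is also immediate because then $d_{\max}<c$ is never satisfied, so $A_{\mathrm{eff}}(t)\equiv 0\equiv A_{\mathrm{eff}}^*$. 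So assume $c>0$ and $(i,j)\in E$, and split into two cases according to whether $\bx_i^*=\bx_j^*$ or $\bx_i^*\neq\bx_j^*$; this dichotomy is exhaustive precisely because $\bX^*$ is an absorbing state by \cref{thm: HK-absorbing}.

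In the first case, $\bx_i^*=\bx_j^*$, so $d_{\max}(\bx_i^*,\bx_j^*)=0<c$ and hence $[A_{\mathrm{eff}}^*]_{ij}=1$. By \cref{thm: HK_existence_steady} we have $\bX(t)\to\bX^*$, and since each $d_k$ is a sum of absolute values it is continuous, so $d_{\max}=\max_k d_k$ is continuous and $d_{\max}(\bx_i(t),\bx_j(t))\to 0$. Thus there is a time after which $d_{\max}(\bx_i(t),\bx_j(t))<c$, i.e., $[A_{\mathrm{eff}}(t)]_{ij}=1=[A_{\mathrm{eff}}^*]_{ij}$ from that time onward.

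In the second case, $\bx_i^*\neq\bx_j^*$, so there is a topic $k_0$ with $x_i^{*k_0}\neq x_j^{*k_0}$. Since $\bX^*$ is an absorbing state, $d_{k_0}(\bx_i^*,\bx_j^*)\geq c$, whence $d_{\max}(\bx_i^*,\bx_j^*)\geq c$ and $[A_{\mathrm{eff}}^*]_{ij}=0$. Now I would apply \cref{lemma:limit_opinion_difference} to this edge $(i,j)$ and topic $k_0$: it provides a time $T$ with $d_{k_0}(\bx_i(t),\bx_j(t))\geq c$ for all $t\geq T$, so $d_{\max}(\bx_i(t),\bx_j(t))\geq d_{k_0}(\bx_i(t),\bx_j(t))\geq c$ and $[A_{\mathrm{eff}}(t)]_{ij}=0=[A_{\mathrm{eff}}^*]_{ij}$ for all $t\geq T$.

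Taking the maximum of the finitely many thresholds obtained across all edges gives a single time beyond which $A_{\mathrm{eff}}(t)=A_{\mathrm{eff}}^*$ identically, which proves the limit. I do not anticipate a genuine obstacle: the entire argument is bookkeeping layered on top of \cref{lemma:limit_opinion_difference} and \cref{thm: HK-absorbing}. The only points requiring mild care are that eventual agreement must be obtained simultaneously for all edges (hence the use of finiteness of $E$), and that the receptive case relies on continuity of $d_{\max}$ rather than on the absorbing-state machinery used for the non-receptive case.
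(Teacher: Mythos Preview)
Your proposal is correct and follows essentially the same approach as the paper: both split into the cases $d_{\max}(\bx_i^*,\bx_j^*)=0$ versus $d_{\max}(\bx_i^*,\bx_j^*)\geq c$ (equivalently, $\bx_i^*=\bx_j^*$ versus $\bx_i^*\neq\bx_j^*$), handle the first case by continuity of $d_{\max}$, and handle the second by invoking \cref{lemma:limit_opinion_difference} to find a topic $k$ for which $d_k(\bx_i(t),\bx_j(t))\geq c$ eventually. One small remark: your phrase ``this dichotomy is exhaustive precisely because $\bX^*$ is an absorbing state'' is slightly off, since the dichotomy $\bx_i^*=\bx_j^*$ or $\bx_i^*\neq\bx_j^*$ is trivially exhaustive; what the absorbing-state property actually buys you (and what you correctly use in Case~2) is that $\bx_i^*\neq\bx_j^*$ forces $d_{k_0}(\bx_i^*,\bx_j^*)\geq c$.
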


\begin{proof}
    By Theorem \ref{thm: HK-absorbing}, we have that for all edges $(i,j)\in E$, 
    the maximum discordance of the corresponding limit opinions satisfies either (1) $d_{\max}(\bx_i^*,\bx_j^*) = 0$; or (2) $d_{\max}(\bx_i^*,\bx_j^*) \geq c$. We consider these two cases separately.

    When $d_{\max}(\bx_i^*,\bx_j^*)=0$, we have $\left[A_{\mathrm{eff}}^*\right]_{ij}=1$ and
    \begin{equation}
        \lim_{t\rightarrow\infty} d_{\max}(\bx_i(t),\bx_j(t)) = d_{\max}(\bx_i^*,\bx_j^*) = 0\,,
    \end{equation}
    which implies that there exists a time $T \geq 0$ such that $d_{\max}(\bx_i(t),\bx_j(t)) < c$ for all $t \geq T$. Therefore, $\left[A_{\mathrm{eff}}(t)\right]_{ij}=1$ for all $t \geq T$ and $\lim_{t\rightarrow\infty}\left[A_{\mathrm{eff}}(t)\right]_{ij}=1$.

    When $d_{\max}(\bx_i^*,\bx_j^*)=c_0\geq c$, we have $\left[A_{\mathrm{eff}}^*\right]_{ij}=0$. 
    Moreover, there must be a topic $k$ such that $\bx_i^{*k}\neq \bx_j^{*k}$. 
    By \cref{lemma:limit_opinion_difference}, there exists a time $T$ such that for all $t\geq T$, we have $d_k(\bx_i(t),\bx_j(t))\geq c$. 
    It therefore follows that $\left[A_{\mathrm{eff}}(t)\right]_{ij}=0$ for all $t\geq T$, so $\lim\limits_{t\to\infty}\left[A_{\mathrm{eff}}(t)\right]_{ij}=0$. 
\end{proof}

\begin{thm} \label{thm: DW-eff-graph}
    Let $\bX(t)$ be the opinion state of our topic-weighted DW model with update rule \eqref{eq: DW_update_rule} and initial opinion state $\bX(0)$, and define $A_{\mathrm{eff}}(t)$ and $A_{\mathrm{eff}}^*$ as in \eqref{eq: effective-adjacency} and \eqref{eq: effective-adjacency2}. 
    Then it almost surely holds that
    \begin{equation} 
    \lim_{t\rightarrow\infty}A_{\mathrm{eff}}(t) 
    = A_{\mathrm{eff}}^*\,.
    \end{equation}
\end{thm}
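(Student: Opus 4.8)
The plan is to follow the proof of \cref{thm: HK-eff-graph}, replacing the appeal to \cref{lemma:limit_opinion_difference} (which relies on the synchronous HK update) by a probabilistic argument adapted to the asynchronous DW dynamics. We work on the probability-$1$ event on which $\bX^*$ is an absorbing state (\cref{thm: DW-absorbing}); recall that $\bX(t)\to\bX^*$ by \cref{thm: DW_existence_steady}, so it suffices to show that for each edge $(i,j)\in E$ the entry $[A_{\mathrm{eff}}(t)]_{ij}$ is eventually equal to $[A_{\mathrm{eff}}^*]_{ij}$. Since $\bX^*$ is absorbing, for each edge $(i,j)$ either (1) $d_{\max}(\bx_i^*,\bx_j^*)=0$, so $[A_{\mathrm{eff}}^*]_{ij}=1$, or (2) $d_{\max}(\bx_i^*,\bx_j^*)\ge c$, so $[A_{\mathrm{eff}}^*]_{ij}=0$. (If $c=0$ no opinion is ever updated and the statement is trivial, so assume $c>0$.)

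Case (1) is handled exactly as in the HK proof: by continuity of $d_{\max}$ and $\bX(t)\to\bX^*$ we have $d_{\max}(\bx_i(t),\bx_j(t))\to 0<c$, so $[A_{\mathrm{eff}}(t)]_{ij}=1$ for all large $t$. Case (2) is the substantive one. There $\bx_i^*\neq\bx_j^*$ (because $d_{\max}(\bx_i^*,\bx_j^*)\ge c>0$), so we may fix a topic $k_0$ with $\delta:=|x_i^{*k_0}-x_j^{*k_0}|>0$. We must show that $[A_{\mathrm{eff}}(t)]_{ij}=0$, equivalently $d_{\max}(\bx_i(t),\bx_j(t))\ge c$, for all large $t$; we will argue that the event that $d_{\max}(\bx_i(t),\bx_j(t))<c$ for infinitely many $t$ has probability $0$.

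Fix $\varepsilon$ with $0<\varepsilon<\mu\delta/[2(1+\mu)]$ and a time $T$ such that $\bX(t)$ is entrywise within $\varepsilon$ of $\bX^*$ for all $t\ge T$. At any time $t$ the choice of edge and topic is independent of the opinion state, so the conditional probability (given the history up to time $t$) of choosing edge $(i,j)$ and topic $k_0$ is $1/(|E|K)$. If at some time $t\ge T$ with $d_{\max}(\bx_i(t),\bx_j(t))<c$ we selected $(i,j)$ and $k_0$, then $d_{k_0}(\bx_i(t),\bx_j(t))\le d_{\max}(\bx_i(t),\bx_j(t))<c$, so $i$ and $j$ would compromise on topic $k_0$ via \eqref{eq: DW_update_rule}, and by the triangle inequality $|x_j^{k_0}(t)-x_i^{k_0}(t)|\ge\delta-2\varepsilon$, whence $|x_i^{k_0}(t+1)-x_i^{k_0}(t)|=\mu|x_j^{k_0}(t)-x_i^{k_0}(t)|\ge\mu(\delta-2\varepsilon)>2\varepsilon$; this contradicts $|x_i^{k_0}(t+1)-x_i^{k_0}(t)|\le|x_i^{k_0}(t+1)-x_i^{*k_0}|+|x_i^{*k_0}-x_i^{k_0}(t)|<2\varepsilon$, using the same contraction estimate as in the proof of \cref{thm: DW-absorbing}. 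Hence such a selection can never be made at such a time $t$. On the other hand, on the event that $d_{\max}(\bx_i(t),\bx_j(t))<c$ for infinitely many $t$, the conditional Borel--Cantelli lemma (applied to the events ``$d_{\max}(\bx_i(t),\bx_j(t))<c$ and edge $(i,j)$ and topic $k_0$ are selected at time $t$'') shows that almost surely this selection is made at infinitely many of those times --- at least one of which is $\ge T$, a contradiction. Therefore this event has probability $0$, so $[A_{\mathrm{eff}}(t)]_{ij}=0$ for all large $t$ almost surely; together with case (1), this gives $\lim_{t\to\infty}A_{\mathrm{eff}}(t)=A_{\mathrm{eff}}^*$ almost surely.

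The main difficulty is case (2). Unlike in the HK model, there is no deterministic analogue of \cref{lemma:limit_opinion_difference} forcing $d_{k_0}(\bx_i(t),\bx_j(t))\ge c$ at \emph{every} large $t$, since between interactions the asynchronous dynamics can momentarily push the discordance just below $c$; instead one must exploit that recurrent receptiveness would almost surely eventually trigger an interaction on topic $k_0$, which is incompatible with $\bX(t)\to\bX^*$. The remaining care is in the conditional Borel--Cantelli step --- the selections must be treated as independent of the current (receptive-or-not) opinion state, and the sample-path dependence of $\delta$ and $T$ is handled by a routine decomposition over rational lower bounds for $\delta$.
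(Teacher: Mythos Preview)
Your proof is correct and follows essentially the same strategy as the paper's: the easy case $d_{\max}(\bx_i^*,\bx_j^*)=0$ is handled by continuity, and the hard case is handled by showing that recurrent receptiveness would almost surely trigger an interaction whose opinion shift is incompatible with $\bX(t)\to\bX^*$. The only organizational difference is that the paper splits your case~(2) into two subcases, disposing of $d_{\max}(\bx_i^*,\bx_j^*)>c$ by continuity alone and reserving the Borel--Cantelli argument for the borderline $d_{\max}(\bx_i^*,\bx_j^*)=c$; your unified treatment via a topic $k_0$ with $|x_i^{*k_0}-x_j^{*k_0}|=\delta>0$ covers both at once and is arguably cleaner, and your explicit invocation of the conditional Borel--Cantelli lemma (together with the remark on decomposing over rational $\delta$) makes precise a step the paper leaves informal.
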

\begin{proof}
    From Theorem \ref{thm: DW-absorbing}, we have for all edges $(i,j)\in E$, the maximum discordance $d_{\max}(\bx_i^*,\bx_j^*)$ is either $0$ or greater than or equal to $c$. We prove this theorem by considering three cases for edge $(i,j)\in E$: (1) $d_{\max}(\bx_i^*,\bx_j^*)=0$; (2) $d_{\max}(\bx_i^*,\bx_j^*)>c$; and (3) $d_{\max}(\bx_i^*,\bx_j^*)=c$.

    When $d_{\max}(\bx_i^*,\bx_j^*)=0$, we have $\left[A_{\mathrm{eff}}^*\right]_{ij}=1$ and
    \begin{equation}
        \lim_{t\rightarrow\infty} d_{\max}(\bx_i(t),\bx_j(t)) = d_{\max}(\bx_i^*,\bx_j^*) = 0\,,
    \end{equation}
    which implies that there exists time $T \geq 0$, such that $d_{\max}(\bx_i(t),\bx_j(t)) < c$ for all $t \geq T$. Therefore, $\left[A_{\mathrm{eff}}(t)\right]_{ij}=1$ for all $t \geq T$ and 
    $\lim_{t\rightarrow\infty}\left[A_{\mathrm{eff}}(t)\right]_{ij}=1$.

    When $d_{\max}(\bx_i^*,\bx_j^*)=c_0>c$, we have 
    $\left[A_\mathrm{eff}^*\right]_{ij} = 0$.
    Furthermore, we have
    \begin{equation}
        \lim_{t\rightarrow\infty} d_{\max}(\bx_i(t),\bx_j(t)) = d_{\max}(\bx_i^*,\bx_j^*) = c_0\,.
    \end{equation}
    There exists a time $T \geq 0$, such that $d_{\max}(\bx_i(t),\bx_j(t)) > c + (c_0-c)/2$ for all $t \geq T$. 
    Therefore, 
    $\lim_{t\rightarrow \infty}\left[A_\mathrm{eff}(t)\right]_{ij}
    = 0 = \left[A_\mathrm{eff}^*\right]_{ij}$.
    
    When $d_{\max}(\bx_i^*,\bx_j^*)=c$, we have $\left[A_\mathrm{eff}^*\right]_{ij}=0$. We examine the limit superior of $\left[A_\mathrm{eff}(t)\right]_{ij}$ and will show that
    \begin{equation} \label{eq: limsup}
    \limsup_{t\rightarrow\infty} \left[A_\mathrm{eff}(t)\right]_{ij} = 0 \quad \text{almost surely}\,.
    \end{equation}
    As $\left[A_\mathrm{eff}(t)\right]_{ij}$ only takes values of 0 or 1, the limit superior is equal to 0 or 1. 
    Suppose that the limit superior is equal to $1$. 
    Then there exists an increasing sequence $\{t_s\}_{s=1}^\infty$ of times that goes to infinity such that 
    \begin{equation}
    \left[A_\mathrm{eff}(t_s)\right]_{ij} = 1 \text{~for~all~} s \ge 1\,.
    \end{equation}
    Since we have an infinite number of times in the sequence $\{t_s\}_{s=1}^\infty$ and a finite number of topics, there exists a topic $k$ such that 
    \begin{equation}\label{eq: dkts}
        d_k(\bx_i(t_{s_r}),\bx_j(t_{s_r}))<c  \text{~for all times~} t_{s_r}\,,
    \end{equation} 
    where $\{t_{s_r}\}_{r=1}^\infty \subset \{t_s\}_{s=1}^\infty$ is a subsequence (note that $\lim_{r\rightarrow\infty}t_{s_r}=\infty$).
    Because $i$ and $j$ interact on topic $k$ with a positive probability, we choose the pair $(i,j)$ to interact on the topic $k$ infinitely often. 
    We use a similar idea as in Figure \ref{fig: proof_diag} to show that this scenario almost never occurs.
    For $\varepsilon=c\mu/2(1-\mu)$, there exists a time $T \geq 0$     such that $|x_i^k(t)-x_i^{*k}|<\varepsilon$ for all 
    nodes $i$, all topics $k$, and all times $t \geq T$.
    With probability $1$, we choose the pair $(i,j)$ and the topic $k$ at some time $t \geq T$ in the subsequence $\{t_{s_r}\}_{r=1}^\infty$.
    When this occurs, by \eqref{eq: dkts}, we update the opinions using \eqref{eq: DW_update_rule} and obtain
    \begin{equation}
        |x_i^k(t+1)-x_j^k(t+1)| = |x_i^k(t+1)-x_j^k(t+1)|(1-2\mu) < (c+2\varepsilon)(1-2\mu) = \frac{c(1-2\mu)}{1-\mu}\,.
    \end{equation}
    However, this contradicts
    \begin{equation}
        |x_i^k(t+1)-x_j^k(t+1)| \ge |x_i^{*k}-x_j^{*k}| - |x_i^{*k}-x_i^k(t+1)| - |x_j^{*k}-x_j^k(t+1)| > c-2\varepsilon = \frac{c(1-2\mu)}{1-\mu}\,.
    \end{equation}
    Therefore, the limit superior is equal to 0 almost surely, which proves \eqref{eq: limsup}. As a result, 
    when $d_{\max}(\bx_i^*,\bx_j^*)=c$,
    we almost surely have $\lim_{t\rightarrow\infty} \left[A_\mathrm{eff}(t)\right]_{ij}=0=\left[A_\mathrm{eff}^*\right]_{ij}$. 
    This completes the proof.
    \end{proof}

\subsection{Region of receptiveness}\label{sec:theoretical:region_of_receptiveness}

We now define and examine the region of receptiveness for opinion vectors, which is an important concept to characterize the limit opinion vectors.
Fixing an opinion vector $\bm{x}$, we define the \emph{region of receptiveness} for $\bm{x}$ as
\begin{equation}\label{eq: region_of_absorption}
    \mathcal{R}_{\bm{x}} = \{\bm{y} \in \mathbb{R}^{K}: ~d_\mathrm{max}(\bm{x},\bm{y}) < c \} \,,
\end{equation}
which, for a node $i$ with opinion $\bm{x}$, is the region of opinions that a node $j$ can have so that $i$ and $j$ are receptive to each other on all topics. 
We omit the dependence of the parameters of the BCMs (namely, $c$ and $\omega$) in our definition to simplify the notation.

We define $|\bm{x}-\bm{y}|$ as the vector with $k$th entry $|\bm{x}-\bm{y}|^k=|x^k-y^k|$. 
In order to characterize the region of receptiveness, we first define the set
\begin{equation}
\mathcal{U} = \left\{|\bm{x} - \bm{y}|: \bx,\by\in\mathbb{R}^K \text{~and~} d_{\max}(\bx,\by)<c \right\}\,.
\end{equation}
Notice that using the notation of $\mathcal{U}$, we have $\mathcal{R}_{\bx} = \{\bm{y}\in\mathbb{R}^K: |\bm{x}-\bm{y}| \in \mathcal{U}\}$. 
\begin{lemma}\label{lemma:u_x}
    Consider our topic-weighted BCMs (with update rules \eqref{eq: HK_update_rule} and \eqref{eq: DW_update_rule}). 
    Let $\bm{x}\in\R^K$ be a fixed opinion vector. 
    If $\omega\neq 0$, then the closure of $\mathcal{U}$ is a bounded convex polytope, and the set of the vertices of the closure of $\mathcal{U}$ is
    $\bigcup_{s=0}^K \mathcal{V}_s$, where 
    \begin{equation}\label{eq:vertex_feasible_thm}
        \mathcal{V}_s = \left\{\bm{u} \in \mathbb{R}^K : 
        s \textup{~entries of~} \bm{u} \textup{~are~}
        \frac{cK}{s+(K-s)\omega}
        \textup{~and the remaining~} K-s \textup{~entries are~} 0 \right\} \,.
    \end{equation}
    If $\omega=0$, then the set of vertices of $\mathcal{U}$ is
    $\mathcal{V}_0\cup\mathcal{V}_1$.
\end{lemma}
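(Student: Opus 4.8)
The plan is to characterize $\mathcal{U}$ as the solution set of a finite system of linear inequalities in the difference vector $\bm{v} := |\bm{x} - \bm{y}|$, and then identify the vertices of its closure. First I would observe that $d_k(\bm{x},\bm{y})$ depends on $\bm{x}$ and $\bm{y}$ only through $\bm{v}$, namely $d_k(\bm{x},\bm{y}) = \omega v^k + \frac{1-\omega}{K}\sum_{\ell=1}^K v^\ell$, and that a vector $\bm{v}$ with nonnegative entries arises as $|\bm{x}-\bm{y}|$ for some $\bm{y}$ (e.g.\ $\bm{y} = \bm{x} + \bm{v}$). Hence $\mathcal{U} = \{\bm{v}\in\R^K_{\geq 0} : \omega v^k + \frac{1-\omega}{K}\sum_{\ell} v^\ell < c \text{ for all } k\}$. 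When $\omega \neq 0$, this is an open polyhedron cut out by the $K$ constraints $d_k(\cdot) < c$ together with the $K$ nonnegativity constraints $v^k \geq 0$; its closure $\overline{\mathcal{U}}$ is the closed polyhedron defined by the non-strict versions. Boundedness follows because summing the $K$ discordance constraints gives $(\omega + (1-\omega))\sum_\ell v^\ell = \sum_\ell v^\ell \le cK$ (using $\sum_k v^k = K\cdot\frac{1}{K}\sum_\ell v^\ell$), wait—more carefully, $\sum_{k} d_k = \omega \sum_k v^k + (1-\omega)\sum_\ell v^\ell = \sum_\ell v^\ell \le cK$, so each $v^k \le cK$, and convexity is automatic for any polyhedron; thus $\overline{\mathcal{U}}$ is a bounded convex polytope.

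Next I would compute the vertices. A vertex of $\overline{\mathcal{U}}\subset\R^K$ is a point where $K$ linearly independent constraints among the $2K$ available ones are active. Say $s$ of the discordance constraints $d_k = c$ are active and the rest are slack; at such a point the remaining $K-s$ active constraints must be among the nonnegativity constraints $v^k = 0$. I would argue that if $d_k = c$ and $d_m = c$ are both active then $\omega v^k = \omega v^m$, so (since $\omega\neq 0$) $v^k = v^m =: a$; and the $K-s$ coordinates with $v^k = 0$ active contribute nothing. Writing $\sum_\ell v^\ell = sa$, the active equation $d_k = c$ becomes $\omega a + \frac{1-\omega}{K}sa = c$, i.e.\ $a\bigl(\omega + \frac{(1-\omega)s}{K}\bigr) = c$, giving $a = \frac{cK}{\omega K + (1-\omega)s} = \frac{cK}{s + (K-s)\omega}$, which is exactly the value in \eqref{eq:vertex_feasible_thm}. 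For this to be a genuine vertex I must check feasibility (the $K-s$ slack discordance constraints, which all equal $\frac{1-\omega}{K}sa \le c$ since $\omega \ge 0$, hold with the needed inequality) and that the chosen active set is linearly independent and uniquely determines the point (the $s$ equal-coordinate conditions plus $s$... actually the $K-s$ zero conditions plus any one $d_k = c$ suffice — I would spell out that $K-s$ zero-coordinate equations together with one active discordance equation, the latter forcing all $s$ nonzero coordinates equal by the pairwise argument, pin down the point). The degenerate cases $s=0$ (the origin) and $s=K$ (all coordinates equal $c$) are included. Finally, for $\omega = 0$ the discordance functions degenerate to $d_k(\bm{x},\bm{y}) = \frac{1}{K}\sum_\ell v^\ell$ for \emph{every} $k$, so $\mathcal{U} = \{\bm{v}\ge 0 : \sum_\ell v^\ell < cK\}$ is a (scaled) simplex whose vertices are the origin and $cK\bm{e}_k$; these are precisely $\mathcal{V}_0 \cup \mathcal{V}_1$ with the formula $\frac{cK}{s + (K-s)\cdot 0} = \frac{cK}{s}$ evaluated at $s=1$.

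The main obstacle I anticipate is the vertex-enumeration bookkeeping: carefully showing that \emph{every} vertex has the stated form (not just that the stated points are vertices), which requires arguing that at any vertex the active discordance constraints force the corresponding coordinates to be equal, that no other active-set combinations produce new vertices, and handling the linear-independence/non-degeneracy conditions cleanly. A clean way to do this is to change coordinates or to note that the polytope is invariant under the symmetric group permuting coordinates, so vertices come in $S_K$-orbits indexed by the number $s$ of nonzero (hence, by symmetry of the active constraints, equal) coordinates; then only the scalar computation of $a$ remains. I would also need a sentence confirming that $\mathcal{U}$ itself (before closure) is nonempty and full-dimensional when $\omega\neq 0$ so that "vertices of the closure" is the right notion, and noting that when $\omega=0$ the problem drops to dimension… no, it stays full-dimensional in $\R^K_{\ge 0}$ but with only one nontrivial facet, which is why only $\mathcal{V}_0\cup\mathcal{V}_1$ appears.
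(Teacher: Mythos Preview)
Your proposal is correct and follows essentially the same polyhedral vertex-enumeration approach as the paper: write $\overline{\mathcal{U}}$ as the intersection of the $K$ discordance half-spaces and the nonnegative orthant, then classify vertices by which $K$ constraints are tight. The paper resolves the bookkeeping obstacle you flag by first proving that any active set containing \emph{both} $d_k=c$ and $v^k=0$ for the same index $k$ is infeasible (so active sets must pick exactly one constraint per index), and then solves the resulting linear system via the Sherman--Morrison formula rather than your direct subtraction-and-substitution argument; your route is a bit more elementary but otherwise equivalent.
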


\begin{proof}
The set $\mathcal{U}_x$ is the feasible region characterized by the constraints
\begin{align}
    \label{eq:d_k_constraint_strict}
       \omega u^k + \frac{1-\omega}{K} \sum_{\ell=1}^K u^\ell &< c\,,  \quad \text{for~} k=1,\ldots,K \\
    \label{eq:equality_constraint}
       u^k &\geq 0\,,  \quad \text{for~} k=1,\ldots,K\,.
\end{align}
In addition, the closure of $\mathcal{U}$ is the feasible region with the constraints \eqref{eq:equality_constraint} and 
    \begin{equation}\label{eq:d_k_constraint}
       \omega u^k + \frac{1-\omega}{K} \sum_{\ell=1}^K u^\ell \leq c\,, \quad
       \text{for~} k=1,\ldots,K \,.
    \end{equation} 
    Therefore, $\mathcal{U}$ is a convex polytope.
    In addition, if we define the norm\footnote{\label{footnote:weird_norm}It is readily checked that $\|\bm{u}\|_{d,k}$ is a norm. Moreover, $\|\cdot\|_{d,k}$ induces the $k$th discordance function $d_k$ (i.e., $d_k(\bx,\by) = \|\bx-\by\|_{d,k}$).}
    \begin{equation}
    \|\bm{u}\|_{d,k} = \omega|u_k|+\frac{1-\omega}{K}\sum\limits_{k=1}^K|u_k| \,,
    \end{equation}
    we have $\|u\|_{d,k}\leq c$ for all $u\in \mathcal{U}$. 
    This implies that $\mathcal{U}$ and its closure are bounded.

    We now examine the vertices of the closure of $\mathcal{U}$. 
    Each vertex is a point in $\mathbb{R}^K$ and determined as a solution of a system of $K$ linear equations selected from the following $2K$ equations (obtained from \eqref{eq:equality_constraint} and \eqref{eq:d_k_constraint}):
    \begin{align}
    \label{eq:d_k_system}
       \omega u^k + \frac{1-\omega}{K} \sum_{\ell=1}^K u^\ell &= c\,, \quad \text{for~} k=1,\ldots,K \\
    \label{eq:equality_system}
        u^k &= 0\,, \quad \text{for~} k=1,\ldots,K\,.
    \end{align}
    The vertices are solutions of such linear systems that are
    feasible with respect to the constraints \eqref{eq:equality_constraint} and \eqref{eq:d_k_constraint}.

    We first show that the solution of a linear system that satisfies both equations \eqref{eq:d_k_system} and \eqref{eq:equality_system} for the same topic $k$ is not feasible with respect to the constraints \eqref{eq:equality_constraint} and \eqref{eq:d_k_constraint}. 
    For $\omega = 1$, if a linear system contains equations \eqref{eq:d_k_system} and \eqref{eq:equality_system} for the same topic $k$, then these two equations reduce to $u^k = c$ and $u^k = 0$, which has no solution.
    Now, consider $\omega\neq1$.
    Seeking a contradiction, suppose that both equations \eqref{eq:d_k_system} and \eqref{eq:equality_system} hold for 
    the same topic $k$.
    By combining \eqref{eq:d_k_system} and \eqref{eq:equality_system} for this topic $k$, we obtain
    \begin{equation}\label{eq:combined_equations}
        \frac{1-\omega}{K}\sum_{\ell=1}^K u^\ell = c\,.
    \end{equation}
    Because each $u^\ell$ is non-negative (because of the constraints given by \eqref{eq:equality_constraint}) and ${cK}/{(1-\omega)}$ is positive, we can choose a topic $p$ such that $u^p > 0$. 
    Because $u^k=0$ and equation \eqref{eq:combined_equations} holds, we observe that 
    \begin{equation}
        \omega u^p + \frac{1-\omega}{K} \sum_{\ell=1}^K u^\ell =\omega u^p + c\,,
    \end{equation}
    which is strictly greater than $c$ because $u^p>0$. 
    This contradicts the constraint \eqref{eq:d_k_constraint} for topic $p$ and implies that there is no feasible solution to a linear system containing both equations \eqref{eq:d_k_system} and \eqref{eq:equality_system} for the same topic $k$.

    We now show that when $\omega\neq0$, a linear system of $K$ equations consisting of precisely one of \eqref{eq:d_k_system} or \eqref{eq:equality_system} for each $k\in\{1,\ldots,K\}$ yields a feasible solution and the set of solutions of such systems is precisely the set $\bigcup_{s=0}^K\mathcal{V}_s$ (see \eqref{eq:vertex_feasible_thm}).
    Consider such a system with $s$ equations of the form \eqref{eq:d_k_system} and $K-s$ equations of the form $\eqref{eq:equality_system}$. 
    If $s=0$, we have $u^k=0$ for all $k$, which yields a feasible solution whose entries are all $0$. 
    Consider $s\geq1$.
    By reindexing variables, we consider, without loss of generality, a system with equations \eqref{eq:d_k_system} for $k=1,\ldots,s$
    and \eqref{eq:equality_system} for $k=s+1,\ldots,K$.
    Because we have $u^k=0$ for all $k\in\{s+1,\ldots,K\}$, we only need to solve for $u^1,\ldots,u^s$ such that 
    \begin{equation}\label{eq:reduced_system}
        \omega u^k + \frac{1-\omega}{K}\sum_{\ell=1}^s u^\ell = c\,, \quad \text{for~} k=1,\ldots,s\,.
    \end{equation}

    Let $\bm{u}_{\textrm{tr}}=(u^1,\ldots,u^s)^T$ denote a truncated vector of the first $s$ topics. 
    We can rewrite \eqref{eq:reduced_system} as the matrix equation $(\omega I + \bm{v}\bm{w}^T)\bm{u}_{\mathrm{tr}}=c\bm{1}$, where $\bm{v}=\frac{1-\omega}{K}\bm{1}$, $\bm{w}=\bm{1}$, and $\bm{1}$ is an $s$-dimensional vector whose entries are all $1$. 
    By the Sherman--Morrison formula\cite{bartlett1951,hagner1989}, the matrix $\omega I + \bm{v}\bm{w}^T$ is invertible with its inverse equal to  
    \begin{equation}\label{eq:inverse_Sherman}
    \left(\omega I + \bm{v}\bm{w}^T\right)^{-1} = \frac{1}{\omega} I - \frac{1-\omega}{s\omega+(K-s)\omega^2}  
    \bm{1} \bm{1}^T\,.
    \end{equation}
    By multiplying this matrix inverse to $c\bm{1}$, we obtain the entries of $\bm{u}_{\textrm{tr}}$ and have that
    \begin{equation} \label{eq: uk_entries}
    \begin{aligned}
    u^k &= \frac{cK}{s+(K-s)\omega}\,, && \text{for}~ k = 1,\ldots,s \\
    u^k &= 0\,, &&\text{for}~ k = s+1,\ldots,K\,.
    \end{aligned}
    \end{equation}
    Our solution is non-negative and therefore satisfies \eqref{eq:equality_constraint} for all $k$. 
    Using \eqref{eq: uk_entries}, we verify that 
    \begin{equation}
        \omega u^k + \frac{1-\omega}{K}\sum_{\ell=1}^K u^\ell = c\,, \quad \text{for~} k=1,\ldots,K\,,
    \end{equation}
    which implies that our solution satisfies \eqref{eq:d_k_constraint} for all $k$. 
    Therefore, our solution $\bm{u}$ is feasible, and the set of vertices of the closure of $\mathcal{U}$ is $\bigcup_{s=0}^K \mathcal{V}_s$.

    We now show that for $\omega=0$, the vertices of the closure of $\mathcal{U}$ are given by $\mathcal{V}_0\cup\mathcal{V}_1$. 
    We observe that when $\omega=0$, equation \eqref{eq:d_k_system} reduces to the same equation for all $k$. 
    Therefore, to have a system of $K$ distinct equations, we have at most $1$ equation of the form \eqref{eq:d_k_system} with the rest taking the form \eqref{eq:equality_system}. 
    If a system of equations has exactly one equation of the form \eqref{eq:d_k_system}, then the solution has $K-1$ zero entries and only one nonzero entry whose value is $cK$.
    If a system of equations has no equation of the form \eqref{eq:d_k_system}, then the entries of the solution are all $0$. 
    In both cases, the solutions are feasible.
    Therefore, the set of vertices of the closure of $\mathcal{U}$ is $\mathcal{V}_0\cup\mathcal{V}_1$.
\end{proof}

We now characterize the region of receptiveness of an opinion vector $\bm{x}$ with the following theorem.

\begin{thm}\label{thm:region_receptiveness}
For our topic-weighted BCMs (with update rules \eqref{eq: HK_update_rule} and \eqref{eq: DW_update_rule}), the region of receptiveness $\mathcal{R}_{\bm{x}}$ (see \eqref{eq: region_of_absorption}) for an opinion vector $\bm{x} \in \mathbb{R}^K$ is the interior of a bounded convex polytope. 
If $\omega\neq0$, the vertices of $\mathcal{R}_{\bm{x}}$ are precisely $\bm{y}$ such that $|\bm{x}-\bm{y}|\in\mathcal{V}_s$ 
(see \eqref{eq:vertex_feasible_thm}) 
for some $s\in\{1,\ldots,K\}$.
If $\omega=0$, the vertices of $\mathcal{R}_{\bm{x}}$ are precisely $\bm{y}$ such that $|\bm{x}-\bm{y}|\in\mathcal{V}_1$.
\end{thm}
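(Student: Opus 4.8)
The plan is to deduce the structure of $\mathcal{R}_{\bm{x}}$ from that of the auxiliary set $\mathcal{U}$ in \cref{lemma:u_x}, exploiting the fact that $\bm{y}\mapsto|\bm{x}-\bm{y}|$ is a piecewise affine isometry that ``folds'' $\mathcal{R}_{\bm{x}}$ onto $\mathcal{U}$.

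First I would establish the topological and convexity assertions. Since $d_k(\bm{x},\bm{y})=\|\bm{x}-\bm{y}\|_{d,k}$ (footnote~\ref{footnote:weird_norm}), the map $\bm{y}\mapsto d_\mathrm{max}(\bm{x},\bm{y})=\max_{1\le k\le K}\|\bm{x}-\bm{y}\|_{d,k}$ is a norm in $\bm{y}-\bm{x}$ (a maximum of finitely many seminorms that is positive-definite), so $\mathcal{R}_{\bm{x}}$ is the open $d_\mathrm{max}$-ball of radius $c$ about $\bm{x}$: it is convex, open, centrally symmetric about $\bm{x}$, bounded by equivalence of norms, and full-dimensional because it contains a Euclidean ball about $\bm{x}$. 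Writing each $|x^\ell-y^\ell|$ as $\max\{x^\ell-y^\ell,\,y^\ell-x^\ell\}$ expresses $d_k(\bm{x},\cdot)$ as a maximum of finitely many affine functions, so $\{d_k(\bm{x},\cdot)<c\}$ is a finite intersection of open half-spaces; hence $\mathcal{R}_{\bm{x}}=\bigcap_{k}\{d_k(\bm{x},\cdot)<c\}$ is a bounded, full-dimensional intersection of finitely many open half-spaces, i.e., the interior of the bounded convex polytope $P:=\overline{\mathcal{R}_{\bm{x}}}=\{\bm{y}:d_\mathrm{max}(\bm{x},\bm{y})\le c\}$.

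To locate the vertices of $P$, I would decompose $\mathbb{R}^K$ into the $2^K$ closed orthants $O_\sigma=\{\bm{y}:\sigma^k(y^k-x^k)\ge 0 \text{ for all }k\}$, $\sigma\in\{-1,1\}^K$, and set $D_\sigma=\diag(\sigma)$. On $O_\sigma$ one has $|\bm{x}-\bm{y}|=D_\sigma(\bm{y}-\bm{x})$; since $\overline{\mathcal{U}}\subseteq\mathbb{R}^K_{\ge 0}$ and $D_\sigma(\bm{y}-\bm{x})\in\mathbb{R}^K_{\ge 0}$ is exactly the condition $\bm{y}\in O_\sigma$, it follows that $P\cap O_\sigma=\{\bm{y}:D_\sigma(\bm{y}-\bm{x})\in\overline{\mathcal{U}}\}=\bm{x}+D_\sigma\overline{\mathcal{U}}$, and hence $P=\bigcup_{\sigma}(\bm{x}+D_\sigma\overline{\mathcal{U}})$ is a union of $2^K$ congruent copies of the polytope $\overline{\mathcal{U}}$ of \cref{lemma:u_x}. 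A vertex (extreme point) of $P$ lies in some piece $\bm{x}+D_\sigma\overline{\mathcal{U}}\subseteq P$ and is then extreme in that piece, hence equals $\bm{x}+D_\sigma\bm{v}$ for some vertex $\bm{v}$ of $\overline{\mathcal{U}}$; so, by \cref{lemma:u_x}, every vertex of $P$ satisfies $|\bm{x}-\bm{y}|\in\bigcup_{s=0}^{K}\mathcal{V}_s$ when $\omega\ne 0$, and $|\bm{x}-\bm{y}|\in\mathcal{V}_0\cup\mathcal{V}_1$ when $\omega=0$. The only point with $|\bm{x}-\bm{y}|\in\mathcal{V}_0$ is $\bm{y}=\bm{x}$, which is interior to $P$ and hence not a vertex; this gives the inclusion ``$\subseteq$'' in the claimed characterization.

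The reverse inclusion is the delicate step, since the fold $\bm{y}\mapsto|\bm{x}-\bm{y}|$ can turn a vertex of $\overline{\mathcal{U}}$ that lies on $\partial\mathbb{R}^K_{\ge 0}$ into a relative-interior point of a higher-dimensional face of $P$. I would prove that each $\bm{y}^\star$ with $|\bm{x}-\bm{y}^\star|\in\mathcal{V}_s$, $s\ge 1$, is a vertex by showing the normal cone of $P$ at $\bm{y}^\star$ is $K$-dimensional. After relabeling topics and reflecting coordinates, take $\bm{y}^\star=\bm{x}+\rho\sum_{k\le s}\bm{e}_k$ with $\rho=cK/(s+(K-s)\omega)$; a short computation gives $d_k(\bm{x},\bm{y}^\star)=c$ for $k\le s$ and $d_k(\bm{x},\bm{y}^\star)<c$ for $k>s$, so the active constraints are $d_k(\bm{x},\cdot)\le c$ for $k\le s$, each of which is non-smooth at $\bm{y}^\star$ in the ``kinked'' coordinates $s+1,\dots,K$ (where $y^{\star\ell}=x^\ell$). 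The normal cone is generated by the gradients of the active linear pieces, namely by the gradients of $d_k(\bm{x},\cdot)$ for $k\le s$ taken over all sign choices of coordinates $s+1,\dots,K$; taking the difference of two such gradients that differ only in one flipped coordinate $\ell>s$ yields a nonzero multiple of $\bm{e}_\ell$, and these, together with the $s$ gradients indexed by $k\le s$ (linearly independent by the same computation as in \cref{lemma:u_x}), span $\mathbb{R}^K$. A finitely generated convex cone whose linear span is $\mathbb{R}^K$ is full-dimensional, so $\bm{y}^\star$ is a $0$-face, i.e., a vertex, of $P$; the case $\omega=0$ is identical with $\mathcal{V}_1$ in place of $\bigcup_{s\ge 1}\mathcal{V}_s$. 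I expect this final non-degeneracy check --- reconciling the maxima defining $d_\mathrm{max}$ with the coordinate kinks of the discordance functions --- to be the principal obstacle.
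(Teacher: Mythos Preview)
Your proposal is correct and follows essentially the same route as the paper: both realize $\overline{\mathcal{R}_{\bm{x}}}$ as the translate by $\bm{x}$ of the union of the $2^K$ sign-reflected copies of $\overline{\mathcal{U}}$, use \cref{lemma:u_x} to enumerate the candidate vertices, and discard $\mathcal{V}_0$ because $\bm{y}=\bm{x}$ is interior. For the reverse inclusion the paper exhibits, for each candidate with $|\bm{x}-\bm{y}|\in\mathcal{V}_s$ ($s\ge1$), a system of $K$ active equations of the form \eqref{eq:equation_sign_flipped} obtained by flipping signs in the zero coordinates, whereas you argue that the normal cone is full-dimensional; these are equivalent verifications, and your gradient-difference computation is exactly what makes the paper's chosen $K$ equations linearly independent.
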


\begin{proof}
    Fix an opinion vector $\bm{x}\in\R^K$. 
    For an opinion vector $\bm{y}\in\R^K$, let $\bm{u}=|\bm{x}-\bm{y}|$. 
    We have that $\bm{y}\in \mathcal{R}_{\bm{x}}$ if and only if $d_k(\bm{x},\bm{y})<c$ for all $k\in\{1,\ldots,K\}$. 

    Define $\tilde{\mathcal{U}}=\{\vec{\sigma}\odot \vec{u}: \vec{\sigma}\in\{1,-1\}^K, \vec{u}\in \mathcal{U}\}$, where $\odot$ denotes the entry-wise vector product and the entries of $\vec{\sigma}$ are either $1$ or $-1$. In other words, we obtain $\tilde{\vec{u}} \in \tilde{\mathcal{U}}$ by flipping the sign of one or multiple entries of $\vec{u}\in \mathcal{U}$.
    Using this notation, we have $\by\in \mathcal{R}_{\bx}$ if and only if there is some $\tilde{\bm{u}}\in \tilde{\mathcal{U}}$ such that 
    $\bm{y} = \bm{x} + \tilde{\bm{u}}$.
    Because $\mathcal{R}_{\bx}$ is a translation of $\tilde{\mathcal{U}}$, it is enough to show that $\tilde{\mathcal{U}}$ is the interior of a bounded convex polytope. 
    We note that $\tilde{\bu}\in\tilde{\mathcal{U}}$ if and only if $\tilde{\bu}$ satisfies
    \begin{equation}\label{eq:constraints_U_k_strict}
        \omega \sigma_k^S \tilde{u}^k + \frac{(1-\omega)}{K}\sum_{\ell=1}^K\sigma_\ell^S \tilde{u}^\ell < c
    \end{equation}
        for all topics $k = 1,\ldots,K$ and all subsets $S\subseteq\{1,\ldots,K\}$, where
    \begin{equation}
    \sigma_k^S = \begin{cases}
        -1 & \text{if~} k \in S\,, \\ 
        1 & \text{if~} k\notin S\,.
    \end{cases}
    \end{equation}
    
    We note that $\tilde{\mathcal{U}}$ is the interior of its closure $\overline{\tilde{\mathcal{U}}}$, 
    which is a convex polytope characterized by the linear constraints 
    \begin{equation}\label{eq:constraints_U_k}
        \omega \sigma_k^S \tilde{u}^k + \frac{(1-\omega)}{K}\sum_{\ell=1}^K\sigma_\ell^S \tilde{u}^\ell \leq c
    \end{equation}
    for all topics $k = 1,\ldots,K$ and all subsets $S\subseteq \{1,\ldots,K\}$.
    We also note that $\overline{\tilde{\mathcal{U}}}$ is bounded because the points it comprises satisfy $\|\bm{u}\|_{d,k}<c$ (see \cref{footnote:weird_norm}). 
    Therefore, $\tilde{\mathcal{U}}$ is the interior of a bounded convex polytope, and its vertices satisfy a linear system comprised of equations of the form \begin{equation}\label{eq:equation_sign_flipped}
        \omega \sigma_k^S \tilde{u}^k + \frac{(1-\omega)}{K}\sum_{\ell=1}^K\sigma_\ell^S \tilde{u}^\ell = c\,.
    \end{equation} 
    
    For a subset $S\subseteq \{1,\ldots,K\}$, we define the vector $\bm{1}_S$ as
    \begin{equation}
        [\bm{1}_S]^k = \begin{cases}
            -1 & k\in S\,, \\ 
            1 & \textrm{otherwise}\,.
        \end{cases}
    \end{equation} 
    We can express
    \begin{equation}
    \overline{\tilde{\mathcal{U}}} = \bigcup_{S\subseteq\{1,\ldots,K\}} \mathcal{U}^S\,,
    \end{equation}
    where 
    \begin{equation}
        \mathcal{U}^S=\bm{1}_S\odot\mathcal{U} = \{\bm{1}_S\odot\bm{u}: \bm{u}\in \mathcal{U}\}\,.
    \end{equation}
    Therefore, every vertex of $\overline{\tilde{\mathcal{U}}}$ is of the form $\bm{1}_S\odot \bm{v}$, where $S\subseteq \{1,\ldots,K\}$ and $\bm{v}$ is a vertex of $\mathcal{U}$. 

    Suppose $\bm{u}=\bm{1}_S\odot \bm{v}$, where $S\subseteq \{1,\ldots,K\}$ and $\bm{v}$ is a vertex of $\mathcal{U}$.
    By \cref{lemma:u_x}, we must have $\bm{v}\in \mathcal{V}_s$ for $s\in \{0,1,\ldots, K\}$.
    We claim that $\bm{u}$ is a vertex if and only if $s>0$. 
    Suppose $s>0$. 
    We note that $\bm{v}$ must be the solution of a linear system with $s$ equations of the form \eqref{eq:d_k_system} and $K-s$ equations of the form \eqref{eq:equality_system}.
    Without loss of generality, we can assume that $\bm{v}$ is a solution of the linear system consisting of \eqref{eq:d_k_system} for $k=1,\ldots,s$
    and \eqref{eq:equality_system} for $k=s+1,\ldots,K$.
    This implies that $\bm{u}$ is a solution to the system consisting of equations 
    \eqref{eq:equation_sign_flipped}
    for $k=1,\ldots,s$
    and \eqref{eq:equality_system} for $k=s+1,\ldots, K$.
    However, it is readily checked that this system is equivalent to the system consisting of equations 
    \eqref{eq:equation_sign_flipped}
    for $k=1,\ldots,s$ and equations
    \begin{equation}\label{eq:equation_sign_flipped_1}
        \omega \tau_1^k\sigma_k^S \tilde{u}^k + \frac{(1-\omega)}{K}\sum_{\ell=1}^K\tau_1^\ell\sigma_\ell^S \tilde{u}^\ell = c\,,
    \end{equation}
    for $k=s+1,\ldots, K$,
    where 
    \begin{equation}
        \tau_1^\ell = \begin{cases}
            -1 & \ell = 1 \,,
            \\ 1 & \text{otherwise}\,.
        \end{cases}
    \end{equation}
    (That is, \eqref{eq:equation_sign_flipped_1} is a special case of \eqref{eq:equation_sign_flipped} but with the sign on $u^1$ flipped.) 
    Therefore, $\bm{u}$ can be realized as the solution to a linear system that is comprised only of equations of the form \eqref{eq:equation_sign_flipped}, so it follows that $\bm{u}$ is a vertex of $\overline{\tilde{\mathcal{U}}}$. 
    Now, suppose $s=0$. Then $\bm{u}=0$, which does not satisfy any equation of the form \eqref{eq:equation_sign_flipped}. Therefore, $\bm{u}=0$ is not a vertex of $\overline{\tilde{\mathcal{U}}}$.
\end{proof}

We now provide an example of what the polytope $\mathcal{R}_{\bm{x}}$ looks like when $K = 2$. When $s = 2$, the entries of $\vec{u} \in \mathcal{V}_2$ (see \eqref{eq:vertex_feasible_thm}) are equal to $c$. 
This gives four vertices: $\bm{x} \pm (c, c)$ and $\bm{x} \pm (c, -c)$. 
When $s = 1$, the nonzero entries of 
$\vec{u} \in \mathcal{V}_1$ are equal to ${2c}/(\omega + 1)$. This gives four vertices at $\bm{x} \pm (0,{2c}/(\omega + 1))$ and $\bm{x} \pm ( 2c/(\omega + 1), 0)$. 
When $K = 2$, the region of receptiveness for $\bm{x}$ consists of a polygon with eight vertices. 
In this example, $\mathcal{R}_{\bm{x}}$ has 8 vertices and they are
\begin{equation*}
  \vec{x}\pm (c, c)\,, \quad  \vec{x}\pm (c, -c)\,, \quad
      \vec{x}\pm \left(0,\frac{2c}{\omega + 1}\right)\,, \quad \vec{x}\pm \left(\frac{2c}{\omega + 1}, 0\right)
\,.
\end{equation*}
In \cref{fig:region_of_absorption}, we show the region of receptiveness in 2D for a fixed value of $c$ and various values of $\omega$. When $\omega = 1$ (i.e., as in the baseline BCMs), the region of receptiveness is a square. As we decrease $\omega$, the area of the region of receptiveness increases; a node has a larger region of opinion space that a neighbor can be in for them to be mutually receptive on all topics.

\begin{figure}[ht]
  \centering
  \includegraphics[width=0.5\textwidth]{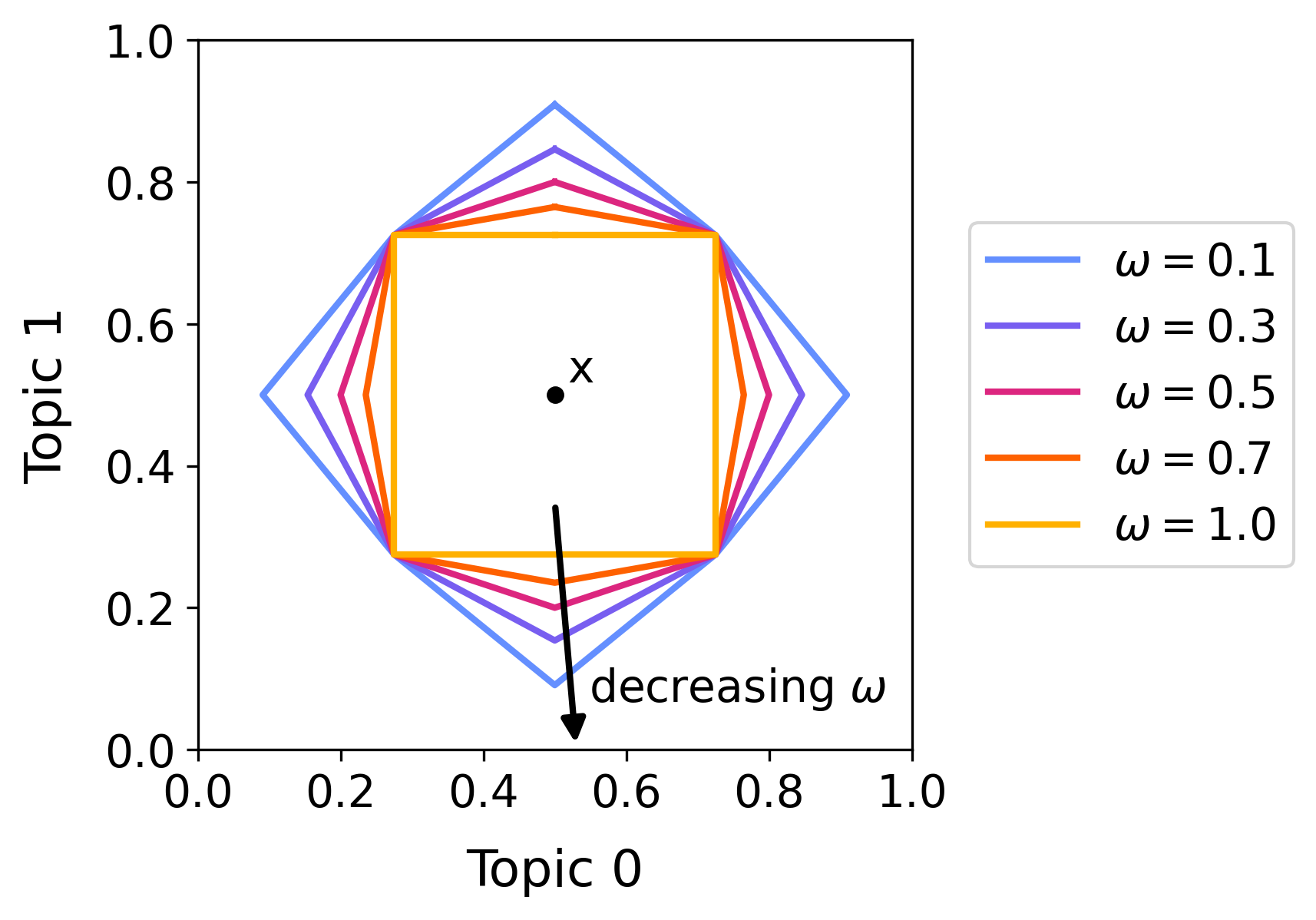}
  \caption[The regions of receptiveness in 2D for $c = 0.225$ and various values of $\omega$]{The regions of receptiveness for the opinion vector $\vec{x}=(0.5, 0.5)$ for $c = 0.225$ and various $\omega$ values. Each colored polygon represents the region of receptiveness for a different value of $\omega$. As we decrease $\omega$, the region of receptiveness grows.
  }
  \label{fig:region_of_absorption} 
\end{figure}

For the case $K = 3$, the region of receptiveness for the opinion vector $\bm{x}$ consists of a polygon with 26 vertices of the form $\bm{x} + \bm{u}$.
The $s = 3$ case yields 8 vertices, where each entry of $\bm{u}$ is $c$ or $-c$. 
The $s = 2$ case yields 12 vertices, where one entry of $\vec{u}$ is $0$ and the remaining two entries are ${3c}/({\omega + 2})$ or $-{3c}/({\omega + 2})$. 
For the case $s = 1$, it yields 6 vertices, where two entries of $\vec{u}$ are $0$ and the remaining entry is ${3c}/({2\omega + 1})$ or $-{3c}/({2\omega + 1})$. 

\section{Details of our numerical simulations}\label{sec:sim_details}

In this section, we discuss the setup for our numerical simulations of our topic-weighted BCMs. For our simulations, we consider opinions with $K = 2$ topics that lie in the region $[0,1] \times [0,1]$.
For this bounded opinion space, we correspondingly examine confidence bounds\footnote{
When $c = 0$, adjacent nodes $i$ and $j$ are never receptive to each other on any topic $k$.
When $c = 1$, adjacent nodes $i$ and $j$ are always receptive to each other on any topic $k$.
We do not examine these values of $c$.}
$c \in (0,1)$.

We simulate our topic-weighted BCMs with values of $\omega \in [0.1, 1]$. 
As we discussed in \cref{sec:topic_weighted_models}, when $\omega = 1$, the discordance function $d_k$ (see \eqref{eq: d_k}) depends only on the opinions on topic $k$. 
Our topic-weighted BCMs with $\omega = 1$ are the baseline BCMs.
When $\omega = 0$, all discordance functions $d_k$ are the same\footnote{For $\omega = 0$, when adjacent nodes $i$ and $j$ interact on any topic, they compromise their opinions on that topic if their mean opinion difference $\frac{1}{K} \sum_{\ell = 1}^K |x_i^\ell - x_j^\ell|$ is less than $c$.}; in our simulations, we do not consider this case.

In all our simulations of our topic-weighted HK model \eqref{eq: HK_update_rule}, we use the compromise parameter $\mu = 1$. 
At each time $t$, each node $i$ updates its opinion to be the average of the opinions of nodes in $\N_i^k(t)$ (see \eqref{eq: HK_receptive_set}).
In all our simulations of our topic-weighted DW model \eqref{eq: DW_update_rule}, we use the compromise parameter $\mu = 0.5$. 
When a pair of adjacent nodes are receptive to each other on a topic that they interact on, the nodes update their opinions on that topic to the average of their opinions on that topic.

\subsection{A stopping criterion for simulations} \label{sec:sim_details:stop} 

Our topic-weighted BCMs can take a long time to reach their steady states.
We stop our simulations once we have a reasonable approximation of the limit opinion clusters by applying the following stopping criterion.
Consider the effective graph $G_\mathrm{eff}(t)$ and suppose it has $R$ opinion clusters (i.e., sets of nodes that make up the connected components of $G_\mathrm{eff}(t)$; see \cref{sec:theoretical:effective_graphs}), which we denote $S_1(t), \dots, S_R(t)$.
We terminate a simulation if the maximum difference in opinions within each opinion cluster for each topic is less than a tolerance value $\texttt{tol}$. That is 
\begin{equation}  \label{eq: stopping_criterion}
  \max_{1\leq r \leq R} \,
  \max_{i,j\in S_r(t)} \, \max_{1 \leq k \leq K} ~ |x_i^k(t) - x_j^k(t)|<\texttt{tol} \,.
\end{equation}
We call the time step that a simulation reaches
this stopping criterion the \emph{convergence time} of the simulation; we denote this time by $T_f$. We call the opinion clusters at time $T_f$ the \emph{final opinion clusters}.
For our topic-weighted BCMs, if two opinion vectors $\bm{x}_i$ and $\bm{x}_j$ satisfy $|x_i^k - x_j^k| < c$ for all topics $k$, then $d_\mathrm{max}(\bm{x}_i, \bm{x}_j) < c$.
By choosing a tolerance value that is less than $c$, we have that each pair of nodes $i$ and $j$ in the same final opinion cluster satisfies $d_\mathrm{max}(\bm{x}_i(T_f), \bm{x}_j(T_f)) < c$. Therefore, each pair of nodes in the same final opinion cluster is receptive to each other on all topics at time $T_f$.
For our topic-weighted HK model, we use a tolerance value of $1 \times 10^{-6}$; for our topic-weighted DW model, due to longer simulation times, we use a tolerance value of $0.01$. Both tolerance values are less than the smallest confidence bound that we examine in our simulations (namely, $c = 0.025$). 

We choose the stopping criterion in \eqref{eq: stopping_criterion} based on the convergence results discussed in \cref{sec:theoretical}; we now briefly summarize the relevant results.
The final opinion clusters in our simulations approximate the opinion clusters of the effective graph 
$G_\mathrm{eff}^*$ that corresponds to a limit $\bX^*$ in our BCMs (see \cref{sec:theoretical:effective_graphs}).
Recall (see \cref{thm: HK-eff-graph}) that for our topic-weighted HK model, the effective graph $G_\mathrm{eff}(t)$ converges to $G_\mathrm{eff}^*$ in the limit as time goes to infinity.
For our topic-weighted DW model, the effective graph $G_\mathrm{eff}(t)$ converges to $G_\mathrm{eff}^*$ almost surely in the limit as time goes to infinity
(see \cref{thm: DW-eff-graph}). 
Furthermore, $\bX^*$ is an absorbing state for our topic-weighted HK model (see \cref{thm: HK-absorbing}) and is almost surely an absorbing state for our topic-weighted DW model (see \cref{thm: DW-absorbing}).
When $\bX^*$ is an absorbing state, we have that 
(1) nodes $i$ and $j$ in the same opinion cluster of $G_\mathrm{eff}^*$ satisfy $d_\mathrm{max}(\bx_i^*, \bx_j^*) = 0$; and (2) adjacent nodes $i$ and $j$ in different opinion clusters of $G_\mathrm{eff}^*$ satisfy $d_\mathrm{max}(\bx_i^*, \bx_j^*) \geq c$.
When we reach our stopping criterion in a simulation, nodes $i$ and $j$ in the same final opinion cluster satisfy $d_\mathrm{max}(\bx_i(T_f), \bx_j(T_f)) < \texttt{tol}$. 
Adjacent nodes $i$ and $j$ that are in different final opinion clusters satisfy $d_\mathrm{max}(\bx_i(T_f), \bx_j(T_f)) \geq c$.
For a small tolerance value, the final opinion clusters in our simulations are a reasonable approximation of the opinion clusters of $G_\mathrm{eff}^*$. 
Through interactions with nodes in other final opinion clusters, it is possible for a pair of adjacent nodes in the same final opinion cluster to end up in different opinion clusters of $G_\mathrm{eff}^*$. It is also possible for a pair of adjacent nodes in different final opinion clusters to end up in the same opinion cluster of $G_\mathrm{eff}^*$. However, as we decrease our tolerance value in \eqref{eq: stopping_criterion}, the possibility of such scenarios decreases. 

For our topic-weighted DW model, the selection of a pair of nodes to interact and a topic on which to interact at each time step is random.
For each simulation of our topic-weighted DW model with $\omega \in (0, 1)$, we simultaneously run a control simulation.  
In a control simulation, we use the topic weight $\omega = 1$ (to give the baseline DW model) and use the same confidence bound $c$, set of initial opinions, and sequence of edges and topics for interactions as in the corresponding simulation of our topic-weighted DW model. 
If a simulation of our topic-weighted DW model reaches the stopping criterion (see \eqref{eq: stopping_criterion}) before the corresponding control simulation, we stop the topic-weighted DW simulation and continue the corresponding control simulation. Similarly, if a control simulation reaches the stopping criterion before the topic-weighted DW simulation, we continue the corresponding topic-weighted DW simulation.

\subsection{Characterizing opinion fragmentation
in our simulations}\label{sec:sim_details:characterizing_opinions}

Researchers have introduced various notions of opinion polarization and fragmentation~\cite{bramson2016} and have developed multiple methods to quantify them~\cite{bramson2016, musco2021, adams2022}. 
Intuitively, one simple way to quantify opinion fragmentation is to look at the number of opinion clusters, with a larger number of opinion clusters indicating more opinion fragmentation. However, this does not take into account the cluster sizes (i.e., the number of nodes in a cluster).
Suppose that there are two opinion clusters. If the two opinion clusters are the same size, then one can view the opinions in the system as being more fragmented than if one opinion cluster has a large majority of the nodes and the other opinion cluster has a small minority.
In our numerical simulations, there can be considerable variation in the sizes of the opinion clusters.
Therefore, it is important to consider both the numbers of opinion clusters and their sizes to quantify opinion fragmentation.

To quantify opinion fragmentation in our numerical simulations, we use a modification of the order parameter $Q$ defined by Wang et al.~\cite{wang2017_order_param}. We calculate\footnote{Wang et al.~\cite{wang2017_order_param} used the inequality $|x_i - x_j| \leq c$ in their calculation of $Q$ for the standard HK model. However, we use a strict inequality to be consistent with the strict inequality in the update rules \eqref{eq: HK_update_rule} and \eqref{eq: DW_update_rule}.} the order parameter $Q$ as 
\begin{equation}\label{eq:order_param}
  Q(t) = \frac{1}{M^2} \sum\limits_{i=1}^M \sum\limits_{j=1}^M \boldsymbol{1}_{d_\mathrm{max}(\bm{x}_i(t), \bm{x}_j(t)) < c} \,,
\end{equation}
where $\boldsymbol{1}_{d_\mathrm{max}(\bm{x}_i(t), \bm{x}_j(t)) < c}$ is an indicator function that equals 1 when $d_\mathrm{max}(\bm{x}_i(t), \bm{x}_j(t)) < c$ and equals $0$ otherwise. The order parameter $Q$ indicates the fraction of pairs of nodes that are receptive to each other on all topics. When $Q = 1$, all nodes are receptive to each other on all topics, and we say that nodes reach a \emph{consensus state} in the infinite-time limit, which is an opinion state in which all opinion vectors are equal. When $Q < 1$, we say that the nodes are in a \emph{fragmented state}. We say that smaller values of $Q$ indicate more \emph{opinion fragmentation}. For a complete graph, the order parameter gives the fraction of edges $(i,j) \in E$ that are in the effective graph. In a graph that is not complete, some nodes are not adjacent to each other, so the order parameter is more difficult to relate to the effective graph.

One can equivalently calculate the order parameter $Q(t)$ as a normalized square sum of the sizes of the opinion clusters. Suppose that at time $t$ there are $R$ opinion clusters, which we denote as $S_r(t)$ for $r \in \{1, \ldots, R\}$. One can calculate $Q(t)$ as
\begin{equation}\label{eq:order_param_squares}
    Q(t) = \frac{1}{M^2} \sum\limits_{r=1}^R |S_r(t)|^2 \,.
\end{equation}

For each of our simulations, we characterize opinions by calculating the \emph{final order parameter} $Q(T_f)$ at the time at which our simulation reaches the stopping criterion. 
Let $Q^*$ denote the $\emph{limit order parameter}$, which is the order parameter (see \eqref{eq:order_param}) calculated with the limit opinion vectors $\bm{x}_i^*$ for each node $i$. 
If $Q(T_f) = 1$, then there is a single final opinion cluster and all pairs of adjacent nodes are receptive to each other on all topics. In this scenario, nodes will compromise their opinions over time and eventually reach a consensus state, yielding a limit order parameter $Q^* = 1$. 
If $Q(T_f) < 1$, then there are at least two final opinion clusters. In this scenario, the final opinion clusters need not be the same as the opinion clusters of $G_\mathrm{eff}^*$, and it is possible that $Q(T_f) \neq  Q^*$.
However, for a small tolerance value, $Q(T_f)$ is a reasonable approximation of $Q^*$.

\subsection{Initial opinion distributions} \label{sec:sim_details:init_opinion_dist} 

In our simulations of our topic-weighted BCMs, we consider random initial opinions drawn from various probability distributions.
In studies on BCMs, it is relatively uncommon to consider initial opinions that arise from distributions other than uniform distributions (see discussion in \cref{sec:intro}).
We simulate our models with various initial opinion distributions to examine the effect of different initial opinion distributions on the opinion dynamics of the two topics.

To study our topic-weighted BCMs with independent initial opinions, we draw the initial opinions of the two topics independently and uniformly at random from $[0,1]$. This initial opinion distribution is a simple extension to the 1D standard HK and DW models (see \cref{sec:1D_HK,sec:1D_DW}).

To study our topic-weighted BCMs with interdependent initial opinions, we draw the initial opinions of each node independently from a wedge distribution, 
which has probability density function
\begin{equation}\label{eq:wedge_dist}
  f(x^1,x^2) = \delta\left(|2x^1-1|+x^2-1\right)
\end{equation}
if $x^1,x^2\in [0,1]$ and $0$ otherwise,
where $\delta$ is the Dirac delta function.
We show this wedge distribution in \cref{fig:wedge_dist}. 
The marginal opinion distributions for $x^1$ and $x^2$ (the opinions on topics 1 and 2, respectively) are uniform on $[0,1]$, which is identical to the independent 
case. However, $x^1$ and $x^2$ are highly correlated. To sample from this wedge distribution $f(x^1,x^2)$, we draw $x^1$ first uniformly at random from $[0,1]$ and then compute $x^2$ using $x^2=1-|2x^1-1|$. 

\begin{figure}[ht]
  \centering
  \includegraphics[width=0.28\textwidth]{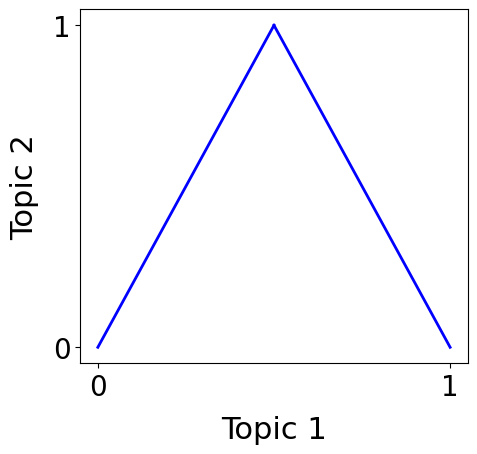}
  \caption{The probability density function $f(x^1,x^2)$ for the wedge distribution for the initial opinions of two topics. The probability density function $f$ has nontrivial mass on the blue curve.}
  \label{fig:wedge_dist} 
\end{figure}

We also examine our topic-weighted BCMs on truncated multivariate Gaussian distributions, for which we can control the correlation between the initial opinions of different topics.
Let $G(\sigma, \rho)$
denote a 2D truncated Gaussian distribution with mean $(0.5, 0.5)$ and the probability density function
\begin{align}
  f(x^1, x^2) \propto \exp \left\{ - \frac{1}{2\sigma^2(1-\rho^2)} \left[
    (x^1 - 0.5)^2 - 2\rho (x^1 - 0.5)(x^2 - 0.5) + (x^2 - 0.5)^2
  \right]\right\} \label{eq:gaussian_dist}
\end{align}
if $x^1,x^2\in [0,1]$ and $0$ otherwise.
Here, $x^1$ and $x^2$ are the opinions on topics 1 and 2, respectively, 
$\sigma$ is the standard deviation of the opinions on topics 1 and 2, and $\rho$ is the Pearson correlation between the opinions on topics 1 and 2.
In particular, we consider two truncated Gaussian distributions in our simulations: (1) $G(0.22, 0)$, where $x^1$ and $x^2$ have the same variance and are independent; and (2) $G(0.22, 0.8)$, where $x^1$ and $x^2$ have the same variance but are correlated. 
To draw opinions randomly from a truncated Gaussian distribution, we first draw randomly from a non-truncated Gaussian distribution. If the sample does not lie in $[0,1] \times [0,1]$, we reject it and resample using the same procedure. 
For our choice of $\sigma = 0.22$, we expect that 95\% of the random samples lie within the opinion space $[0,1] \times [0,1]$, which makes this simple rejection sampling method effective. 

\begin{figure}[ht]
  \centering
  \includegraphics[width=0.72\textwidth]{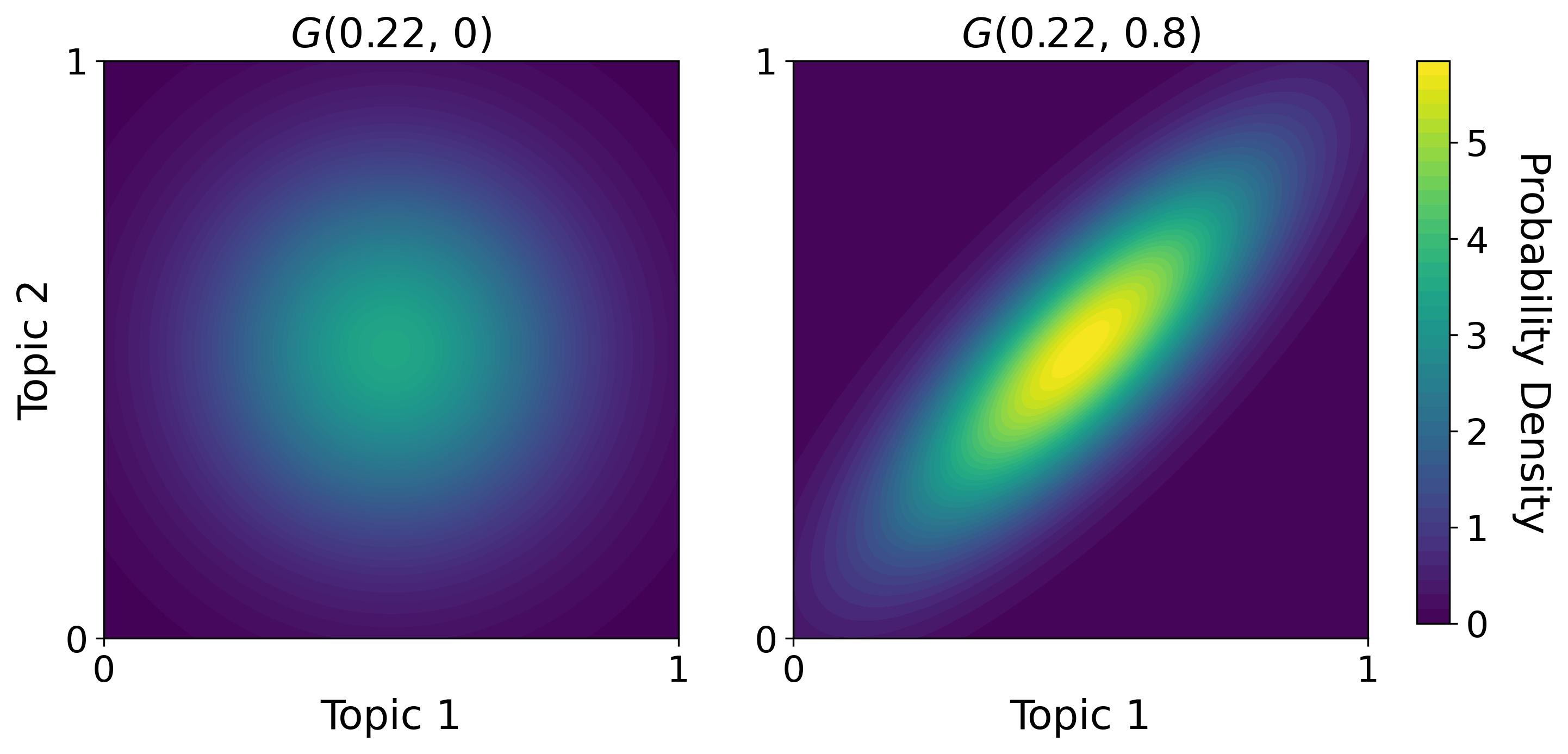}
  \caption{The probability density functions for the truncated Gaussian $G(0.22, 0)$ and $G(0.22, 0.8)$ distributions of initial opinions in our simulations of our topic-weighted BCMs.
  }
  \label{fig:gaussian_dist} 
\end{figure}

\subsection{Simulation specifications for models on complete graphs} \label{sec:sim_details:complete} 

We first simulate our topic-weighted BCMs on complete graphs. We simulate our topic-weighted HK model on a 2000-node complete graph. Our topic-weighted DW model is computationally more expensive than our topic-weighted HK model; we simulate it on a 500-node complete graph.
In our simulations, we examine uniform, wedge, $G(0.22, 0)$, and $G(0.22, 0.8)$ initial opinions.

We consider values of the confidence bound $c$ that allow us to examine the transition between opinion consensus and fragmentation.
We consider $c \in [0.025, 0.3]$ and $c \in [0.05, 0.4]$ for our topic-weighted HK simulations and our topic-weighted DW simulations, respectively. For both topic-weighted BCMs, we use $\omega \in [0.1, 1]$ in our simulations.

In our simulations, both the initial opinions and the selection of interacting node pairs and topics are random. To mitigate the impact of randomness, we repeat the simulations and compute averaged quantities.
We randomly generate 25 sets of initial opinions for each initial opinion distribution. We reuse those sets of initial opinions in our simulations 
with different values of the BCM parameters (namely, $c$ and $\omega$).

\subsection{Simulation specifications for our topic-weighted HK model on stochastic-block-model (SBM) graphs} \label{sec:sim_details:sbm} 

We use numerical simulations to investigate how the network community structure and community-dependent initial opinions affect opinion evolution. 
To do this, we simulate
our topic-weighted HK model on undirected two-community stochastic-block-model (SBM) networks \cite{holland1983_sbm,newman2018_book}. 
We do not run analogous simulations for our topic-weighted DW model because of its long simulation times.

We consider SBMs with a $2 \times 2$ block structure in which each block corresponds to an Erd\H{o}s--R\'enyi (ER) random graph $\mathcal{G}(N,p)$.
The ER graph $\mathcal{G}(N,p)$ has $N$ nodes and for each pair of distinct nodes, the edge between them exists in the graph with probability $p$, independently from every other edge \cite{gilbert1959_gnp}. 
When $p = 1$, every possible edge is in the graph and $\mathcal{G}(N,p)$ yields a complete graph of $N$ nodes.
To generate our SBMs with two communities, we partition the set of nodes into two subsets, \textrm{A} and \textrm{B}, with each consisting of half of the nodes in the network.
We define a symmetric edge-probability matrix
\begin{equation}
    P = \begin{bmatrix}
    P_\textrm{AA} & P_\textrm{AB} \\ P_\textrm{AB} & P_\textrm{BB}
    \end{bmatrix} \,,
\end{equation}
where $P_\textrm{AA}$ and $P_\textrm{BB}$ are the probabilities of an edge between two nodes within the sets \textrm{A} and \textrm{B}, respectively, and $P_\textrm{AB}$ is the probability of an edge between a node in set \textrm{A} and a node in set \textrm{B}. In our simulations, we consider 2000-node SBM graphs with $P_\textrm{AA} = P_\textrm{BB} = 1$ and $P_\textrm{AB} = 0.1$. These SBM graphs consist of two 1000-node cliques\footnote{A clique is a subgraph in which every pair of nodes is adjacent to each other.}, and a probability $P_\textrm{AB} = 0.1$ of an edge between two nodes in different cliques. 

For our simulations of our topic-weighted HK model on two-community SBM graphs, we consider both community-dependent and community-independent initial opinions. 
We draw the initial opinions from the uniform, wedge, and truncated Gaussian $G(0.22, 0)$ distributions described in \cref{sec:sim_details:init_opinion_dist}.

For our simulations with community-dependent initial opinions, nodes in community $\textrm{A}$ have initial opinions that lie in $[0,0.5] \times [0,1]$, and nodes in the community $\textrm{B}$ have initial opinions that lie in $[0.5, 1] \times [0,1]$. Thus, the two communities initially have split views on topic 1, but not on topic 2.
For the community-dependent uniform and wedge distributions, we select the initial opinions for topic 1 uniformly at random from $[0, 0.5]$ for nodes in community $\textrm{A}$ and from $[0.5, 1]$ for nodes in community $\textrm{B}$.
We then determine the corresponding opinions for topic 2 by uniformly random selection from $[0,1]$ for the uniform distribution and by computing $x^2=1-|2x^1-1|$ for the wedge distribution.
For the truncated Gaussian $G(0.22, 0.8)$ distribution, we start by drawing 2000 opinions, one for each node in our SBM graph.
We reject individual opinions and resample until we have exactly 1000 opinions that lie in $[0, 0.5] \times [0,1]$ for nodes in community $\textrm{A}$ and 1000 opinions that lie in $[0.5,1] \times [0,1]$ for nodes in community $\textrm{B}$.

For each of our simulations of our topic-weighted HK model with community-dependent initial opinions, we also run a corresponding simulation with community-independent initial opinions. This corresponding simulation uses the same two-community SBM graph, set of initial opinions, and BCM parameters. However, we shuffle the set of initial opinions and randomly assign them to nodes in the graph.
These simulations allow us to investigate the effect of having community-dependent initial opinions without introducing effects from varying the graph or set of initial opinions.

We randomly generate 5 two-community SBM graphs and 10 sets of initial opinions for each random graph. 
We reuse these 50 combinations of the random graph and set of initial opinions in our simulations with different values of the BCM parameters (namely, $c$ and $\omega$).
We use BCM parameter values $c \in [0.025, 0.5]$ and $\omega \in [0.1, 1]$ in our simulations.

\section{Results of our numerical simulations}\label{sec:numerical_results}

We now discuss the results\footnote{Our figures and simulation code are available at \url{https://gitlab.com/graceli1/topic-weighted-BCMs/}.} of our simulations of our topic-weighted BCMs model with various initial opinion distributions (see \cref{sec:sim_details:init_opinion_dist}) and values of the BCM parameters (namely, the confidence bound $c$ and topic weight $\omega$).

\subsection{Topic-weighted HK model}\label{sec:results_hk}

\subsubsection{A complete graph}\label{sec:results:hk_complete}

We now discuss the simulations of our topic-weighted HK model on a 2000-node complete graph for uniform, wedge, and two 
truncated Gaussian initial opinion distributions.
As we discussed in \cref{sec:sim_details:complete}, for each initial opinion distribution, we simulate our topic-weighted HK model using 25 distinct sets of initial opinions and compute the averaged order parameters.

In \cref{fig:HK_heatmap_Q}, we show the final order parameters $Q(T_f)$ (see \eqref{eq:order_param}) in our topic-weighted HK model for various initial opinion distributions. 
When $Q(T_f) = 1$, nodes eventually reach a consensus state (i.e., their opinions converge to the same limit opinion). 
Smaller values of $Q(T_f)$ indicate more opinion fragmentation than larger values of $Q(T_f)$. 
For fixed $\omega$ and a fixed initial opinion distribution, we observe smaller values of $Q(T_f)$ (and correspondingly more opinion fragmentation) as we decrease $c$. 
For the uniform and Gaussian initial opinion distributions, for fixed $c$, we tend to observe less opinion fragmentation as we decrease $\omega$. This is especially noticeable near the transition between consensus and fragmentation.
By contrast, for the wedge distribution and fixed $c$, varying $\omega$ typically has little effect on $Q(T_f)$. An exception is that when $c = 0.225$ (near the transition between consensus and fragmentation), $Q(T_f)$ tends to increase as we decrease $\omega$. 

\begin{figure}[htb]
  \centering
  \includegraphics[width=0.9\textwidth]{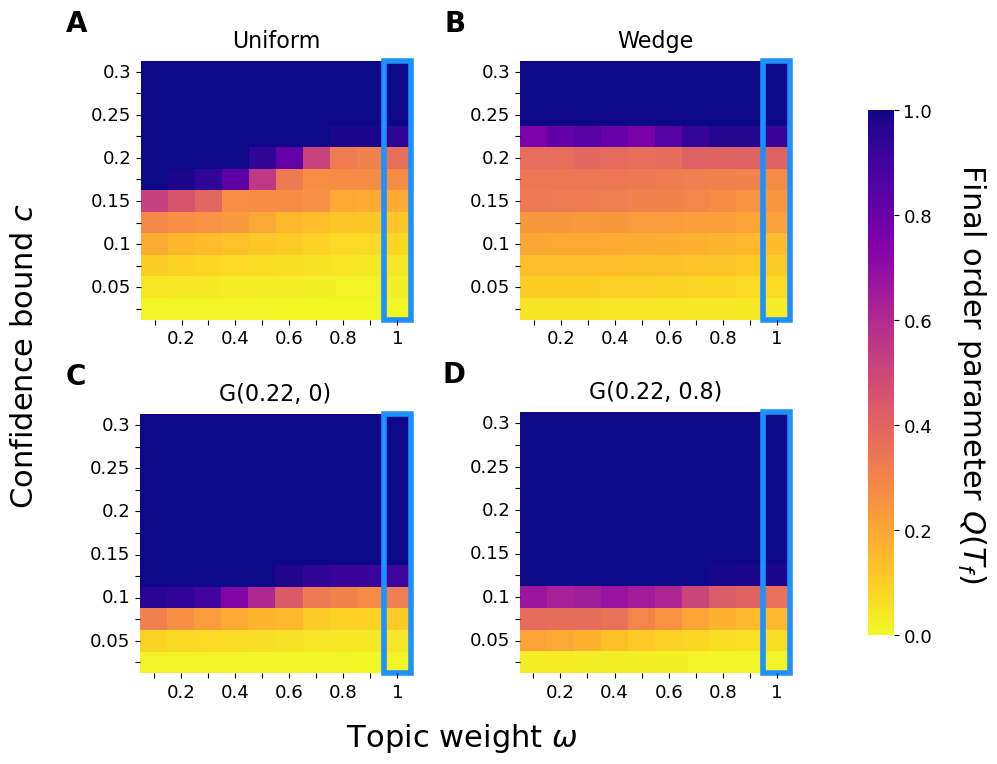}
  \caption{Final order parameters $Q(T_f)$ (see \eqref{eq:order_param}) in our topic-weighted HK model on a 2000-node complete graph with various initial opinion distributions and values of the BCM parameters $c$ and $\omega$. When $\omega = 1$, we have the baseline HK model, and in this figure and subsequent similar figures, we have indicated these simulations with a blue border. In each heatmap, each cell shows the mean value of 25 simulations.
  }
  \label{fig:HK_heatmap_Q} 
\end{figure}

In our simulations, the choice of initial opinion distribution significantly influences whether our topic-weighted HK model converges to a consensus state or a fragmented state.
For fixed $\omega$, the transition between consensus and fragmentation occurs at smaller values of $c$ for the two Gaussian initial opinion distributions than for the uniform and wedge initial opinion distributions. 
Intuitively, it makes sense that the two Gaussian distributions promote less fragmentation than the uniform and wedge distributions. This is because the initial opinions for the Gaussian distributions are more concentrated near the center (see \cref{fig:gaussian_dist}) and less concentrated near the boundary of the opinion space $[0,1]\times [0,1]$.
For the uniform and Gaussian $G(0.22, 0)$ distributions, as we decrease $\omega$, the value of $c$ at which the transition between consensus and fragmentation occurs also decreases. 
For the uniform distribution and $c = 0.2$, our topic-weighted HK model with $\omega \leq 0.4$ reaches consensus, while the baseline HK model has opinion fragmentation.

To investigate the impact of the distribution of initial opinions on the distribution of final opinions, we plot the trajectories of the opinions of nodes in our simulations. In our opinion space $[0,1] \times [0,1]$, for each node, we plot the path of its opinion; that is, we plot how its opinion changes from its initial opinion to its opinion at convergence time $T_f$. 
In \cref{fig:opinion_trajectory_HK_complete_uniform}, we show some plots of the opinion trajectories for simulations with uniform initial opinions and $c = 0.1$ for both our topic-weighted HK model with $\omega = 0.1$ and the baseline HK model.
In the simulation of the baseline HK model (see \cref{fig:opinion_trajectory_HK_complete_uniform}B), for each final opinion cluster, the initial opinions of the nodes in that opinion cluster appear to visually form a rectangle containing the corresponding the mean final opinion; the opinion space can be split into non-overlapping rectangular regions that cover the space and nodes with initial opinions in the same rectangular region are in the same final opinion cluster.  
By contrast, in the simulation of our topic-weighted HK model with $\omega = 0.1$ (see \cref{fig:opinion_trajectory_HK_complete_uniform}A), for each final opinion cluster, the initial opinions of nodes in that opinion cluster do not form rectangular shapes.
The simulation of our topic-weighted HK model with $\omega = 0.1$ has $Q(T_f) \approx 0.1956$ and there is less opinion fragmentation than the simulation of the baseline HK model, which has $Q(T_f) \approx 0.0665$.
Recall (see \cref{fig:region_of_absorption}) that in 2D, the baseline HK model has square regions of receptiveness and our topic-weighted HK model with $\omega \in (0, 1)$ has 8-sided regions of receptiveness. For fixed $c$, as we decrease $\omega$, the area of the regions of receptiveness increases. We hypothesize that the square shape of the regions of receptiveness for the baseline HK model yields the rectangular regions in \cref{fig:opinion_trajectory_HK_complete_uniform}B. 
The larger size and non-square shape of the regions of receptiveness for our topic-weighted HK model with $\omega = 0.1$ result in nodes with initial opinions in larger and more irregular regions of opinion space being pulled into the same final opinion cluster, yielding less opinion fragmentation than the baseline model.
We hypothesize that a similar effect applies to the truncated Gaussian distributions, causing smaller values of $\omega$ in our topic-weighted HK model to promote less opinion fragmentation. We show some plots of opinion trajectories for the truncated Gaussian distributions of initial opinions $G(0.22, 0)$ (see \cref{fig:opinion_trajectory_HK_complete_gaussian_corr0}) and $G(0.22, 0.8)$ (see \cref{fig:opinion_trajectory_HK_complete_gaussian_corr0.8})
in \cref{sec:appendix:HK_complete_gaussian}.
\begin{figure}[htb]
    \centering
    \includegraphics[width=0.75\textwidth]{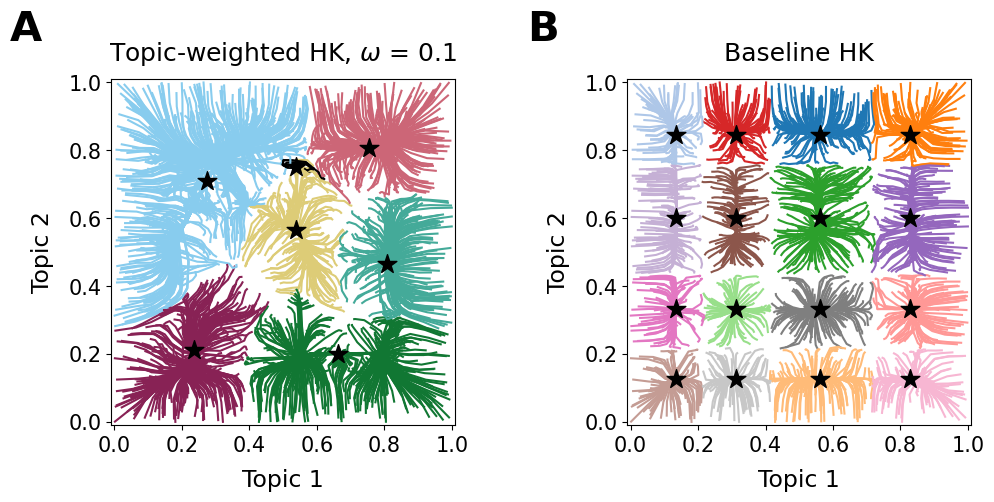}
    \caption{Opinion trajectory plots for simulations on a 2000-node complete graph of (A) our topic-weighted HK model with $\omega = 0.1$ and (B) the baseline HK model. Both simulations use $c = 0.1$ and the same set of initial opinions from a uniform distribution. 
    For this figure and subsequent figures of opinion trajectories, nodes in the same final opinion cluster have the same color opinion trajectory. For each final opinion cluster, a black star marks the mean of the final opinions of nodes in that final opinion cluster.
    }
    \label{fig:opinion_trajectory_HK_complete_uniform}
\end{figure}

In \cref{fig:opinion_trajectory_HK_complete_wedge}, we show the opinion trajectories for simulations with wedge initial opinions and $c = 0.225$ for both our topic-weighted HK model with $\omega = 0.1$ and the baseline HK model.
For the wedge initial opinion distribution and fixed $c$, varying $\omega$ appears to have little effect on the final order parameter.
This is likely due to the shape of the wedge distribution (see \eqref{eq:wedge_dist}). 
As we discussed previously, for uniform and Gaussian initial opinion distributions, fixing $c$ and decreasing $\omega$ increase the area of the regions of receptiveness and result in less opinion fragmentation. 
For small $c$, adjacent nodes that are near the ends of the two different ``legs'' of the wedge distribution (specifically, adjacent nodes $i$ and $j$ with $x_i^1(0) \leq 0.2$ and $x_j^1(0) \geq 0.8$) are unreceptive to each other for all values of $\omega$.
We hypothesize that for small $c$, such adjacent nodes are unable to compromise in our topic-weighted HK model for any value of $\omega$, and we do not observe less opinion fragmentation as we decrease $\omega$.
\begin{figure}[htb]
    \centering
    \includegraphics[width=0.75\columnwidth]{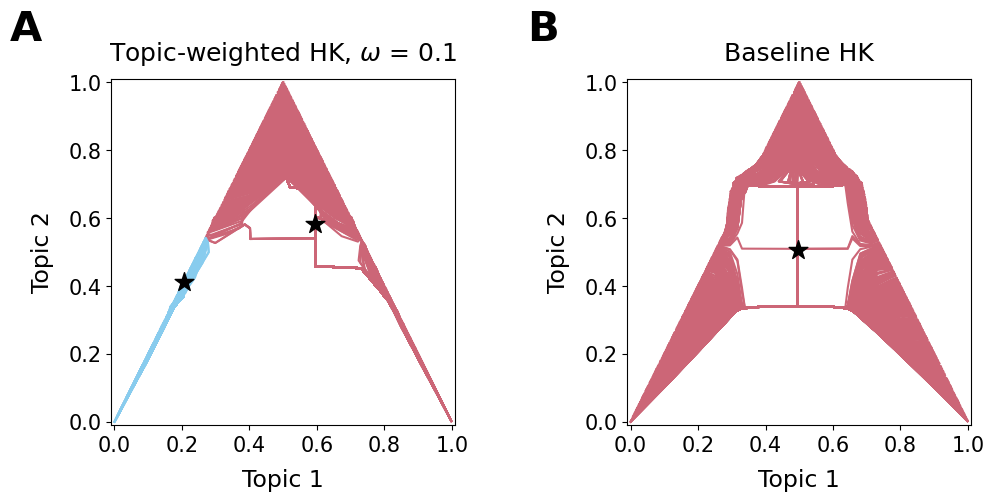}
    \caption{Opinion trajectory plots for simulations on a 2000-node complete graph of (A) our topic-weighted HK model with $\omega = 0.1$ and (B) the baseline HK model. Both simulations use $c = 0.225$ and the same set of initial opinions from a wedge distribution.
    }
    \label{fig:opinion_trajectory_HK_complete_wedge}
\end{figure}

For the wedge initial opinion distribution (see \cref{fig:HK_heatmap_Q}B), when $c = 0.225$, which is near the transition between opinion fragmentation and consensus, small values of $\omega$ yield more opinion fragmentation. This increase in fragmentation for small $\omega$ is the opposite of the trend that we observe for the uniform and Gaussian distributions.
When a simulation with wedge initial opinions reaches consensus, there is a single final opinion cluster with corresponding mean final opinion near $(0.5, 0.5)$ at the center of the opinion space (see \cref{fig:opinion_trajectory_HK_complete_wedge}B). For the baseline HK model with $c = 0.225$, most simulations reach consensus, but a few have two opinion clusters; when there are two opinion clusters, they both have corresponding mean final opinions in which either the topic 1 or topic 2 opinion is near 0.5 (that is, $x^1 \approx 0.5$ or $x^2 \approx 0.5$). 
For our topic-weighted HK model with $c = 0.225$ and small $\omega$, most simulations have two or three final opinion clusters and therefore there is more opinion fragmentation. When there are two final opinion clusters, as in \cref{fig:opinion_trajectory_HK_complete_wedge}A, neither corresponding final opinion has its topic 1 or topic 2 opinion near 0.5 (that is, neither $x^1 \approx 0.5$ nor $x^2 \approx 0.5$). We hypothesize that near the transition between consensus and fragmentation, for our topic-weighted HK model with small $\omega$, it is more difficult than in the baseline model for nodes with initial opinions near the end of one of the ``legs'' of the wedge distribution to reach a consensus opinion.

\subsubsection{Two-community SBM graphs}\label{sec:results:hk_sbm}

We now discuss the simulations of our topic-weighted HK model on two-community SBM random graphs.
As we discussed in \cref{sec:sim_details:sbm}, our SBM random graphs consist of two communities with 1000 nodes each. 
For our simulations, we generate five random SBM graphs. For each graph and each initial opinion distribution, we use 10 distinct sets of initial opinions in Monte Carlo simulations of our topic-weighted HK model. For each combination of random SBM graph, set of initial opinions, and set of BCM parameters, we simulate two scenarios: one with community-dependent assignment of initial opinions and one with random assignment of initial opinions.

For each of the uniform, wedge, and truncated Gaussian $G(0.22, 0)$ initial opinion distributions and fixed BCM parameters, when initial opinions are randomly assigned to nodes, the final order parameter $Q(T_f)$ (see \eqref{eq:order_param}) for the two-community SBM graphs is similar to $Q(T_f)$ for a complete graph. 
The minimum value of $c$ for our simulations to always reach consensus is larger when we assign initial opinions to nodes based on their community, compared to when we randomly assign initial opinions to nodes.
It appears that the two-community structure of our SBM graphs has little effect on the amount of opinion fragmentation when initial opinions are independent of community, but tends to increase opinion fragmentation when we have community-dependent initial opinions in which the communities are ``split'' on their initial opinions on one topic (specifically, nodes in one community have $x^1(0) \in [0, 0.5]$ and nodes in the other community have $x^1(0) \in [0.5, 1]$; see \cref{sec:sim_details:sbm}).
For all our simulations with community-dependent initial opinions, when consensus is not reached, nodes in the same final opinion cluster also belong to the same community.

In \cref{fig:HK_heatmap_2_community_uniform}, we show the final order parameters $Q(T_f)$ of our simulations of our topic-weighted HK model on two-community SBM graphs with uniform initial opinions.
For uniform initial opinions, our simulations on SBM graphs, both those with community-dependent initial opinions and those with randomly-assigned initial opinions, have the same general trends as our simulations on complete graph;
namely, we tend to observe more opinion fragmentation as we either (1) decrease $c$ for fixed $\omega$ or (2) increase $\omega$ for fixed $c$.
For fixed $\omega$, a larger value of $c$ is required for simulations to always reach consensus for the community-dependent initial opinions (see \cref{fig:HK_heatmap_2_community_uniform}A) than for randomly-assigned initial opinions (see \cref{fig:HK_heatmap_2_community_uniform}B). 

\begin{figure}[htb]
  \centering
  \includegraphics[width=0.8\textwidth]{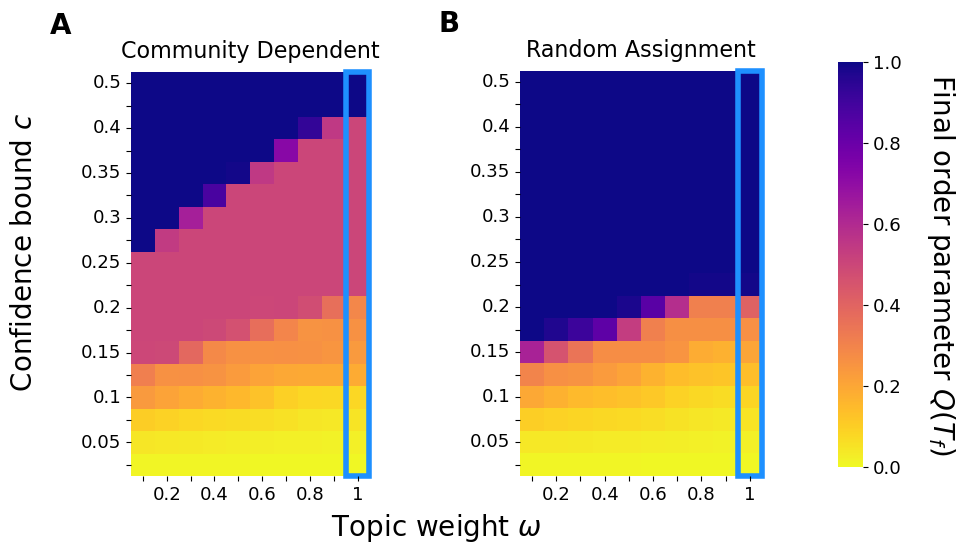}
  \caption{Final order parameters $Q(T_f)$ (see \eqref{eq:order_param}) in simulations of our topic-weighted HK model on two-community SBM graphs with uniform initial opinions for various values of the BCM parameters $c$ and $\omega$. We simulate our model with (A) community-dependent assignment of initial opinions and (B) random assignment of initial opinions.
  In each heatmap, each cell shows the mean value of 50 simulations.
  }
  \label{fig:HK_heatmap_2_community_uniform} 
\end{figure}

In \cref{fig:HK_SBM_opinion_trajectory_uniform}, we show the opinion trajectories for two simulations of our topic-weighted HK model with $\omega = 0.1$ and $c = 0.1$ and uniform initial opinions. 
In \cref{fig:HK_SBM_opinion_trajectory_uniform}A, each final opinion cluster is contained within one community, and each community can be split into three corresponding final opinion clusters. By contrast (see \cref{fig:HK_SBM_opinion_trajectory_uniform}B), when initial opinions are randomly assigned to nodes, final opinion clusters can consist of nodes from each community. 
For community-dependent initial opinions and intermediate values of $c$, we observe that $Q(T_f) = 0.5$ (see \cref{fig:HK_heatmap_2_community_uniform}). In this situation, there are two final opinion clusters; each final opinion cluster contains all the nodes from a single community. 

\begin{figure}[htb]
  \centering
  \includegraphics[width=0.75\textwidth]{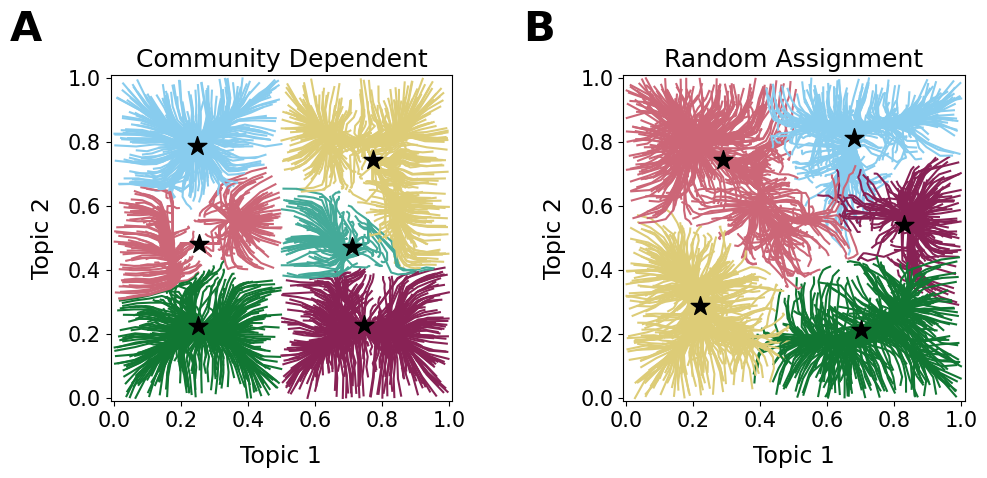}
  \caption{Opinion trajectory plots for simulations of our topic-weighted HK model with $\omega = 0.1$ and $c = 0.1$ on a two-community SBM graph. Both simulations use the same set of uniform initial opinions.
  One simulation (A) has initial opinions assigned to nodes based on their community and the other simulation (B) has initial opinions assigned to nodes randomly.
  }
  \label{fig:HK_SBM_opinion_trajectory_uniform} 
\end{figure}

Our topic-weighted HK simulations on SBM graphs with $G(0.22, 0)$ initial opinions, have similar trends as our simulations with uniform initial opinions.
In \cref{sec:appendix:HK_sbm_gaussian}, we show some plots and further discuss our results for simulations with $G(0.22, 0)$ initial opinions.

\begin{figure}[htbp!]
  \centering
  \includegraphics[width=0.8\textwidth]{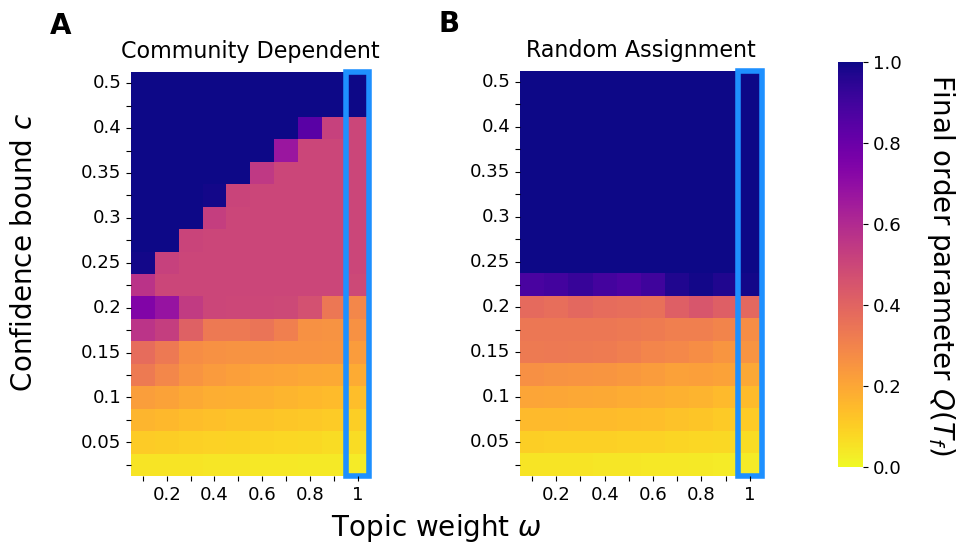}
  \caption{Final order parameters $Q(T_f)$ (see \eqref{eq:order_param}) in simulations of our topic-weighted HK model on two-community SBM graphs with wedge initial opinions for various values of the BCM parameters $c$ and $\omega$. We simulate our model with (A) community-dependent assignment of initial opinions and (B) random assignment of initial opinions.
  }
  \label{fig:HK_heatmap_2_community_wedge} 
\end{figure}

In \cref{fig:HK_heatmap_2_community_wedge}, we show the final order parameters $Q(T_f)$ of our simulations of our topic-weighted HK model on two-community SBM graphs with wedge initial opinions. Unlike for the community-dependent uniform and $G(0.22, 0)$ initial opinions, for the community-dependent wedge initial opinions, for small $\omega$, as we increase $c$, the final order parameter $Q(T_f)$ does not always increase or stay the same. For $\omega \in \{ 0.1, 0.2, 0.3\}$, we observe that $Q(T_f)$ when $c = 0.2$ is smaller than $Q(T_f)$ when either $c = 0.175$ or $c = 0.225$. 
In \cref{fig:HK_SBM_opinion_trajectory_wedge}, we show two opinion trajectory plots for simulations of our topic-weighted HK model with $\omega = 0.1$. Both simulations use the same community-dependent wedge initial opinions. In both simulations, the mean opinions associated with the two final opinion clusters are just outside of each other's regions of receptiveness. One can imagine that small perturbations of the initial opinions could result in a simulation that reaches consensus. 
In the two simulations shown, the opinions of the two final opinion clusters are closer to each other's region of absorption for the simulation with $c = 0.2$ (see \cref{fig:HK_SBM_opinion_trajectory_wedge}A) than for the simulation with $c = 0.225$ (see \cref{fig:HK_SBM_opinion_trajectory_wedge}B). Therefore, it makes sense that when $\omega \leq 0.3$, we observe that more simulations of our topic-weighted HK model with $c = 0.2$ reach consensus than simulations of our model with $c = 0.225$. We hypothesize that we only observe this behavior for the wedge initial opinions and not the uniform or Gaussian initial opinions because the shape of the wedge distribution causes the mean opinions of the final opinion clusters to be close to each other's regions of absorption when there are two final opinion clusters.

\begin{figure}[htbp]
  \centering
  \includegraphics[width=0.75\textwidth]{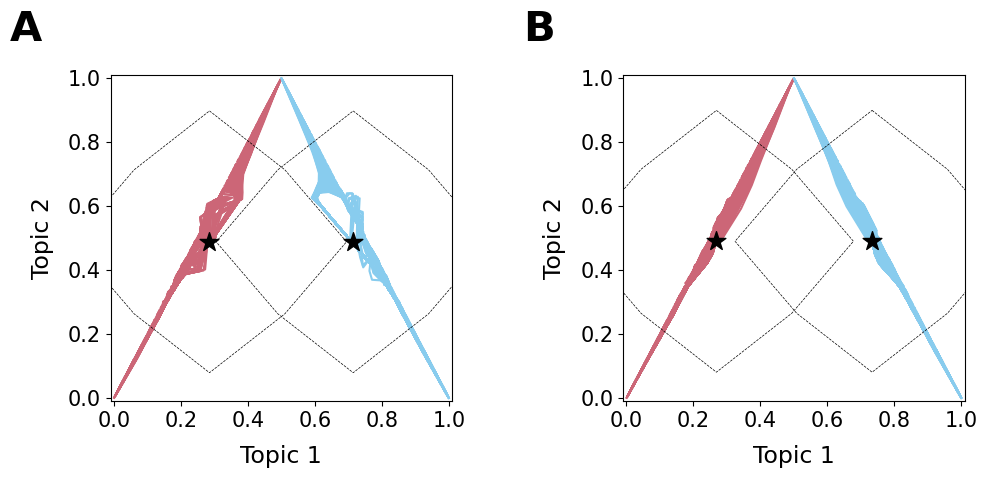}
  \caption{Opinion trajectory plots for simulations of our topic-weighted HK model on a two-community SBM graph with $\omega = 0.1$ and (A) $c = 0.2$ and (B) $c = 0.225$. Both simulations use the same set of wedge initial opinions that are assigned to nodes based on their community. The polygons indicated with the dotted lines show the region of receptiveness for the mean final opinion of each final opinion cluster.
  }
  \label{fig:HK_SBM_opinion_trajectory_wedge} 
\end{figure}

\subsection{Topic-weighted DW model}\label{sec:results_dw}

We now discuss the results of our simulations of our topic-weighted DW model on a 500-node complete graph for uniform, wedge, and two truncated Gaussian initial opinion distributions.
As we discussed in \cref{sec:sim_details:complete}, for each initial opinion distribution, we use 25 distinct sets of initial opinions in Monte Carlo simulations of our topic-weighted DW model.

In \cref{fig:DW_heatmap_Q}, we show the final order parameters $Q(T_f)$ (see \eqref{eq:order_param}) in our simulations of our topic-weighted DW model for various initial opinion distributions.
Overall, we observe similar trends for $Q(T_f)$ for our topic-weighted DW model as we did for our topic-weighted HK model.
In this section, we discuss our observations of $Q(T_f)$ for our topic-weighted DW model. 
In \cref{sec:appendix:DW_results}, we show and discuss some plots of opinion trajectories.

For the uniform and Gaussian initial opinion distributions, we observe similar trends for our topic-weighted DW model as we did for our topic-weighted HK model.
One trend is that, for fixed confidence bound $c$ and a fixed initial opinion distribution, we tend to observe less opinion fragmentation as we decrease the topic weight $\omega$. 
Another trend is that, as we decrease $\omega$, we observe a decrease in the value of $c$ at which the transition between consensus and fragmentation occurs.
For our topic-weighted DW model with fixed $\omega$ and fixed $c$, the Gaussian initial opinion distributions tend to have less opinion fragmentation than the uniform initial opinion distribution. However, for fixed $\omega$, the value of $c$ above which all simulations reach a consensus (i.e., $Q(T_f) = 1$ for all simulations) appears similar for the uniform and Gaussian initial opinion distributions. 
For our simulations of our topic-weighted DW model, the transition between consensus and fragmentation is not as clear as for our simulations of our topic-weighted HK model (see \cref{fig:HK_heatmap_Q}). We hypothesize that finite-size effects from the 500-node complete graph and randomness from selecting pairs of nodes to interact for our topic-weighted DW simulations may contribute to this observation.

As was the case for our topic-weighted HK model, for our topic-weighted DW model, $\omega$ appears to have little effect on the opinion fragmentation for the wedge initial opinion distribution. Specifically, for the wedge distribution with fixed $c$, we typically observe little effect on the $Q(T_f)$ when we vary $\omega$. 
An exception is that for $c = 0.25$, there appears to be noticeably less opinion fragmentation for $0.4 \leq \omega \leq 0.6$ than that for other values of $\omega$. 
For the wedge distribution, when $c = 0.25$, simulations typically result in 2--5 final opinion clusters. It appears that compared to other values of $\omega$, for $0.4 \leq \omega \leq 0.6$, there are more simulations in which there are at least 99\% of nodes in the largest final opinion, so nodes either reach a consensus or almost reach a consensus. 
This results in a smaller mean $Q(T_f)$ for $0.4 \leq \omega \leq 0.6$ than for other values of $\omega$. 
As we discuss further in \cref{sec:appendix:DW_results}, the shape of the wedge distribution and randomness in selecting pairs of nodes to interact and a topic to interact on in the topic-weighted DW model may contribute to this observation.

\begin{figure}[htb]
  \centering
  \includegraphics[width=0.95\textwidth]{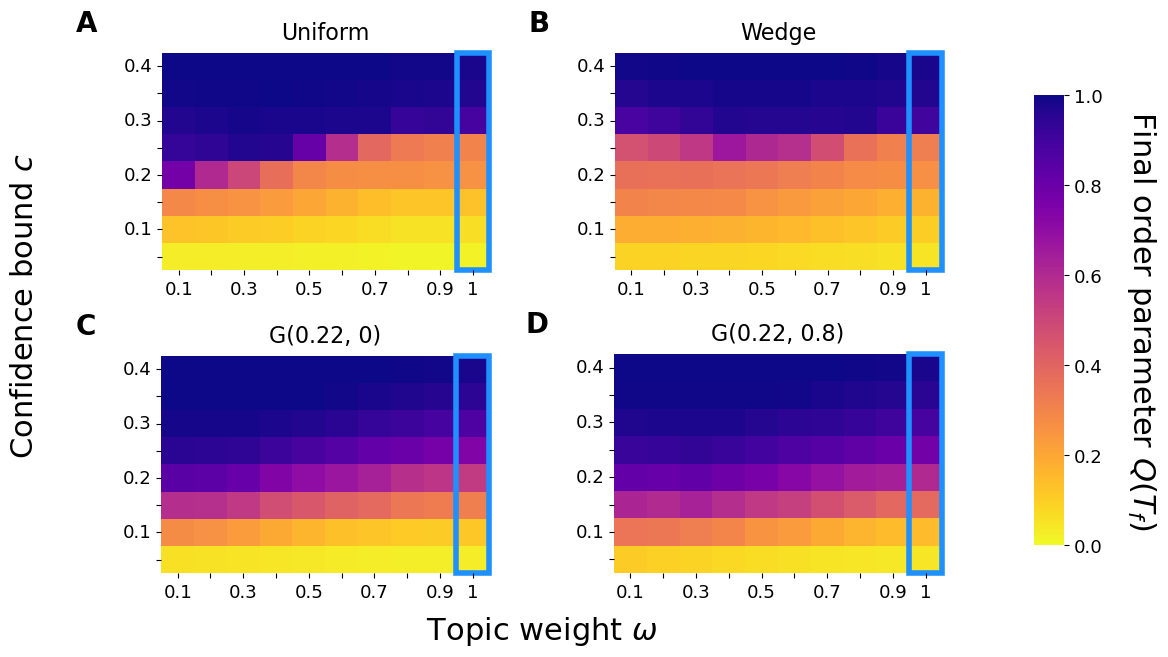}
  \caption{Final order parameters $Q(T_f)$ (see \eqref{eq:order_param}) in simulations of our topic-weighted DW model on a 500-node complete graph with various initial opinion distributions and various values of the BCM parameters $c$ and $\omega$. In each heatmap, for $\omega \leq 0.9$, each cell shows the mean value of 25 simulations, and for $\omega = 1$, each cell shows the mean value of 225 simulations.
  }
  \label{fig:DW_heatmap_Q} 
\end{figure}

\section{Conclusions and discussion}
\label{sec:conclusions}
In this paper, we introduced two bounded-confidence models (BCMs) of multidimensional opinions: a synchronous model extending the Hegselmann--Krause (HK) model and an asynchronous model extending the Deffuant--Weisbuch (DW) model.
We interpret the multidimensional opinion vectors in our models as opinions on multiple related topics, and we defined topic-weighted discordance functions to quantify the distance between opinion vectors. 
These topic-dependent discordance functions depend not only on the opinions for a specific topic $k$ but also incorporate the influence of other topics. 
For each topic $k$, the discordance function $d_k$ calculates a weighted average of the opinion differences across all topics, with a topic weight parameter $\omega$ determining how much more weight is assigned to the opinion difference in topic $k$ than the opinion difference in other topics.

We studied our topic-weighted BCMs by examining the convergence and steady-state properties of the dynamics. 
We proved that the limit of the opinion vectors in our BCMs exists as time goes to infinity, and the limit opinions of two adjacent nodes are either the same or separated in the sense that their topic-weighted discordances are greater than or equal to the confidence bound.
To characterize the separation of distinct limit opinions, we analyzed regions of receptiveness to rigorously quantify the neighborhood in the opinion space of a steady-state opinion cluster where no other opinion cluster exists. 
We also defined time-dependent effective graphs and showed their convergence to the effective graphs associated with the steady-state opinions.
We used these effective graphs to establish the stopping criterion for our numerical simulations.

In addition to investigating the theoretical aspects of our topic-weighted BCMs, we numerically simulated our topic-weighted BCMs with initial opinions sampled from various distributions, including both independent and interdependent initial opinions distributions on different topics.
We demonstrated that the choice of initial opinion distribution significantly influences the degree of opinion fragmentation (quantified by the order parameter $Q$) and the BCM parameters --- specifically, the confidence bound $c$ and topic weight $\omega$ --- at which the transition between consensus and fragmentation occurs.
For both our topic-weighted BCMs, for uniform and truncated Gaussian initial opinion distributions with a fixed $c$, decreasing $\omega$ tends to reduce opinion fragmentation, likely due to the expansion of the regions of receptiveness as $\omega$ decreases. 
However, for the wedge distribution, varying $\omega$ for fixed $c$ has little impact on opinion fragmentation, indicating that the effect of $\omega$ depends on the initial opinion distribution.
In the case of our topic-weighted HK model applied to two-community SBM graphs, the graph structure shows minimal influence on opinion fragmentation when initial opinions are independent of the community. However, fragmentation increases when initial opinions are correlated with community membership.
Our simulations highlight the substantial impact of initial opinion distributions on the behavior of our topic-weighted BCMs and baseline BCMs. 
Given that most research on BCMs only considers uniform initial opinions, we encourage researchers studying BCMs and other models of opinion dynamics with continuous-valued opinions to consider a wider range of initial opinion distributions.

In our numerical simulations of our topic-weighted BCMs, we observed that the shape of the region of receptiveness, which is determined by $\omega$ in our models, can affect the amount of opinion fragmentation and the qualitative behavior of the opinion trajectories.
Most prior studies on BCMs with multidimensional opinions consider Euclidean distance \cite{lorenz2006_multi, li2017_ja_vector,fortunato2005_vector, bhattacharyya2013, etesami2013_hk_convergence, hegselmann2019_hk_d_dimension_convergence, brooks2020}. 
The Euclidean norm yields spherical regions of receptiveness, which differ from the polytope-shaped regions of receptiveness in our models.
Multidimensional opinion models have a variety of choices for modeling opinion distance, which yield different regions of receptiveness. As an area of future work, it is of interest to compare the opinion dynamics of BCMs with different opinion distances and regions of receptiveness.

Our agent-based approach in modeling multidimensional opinion dynamics can also be translated into density-based opinion models characterized by an opinion density function, which describes the distribution of opinions across an entire population. 
(See \cite{ben2003bifurcations,chu2023density} for previous work on density-based opinion models.)
It is also beneficial to incorporate heterogeneous features into the topic-weighted BCMs. For instance, one could introduce topic heterogeneity into the discordance function, allowing certain topics to carry more influence than others, thereby capturing the unequal impact of topics in real-world scenarios. Additionally, one can extend our topic-weighted BCMs to multiplex networks, where individuals exchange opinions on different topics through social interactions across distinct layers of the network. 

\appendix

\section{Additional results of our topic-weighted HK simulations}

\subsection{A complete graph with truncated Gaussian initial opinions}
\label{sec:appendix:HK_complete_gaussian}

In this section, we show some plots of the opinion trajectories of 
our topic-weighted HK model on a 2000-node complete graph with initial opinions drawn from truncated Gaussian $G(0.22, 0)$ and $G(0.22, 0.8)$ distributions.

In \cref{fig:opinion_trajectory_HK_complete_gaussian_corr0}, we show some plots of the opinion trajectories for simulations with Gaussian $G(0.22, 0)$ initial opinions and confidence bound $c = 0.075$ for both our topic-weighted HK model with topic weight $\omega = 0.1$ and the baseline HK model. 
As was the case for uniform initial opinions (see \cref{fig:opinion_trajectory_HK_complete_uniform}), for the Gaussian initial opinions, simulations of the baseline HK model (see \cref{fig:opinion_trajectory_HK_complete_gaussian_corr0}B) tend to result in the initial opinions of the nodes in each final opinion cluster appearing to form a rectangular region containing the corresponding mean final opinion. 
Simulations of our topic-weighted HK model with $\omega = 0.1$ (see \cref{fig:opinion_trajectory_HK_complete_gaussian_corr0}A) tend to result in the initial opinions of nodes in each final opinion cluster forming non-rectangular shapes.
We observe that nodes with initial opinions that are away from the distribution mean at $(0.5, 0.5)$ tend to have opinions that move towards $(0.5, 0.5)$ over time. 
In \cref{fig:opinion_trajectory_HK_complete_gaussian_corr0}, for final opinion clusters with corresponding mean final opinion not near $(0.5, 0.5)$, the mean final opinion is near the boundary of the region formed by the corresponding initial opinions. 
Intuitively, this makes sense because the Gaussian $G(0.22, 0)$ distribution has a larger probability of initial opinions near the distribution mean at $(0.5, 0.5)$. If there are more nodes with initial opinions near the distribution mean, then these nodes have more weight in the HK update rule (see \eqref{eq: HK_update_rule}) and pull the opinions of other nodes towards the distribution mean.

\begin{figure}[tbh]
    \centering
    \includegraphics[width=0.75\textwidth]{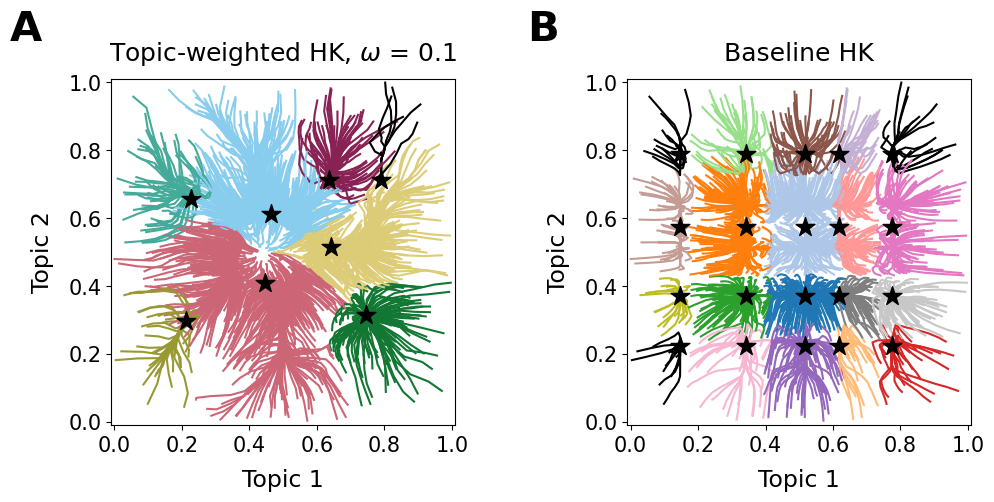}
    \caption{Opinion trajectory plots for simulations on a 2000-node complete graph of (A) our topic-weighted HK model with $\omega = 0.1$ and (B) the baseline HK model. Both simulations use $c = 0.075$ and the same set of initial opinions from a truncated Gaussian $G(0.22, 0)$ distribution.
    }
    \label{fig:opinion_trajectory_HK_complete_gaussian_corr0}
\end{figure}

In \cref{fig:opinion_trajectory_HK_complete_gaussian_corr0.8}, we show some plots of the opinion trajectories for simulations with Gaussian $G(0.22, 0.8)$ initial opinions and $c = 0.075$ for both our topic-weighted HK model with $\omega = 0.1$ and the baseline HK model.
As was the case for uniform and $G(0.22, 0)$
initial opinions, we tend to observe rectangular and non-rectangular regions formed by the final opinion clusters in the opinion-trajectory plots for the baseline and topic-weighted HK models, respectively.
In the simulation of our topic-weighted HK model with $\omega = 0.1$ (see \cref{fig:opinion_trajectory_HK_complete_gaussian_corr0}A), there are two nodes that are each in a final opinions cluster consisting of just themselves; they have final opinions near $(0.33, 0.52)$ and $(0.91, 0.37)$. These two nodes have initial opinions away from the skew axis (i.e., the line $x^1 = x^2$) of the $G(0.22, 0.8)$ distribution and quickly become unreceptive to all other nodes. 
There are four other final opinion clusters that have the 1,998 other nodes in the graph. These four final opinion clusters have mean final opinions near the skew axis of the $G(0.22, 0.8)$ distribution. Which final opinion cluster a node belongs to appears to relate to which point on the skew axis their initial opinion is closest to. 

\begin{figure}[htbp!]
    \centering
    \includegraphics[width=0.75\textwidth]{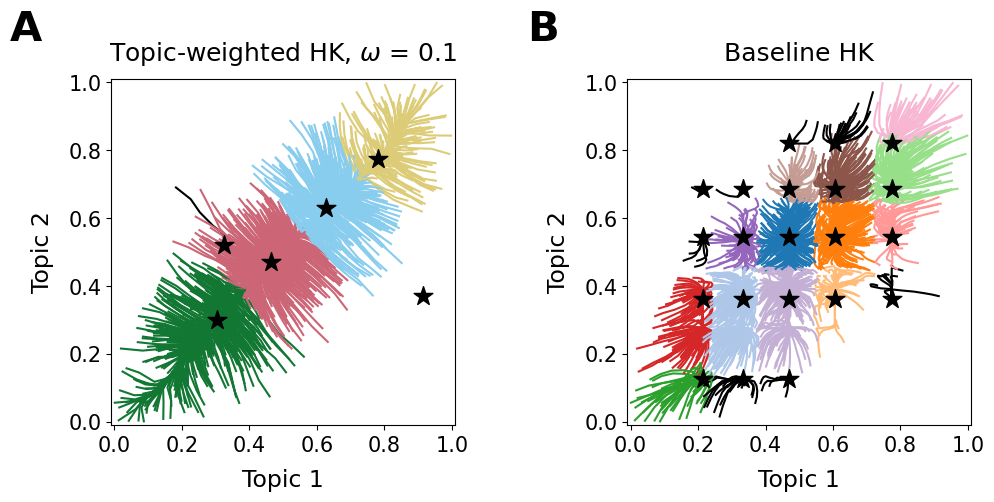}
    \caption{Opinion trajectory plots for simulations on a 2000-node complete graph of (A) our topic-weighted HK model with $\omega = 0.1$ and (B) the baseline HK model. Both simulations use $c = 0.075$ and the same set of initial opinions from a truncated Gaussian $G(0.22, 0.8)$ distribution.
    }
    \label{fig:opinion_trajectory_HK_complete_gaussian_corr0.8}
\end{figure}

\subsection{SBM graphs with Gaussian initial opinions}\label{sec:appendix:HK_sbm_gaussian}

In \cref{fig:HK_heatmap_2_community_gaussian}, we show the final order parameters $Q(T_f)$ (see \eqref{eq:order_param}) of our simulations of our topic-weighted HK model on two-community SBM graphs with truncated Gaussian $G(0.22, 0)$ initial opinions for various values of BCM parameters (namely, the confidence bound $c$ and topic weight $\omega$). 
We observe the same trends as our two-community SBM simulations with uniform initial opinions (see \cref{sec:results:hk_sbm}).
In particular, for both the community-dependent and randomly-assigned initial opinions, we tend to observe larger $Q(T_f)$ (and correspondingly less opinion fragmentation) as we either (1) increase $c$ for fixed $\omega$; or (2) decrease $\omega$ for fixed $c$.
Additionally, for fixed $\omega$, the minimum value of $c$ for our simulations to always reach consensus is larger for the community-dependent initial opinions (see \cref{fig:HK_heatmap_2_community_gaussian}A) compared to the randomly-assigned initial opinions \cref{fig:HK_heatmap_2_community_gaussian}B).

\begin{figure}[htbp!]
  \centering
  \includegraphics[width=0.8\textwidth]{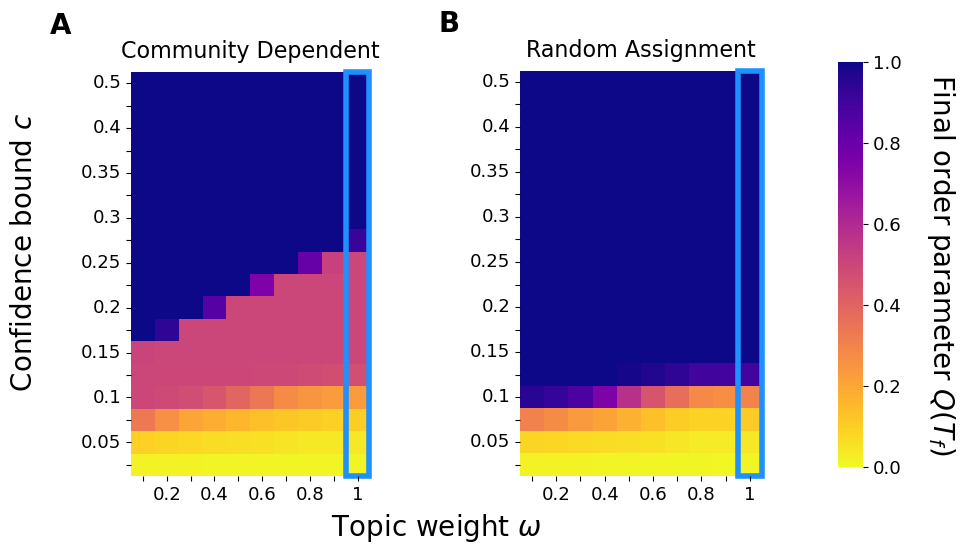}
  \caption{Final order parameters $Q(T_f)$ (see \eqref{eq:order_param}) in simulations of our topic-weighted HK model on two-community SBM graphs with $G(0.22, 0)$ initial opinions for various values of the BCM parameters $c$ and $\omega$. We simulate our model with (A) community-dependent assignment of initial opinions and (B) random assignment of initial opinions.
  }
  \label{fig:HK_heatmap_2_community_gaussian} 
\end{figure}

\section{Additional results of our topic-weighted DW simulations}\label{sec:appendix:DW_results}

In this section, we show and discuss some plots of opinion trajectories in our topic-weighted DW model on a 500-node complete graph.

In \cref{fig:DW_opinion_trajectory_uniform}, we show some plots of the opinion trajectories for simulations with uniform initial opinions and $c = 0.1$ for both our topic-weighted DW model with $\omega = 0.1$ and the baseline DW model.
Unlike in the plots of opinion trajectories for our topic-weighted HK model (see \cref{fig:opinion_trajectory_HK_complete_uniform} for example), the trajectory of a node's opinion with time does not visually appear smooth. 
This is because in the DW update rule (see \eqref{eq: DW_update_rule}), at each time, one pair of nodes is selected for interaction on one topic.
As a result, each opinion change affects only the opinions of two interacting nodes on one topic.
Consequently, the opinion trajectories consist of a series of strictly vertical or strictly horizontal line segments that correspond to each opinion change.
For our topic-weighted HK model on a complete graph with uniform initial opinions (see \cref{fig:opinion_trajectory_HK_complete_uniform}), we observe that the opinion space can be divided up into regions such that nodes with initial opinions in the same region would be in the same final opinion cluster. We do not observe this for our topic-weighted DW model.
In \cref{fig:DW_opinion_trajectory_uniform}, there are many instances where nodes with very close initial opinions are in different final opinion clusters. Furthermore, there are many instances where the opinion trajectories of nodes in different final opinion clusters cross each other. 
The final opinion cluster to which a node belongs is influenced by the randomness in selecting a topic and a pair of adjacent nodes in each interaction.

\begin{figure}[htbp]
  \centering
  \includegraphics[width=0.75\textwidth]{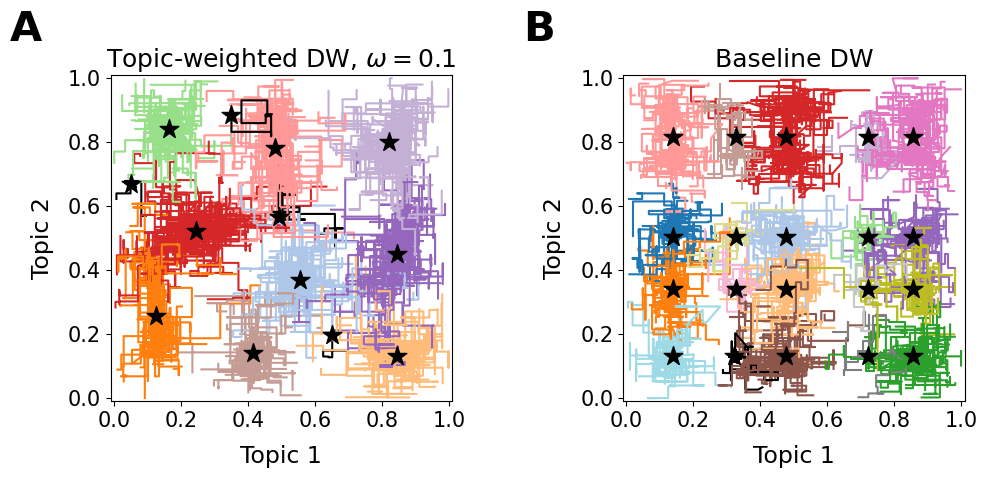}
  \caption{Opinion trajectory plots for simulations on a 500-node complete graph of (A) our topic-weighted DW model with $\omega = 0.1$ and (B) the baseline DW model. Both simulations use $c = 0.1$, the same set of initial opinions from the uniform initial opinion distribution, and the same sequence of pairs of nodes selected for interaction. 
  }
  \label{fig:DW_opinion_trajectory_uniform} 
\end{figure}

In \cref{fig:DW_opinion_trajectory_wedge}, we show some plots of the opinion trajectories for simulations of our topic-weighted DW model with $\omega = 0.4$ and $c = 0.25$ with wedge initial opinions. Each panel shows a simulation with a different set of initial opinions, and we show examples of simulations that have 1--3 final opinion clusters. 
In our topic-weighted DW model, there is randomness in which pair of nodes and which topic is selected at each time for the update rule \eqref{eq: DW_update_rule}.
Consequently, nodes in our topic-weighted DW model can have opinion trajectories that traverse more of the opinion space than nodes in our topic-weighted HK model (see \cref{fig:opinion_trajectory_HK_complete_wedge}).
Our simulations of our topic-weighted DW model with wedge initial opinions typically have 2--5 final opinion clusters.
Randomness in the selection of nodes and topics contributes to how many final opinion clusters are in a simulation. 
As we discussed in \cref{sec:results_dw}, for $c = 0.25$, there appears to be noticeably less opinion fragmentation for $0.4 \leq \omega \leq 0.6$ than for other values of $\omega$. It may be that for $c = 0.25$, for the wedge initial opinion distribution, the shape and size of the regions of receptiveness when $0.4 \leq \omega \leq 0.6$ make it more likely to have fewer final opinion clusters and smaller final order parameter than other values of $\omega$.

\begin{figure}[htbp]
  \centering
  \includegraphics[width=0.98\textwidth]{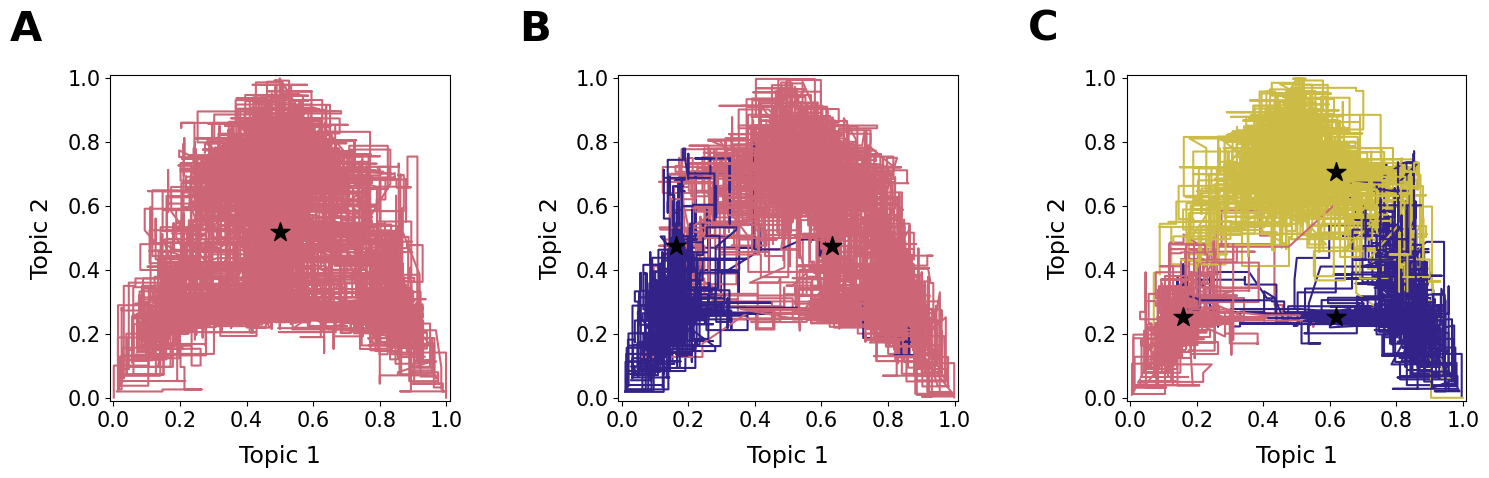}
  \caption{Opinion trajectory plots for simulations of our topic-weighted DW model on a 500-node complete graph with $\omega = 0.4$ and $c = 0.25$. Each panel shows a simulation with a different set of initial opinions from the wedge initial opinion distribution. 
  }
  \label{fig:DW_opinion_trajectory_wedge} 
\end{figure}

\section*{Acknowledgments}
We thank Mason A. Porter and Dominic Yang for helpful discussions. We also thank Deanna Needell and Jacob Foster for helpful comments. Additionally, we thank our anonymous referee for their helpful comments.

\bibliographystyle{siamplain}
\bibliography{references}

\end{document}